\documentclass[11pt]{article}

\usepackage{times}
\usepackage[paperwidth=199.8mm,
paperheight=297mm,centering,hmargin=25mm,vmargin=25mm]{geometry}
\usepackage{authblk} 
\usepackage[bottom]{footmisc} 

\usepackage[titletoc,title]{appendix} 
\usepackage{changepage}

\usepackage[activate={true,nocompatibility},final,tracking=true,kerning=true,spacing=true,factor=1100,stretch=10,shrink=10]{microtype}

\usepackage{hyperref}
\hypersetup{colorlinks=true,citecolor=myblue,linkcolor=myblue,
filecolor=myblue,urlcolor=myblue,breaklinks=mytrue}
\usepackage{url}

\usepackage[dvipsnames]{xcolor}
\usepackage{color}
\usepackage{framed}

\definecolor{shadecolor}{rgb}{0.9,0.9,0.9}
\definecolor{mylightgray}{RGB}{150,150,150}

\definecolor{myblue}{RGB}{0, 68, 116}
\definecolor{mycyan}{RGB}{0, 97, 91}
\definecolor{mygreen}{RGB}{2, 102, 1}
\definecolor{myorange}{RGB}{240, 102, 0}
\definecolor{myred}{RGB}{172, 23, 0}
\definecolor{mymagenta}{RGB}{140,16,73}

\usepackage{mathtools}
\usepackage{amsmath}
\usepackage{amssymb}
\usepackage[shortlabels]{enumitem}
\usepackage{graphicx,epic,eepic,epsfig,latexsym,verbatim}
\usepackage{dsfont}
\usepackage{mathrsfs}

\usepackage{subcaption}
\usepackage{caption}
\usepackage{float}
\usepackage{tikz}
\usepackage{tikz-cd}
\usepackage{relsize}
\usepackage{pgfplots}

\usepackage{amsthm}
\usepackage{tcolorbox}
\tcbuselibrary{skins}
\tcbuselibrary{breakable}

\makeatletter
\newcommand{\DefineColoredTheoremStyle}[2]{%
  \newtheoremstyle{#1}
    {3pt}{3pt}
    {\itshape}
    {}
    {\bfseries\color{#2}}
    {\;}
    { }
    {\thmname{##1}\thmnumber{ ##2}\textnormal{\thmnote{ (##3)}}}
}
\makeatother

\DefineColoredTheoremStyle{lemmstyle}{black}
\DefineColoredTheoremStyle{propstyle}{black}
\DefineColoredTheoremStyle{thmstyle}{black}
\DefineColoredTheoremStyle{corostyle}{black}
\DefineColoredTheoremStyle{conjstyle}{black}
\DefineColoredTheoremStyle{problstyle}{black}
\DefineColoredTheoremStyle{plainstyle}{black}

\theoremstyle{thmstyle}\newtheorem{theorem}{Theorem}
\theoremstyle{lemmstyle}\newtheorem{lemma}[theorem]{Lemma}
\theoremstyle{propstyle}\newtheorem{prop}[theorem]{Proposition}
\theoremstyle{corostyle}\newtheorem{corollary}[theorem]{Corollary}
\theoremstyle{conjstyle}\newtheorem{conjecture}[theorem]{Conjecture}
\theoremstyle{problstyle}\newtheorem{problem}[theorem]{Problem}
\theoremstyle{plainstyle}\newtheorem{definition}[theorem]{Definition}
\theoremstyle{plainstyle}\newtheorem{assumption}[theorem]{Assumption}
\theoremstyle{plainstyle}\newtheorem{example}[theorem]{Example}
\theoremstyle{plainstyle}\newtheorem{remark}[theorem]{Remark}

\tcbset{
    commonstyle/.style n args={1}{
        oversize=0.5mm,
        colback={#1!8},
        colframe={#1!10},
        boxrule=1pt,
        boxsep=2mm,
        left=1mm,
        right=1mm,
        top=2mm,
        bottom=2mm,
        breakable
    }
}

\newenvironment{boxdefinition}{%
    \begin{tcolorbox}[commonstyle={black}]%
        \begin{definition}%
}{%
        \end{definition}%
    \end{tcolorbox}%
}

\newenvironment{boxassumption}{%
    \begin{tcolorbox}[commonstyle={black}]%
        \begin{assumption}%
}{%
    \end{assumption}%
  \end{tcolorbox}%
}

\newenvironment{boxlemma}{%
  \begin{tcolorbox}[commonstyle={myblue}]%
    \begin{lemma}%
}{%
    \end{lemma}%
  \end{tcolorbox}%
}

\newenvironment{boxproposition}{%
  \begin{tcolorbox}[commonstyle={myorange}]%
    \begin{prop}%
}{%
    \end{prop}%
  \end{tcolorbox}%
}

\newenvironment{boxtheorem}{%
  \begin{tcolorbox}[commonstyle={mymagenta}]%
    \begin{theorem}%
}{%
    \end{theorem}%
  \end{tcolorbox}%
}

\newenvironment{boxcorollary}{%
  \begin{tcolorbox}[commonstyle={myorange}]%
    \begin{corollary}%
}{%
    \end{corollary}%
  \end{tcolorbox}%
}

\newenvironment{boxremark}{%
  \begin{tcolorbox}[commonstyle={mylightgray}]%
    \begin{remark}%
}{%
    \end{remark}%
  \end{tcolorbox}%
}

\usepackage{colortbl}
\usepackage{pifont} 
\usepackage{booktabs}
\usepackage{makecell} 
\usepackage{diagbox}
\usepackage{multirow}

\newcommand{\nc}{\newcommand}
\nc{\rnc}{\renewcommand}

\nc{\<}{\langle}
\rnc{\>}{\rangle}
\nc{\bra}[1]{\langle#1|}
\nc{\ket}[1]{|#1\rangle}
\nc{\ketbra}[2]{|#1\rangle\!\langle#2|}
\nc{\braket}[2]{\langle#1|#2\rangle}
\nc{\braandket}[3]{\langle #1|#2|#3\rangle}
\nc{\proj}[1]{| #1\rangle\!\langle #1 |}
\nc{\avg}[1]{\langle#1\rangle}

\nc{\rank}{\operatorname{Rank}}
\nc{\id}{{\operatorname{id}}}
\nc{\supp}{{\operatorname{supp}}}
\nc{\smfrac}[2]{\mbox{$\frac{#1}{#2}$}}
\nc{\tr}{\operatorname{Tr}}
\nc{\ox}{\otimes}
\nc{\floor}[1]{\lfloor #1 \rfloor}
\nc{\trans}{\mathsf T}
\nc{\img}{\mathbf{i}}
\nc{\Sym}{\operatorname{Sym}}

\def\ve{\varepsilon}

\nc{\cA}{{\cal A}}
\nc{\cB}{{\cal B}}
\nc{\cC}{{\cal C}}
\nc{\cD}{{\cal D}}
\nc{\cE}{{\cal E}}
\nc{\cF}{{\cal F}}
\nc{\cG}{{\cal G}}
\nc{\cH}{{\cal H}}
\nc{\cI}{{\cal I}}
\nc{\cJ}{{\cal J}}
\nc{\cK}{{\cal K}}
\nc{\cL}{{\cal L}}
\nc{\cM}{{\cal M}}
\nc{\cN}{{\cal N}}
\nc{\cO}{{\cal O}}
\nc{\cP}{{\cal P}}
\nc{\cQ}{{\cal Q}}
\nc{\cR}{{\cal R}}
\nc{\cS}{{\cal S}}
\nc{\cT}{{\cal T}}
\nc{\cV}{{\cal V}}
\nc{\cU}{{\cal U}}
\nc{\cX}{{\cal X}}
\nc{\cY}{{\cal Y}}
\nc{\cZ}{{\cal Z}}
\nc{\cW}{{\cal W}}

\nc{\RR}{{{\mathbb R}}}
\nc{\CC}{{{\mathbb C}}}
\nc{\FF}{{{\mathbb F}}}
\nc{\NN}{{{\mathbb N}}}
\nc{\ZZ}{{{\mathbb Z}}}
\nc{\QQ}{{{\mathbb Q}}}
\nc{\UU}{{{\mathbb U}}}
\nc{\EE}{{{\mathbb E}}}

\nc{\bH}{{\mathfrak{H}}}

\nc{\sK}{{{\mathscr{K}}}}
\nc{\sS}{{{\mathscr{S}}}}
\nc{\sT}{{{\mathscr{T}}}}
\nc{\sA}{{{\mathscr{A}}}}
\nc{\sB}{{{\mathscr{B}}}}
\nc{\sC}{{{\mathscr{C}}}}
\nc{\sE}{{{\mathscr{E}}}}
\nc{\sL}{{{\mathscr{L}}}}
\nc{\sG}{{{\mathscr{G}}}}
\nc{\sF}{{{\mathscr{F}}}}
\nc{\sI}{{{\mathscr{I}}}}
\nc{\sN}{{{\mathscr{N}}}}
\nc{\sM}{{{\mathscr{M}}}}

\nc{\Choi}{Choi-Jamio\l{}kowski }
\nc{\reg}{\infty}
\nc{\amo}{\text{\rm amo}}
\nc{\Renyi}{R\'{e}nyi }

\nc{\conv}{\operatorname{conv}}
\nc{\cvxset}{\mathscr{C}}

\nc{\RM}{{{\mathscr{R}}}}

\nc{\END}{\operatorname{End}}
\nc{\PERM}{\mathfrak{\sigma}}

\nc{\Cone}{\text{\rm Cone}}
\nc{\sep}{{\SEP}}

\nc{\DD}{{{\mathbb D}}}
\nc{\BS}{{\scriptscriptstyle \rm {BS}}}
\nc{\Sand}{{\scriptscriptstyle  \rm S}}
\nc{\RSand}{{\scriptscriptstyle  \rm {RS}}}
\nc{\Rev}{{\scriptscriptstyle  \rm {R}}}
\nc{\Petz}{{\scriptscriptstyle  \rm P}}
\nc{\Hypo}{{\scriptscriptstyle  \rm H}}
\nc{\Meas}{{\scriptscriptstyle \rm M}}
\nc{\Proj}{{{\scriptscriptstyle \rm P}}}

\nc{\suchthat}{\text{\rm s.t.}}

\nc{\pl}{{\scalebox{0.7}{+}}}
\nc{\HERM}{\mathscr{H}}
\nc{\PSD}{\HERM_{\pl}}
\nc{\PD}{\HERM_{\pl\pl}}
\nc{\density}{\mathscr{D}}
\nc{\subdensity}{\mathscr{D}_\bullet}

\nc{\polarPSD}[1]{{#1}_{\pl}^{\circ}}
\nc{\polarPSDre}[1]{{#1}_{\pl}^{\star}}
\nc{\polarPD}[1]{{#1}_{\pl\pl}^{\circ}}

\nc{\PPT}{\text{\rm PPT}}
\nc{\Rains}{\text{\rm Rains}}
\nc{\WD}{\text{\rm WD}}
\nc{\SEP}{\text{\rm SEP}}
\nc{\PSEP}{\text{\rm PSEP}}
\nc{\CPTP}{\text{\rm CPTP}}
\nc{\POVM}{\text{\rm POVM}}
\nc{\PVM}{\text{\rm PVM}}
\nc{\CP}{\text{\rm CP}}
\nc{\adv}{\text{\rm adv}}
\nc{\spec}{\text{\rm spec}}
\nc{\poly}{\text{\rm poly}}
\nc{\End}{\operatorname{End}}
\nc{\Par}{\operatorname{Par}}
\nc{\RNG}{\operatorname{RNG}}
\nc{\STAB}{\text{\rm STAB}}
\nc{\epi}{\boldsymbol{\operatorname{epi}}}
\nc{\op}{\boldsymbol{\operatorname{op}}}

\nc{\group}{\mathfrak{G}}
\nc{\perm}{\mathfrak{P}}

\makeatletter
\newcommand*\rel@kern[1]{\kern#1\dimexpr\macc@kerna}
\newcommand*\widebar[1]{%
  \begingroup
  \def\mathaccent##1##2{%
    \rel@kern{0.8}%
    \overline{\rel@kern{-0.8}\macc@nucleus\rel@kern{0.2}}%
    \rel@kern{-0.2}%
  }%
  \macc@depth\@ne
  \let\math@bgroup\@empty \let\math@egroup\macc@set@skewchar
  \mathsurround\z@ \frozen@everymath{\mathgroup\macc@group\relax}%
  \macc@set@skewchar\relax
  \let\mathaccentV\macc@nested@a
  \macc@nested@a\relax111{#1}%
  \endgroup
}
\makeatother

\begin{document}
	\title{\Large \textbf{Operational interpretation of the reverse sandwiched
	\Renyi divergences in composite quantum hypothesis testing}}

	\author[1,2,3]{Masahito Hayashi \thanks{hmasahito@cuhk.edu.cn}}
	\author[1]{Kun Fang \thanks{kunfang@cuhk.edu.cn}}
	\affil[1]{\small School of Data Science, The Chinese University of Hong Kong, Shenzhen,\protect\\  Guangdong, 518172, China}
	\affil[2]{International Quantum Academy, Futian District, Shenzhen 518048, China}
	\affil[3]{Graduate School of Mathematics, Nagoya University, Chikusa-ku, Nagoya 464–8602, Japan}

	\date{\today}

	\maketitle

	\begin{abstract}
		We study the Hoeffding regime of composite quantum hypothesis testing, in which each hypothesis is specified by a sequence of sets of quantum states. We establish quantum Hoeffding bounds under a set of structural assumptions, orthogonal to those of our previous framework. A notable consequence is the direct operational interpretation of the reverse sandwiched \Renyi divergence for $\alpha \in (0,1)$: for the task of discriminating a thermal equilibrium state from a probe state subject to unknown dephasing in the energy eigenbasis, with free Hamiltonian evolution as a special case, the optimal Hoeffding exponent is given exactly by this divergence evaluated on a single copy of the system. The same task in the Stein regime is governed by the reverse quantum relative entropy, providing its operational interpretation as well. This behavior contrasts both with the simple independent and identically distributed (i.i.d.) setting, where the Petz \Renyi divergence and the Umegaki relative entropy govern the Hoeffding and Stein exponents, respectively, and with many composite settings, where only regularized many-copy formulas are available. This finding reveals that passing from simple to composite hypotheses can fundamentally change which quantum divergence determines the operational limits of discrimination, and suggests a new avenue for seeking operational interpretations of quantum divergences by lifting simple hypotheses to richer composite scenarios.
	\end{abstract}

	{ \tableofcontents }

	\section{Introduction}

	Quantum hypothesis testing asks which of two competing models of a physical system is better supported by the outcome of a quantum measurement. It provides the operational backbone for quantifying distinguishability between different quantum sources, with applications across quantum computing, quantum communication, and quantum cryptography~\cite{wilde2011classical,hayashi2017quantum,watrous2018theory}. It also endows formal divergence measures with concrete operational interpretation that turns abstract formulas into tangible information-theoretic quantities and guides their use in practice. Here, an \emph{operational interpretation} means a task-based understanding of a mathematical quantity: just as $\pi$ is the universal ratio between a circle's circumference and its diameter, and Shannon entropy is the minimum average number of bits needed to compress a source in the long run~\cite{shannon1948mathematical}.
	
	In asymmetric hypothesis testing, there are two errors to balance: the Type-I error, which is the probability of incorrectly accepting the alternative hypothesis when the null hypothesis is true, and the Type-II error, which is the probability of incorrectly accepting the null hypothesis when the alternative is true. The central problem is to understand how these two errors trade off as the number of available copies of the system increases. When the hypotheses are simple and given by independent and identically distributed (i.i.d.) copies of a single state, this trade-off admits a particularly clean asymptotic characterization in terms of three complementary regimes, each of which identifies a quantum divergence as the optimal rate for a concrete discrimination task. 
	
	In the Stein regime, one imposes a constant threshold on the Type-I error and asks how quickly the Type-II error can be made to vanish; the optimal exponential rate is given by the Umegaki relative entropy~\cite{hiai1991proper,Ogawa2000}. The identification of this quantity by Hiai and Petz exemplifies the importance of operational interpretations: among the various mathematically plausible quantum extensions of the Kullback--Leibler divergence~\cite{kullback1951information}, including the Belavkin--Staszewski relative entropy~\cite{belavkin1982c}, it is the operational role in hypothesis testing that singles out Umegaki relative entropy as the canonical entropic measure, cementing its central place in quantum information theory. In the strong-converse regime, the Type-II error is forced to decay faster than the relative entropy, and the focus shifts to how quickly the Type-I error approaches unity; here, the sandwiched \Renyi divergence is the relevant quantity~\cite{Ogawa2000,mosonyi2015quantum}. In the Hoeffding regime, the Type-II error is required to decay at a prescribed exponential rate slower than the relative entropy, and the best achievable exponent for the Type-I error is determined by the Petz \Renyi divergence~\cite{nagaoka2006converse,hayashi2007error,audenaert2008asymptotic}. Collectively, these correspondences constitute a central guiding principle of quantum information theory, underpinning plenty of applications~\cite{wilde2011classical,hayashi2017quantum,watrous2018theory} and subsequent generalizations~\cite{cooney2016strong,fang2025towards}.

	In many scenarios of interest, however, the two hypotheses cannot be pinned down to a single pair of states. Adversarial, black-box, and partially characterized models naturally constrain the true state only to belong to some prescribed set, and physical correlations typically preclude a clean description in which every copy of the system is independent and identical. These considerations motivate the study of \emph{composite} and \emph{correlated} hypotheses~\cite{hiai2007large,hiai2008error,brandao2010generalization,berta2021composite,hayashi2016correlation,mosonyi2015two,Mosonyi_2022,fang2024generalized,lami2024solutiongeneralisedquantumsteins,fang2025error,lami2025doubly}, in which the two hypotheses are each described by a whole sequence of sets of quantum states with growing system size, and a single measurement is required to perform uniformly well against every state in each set.

	In a prior work~\cite{fang2025error}, we extended the simple Hoeffding bound along this direction and established a tight quantum Hoeffding bound for hypothesis testing between convex and compact sets of states that are, in addition, \emph{stable under tensor products}, meaning that taking a product of any two states from the set of a given size yields a state that again belongs to the set of the appropriate larger size. This tensor-product stability is structurally central to the argument in that work. It is satisfied in a number of settings of interest, including the set of separable states in entanglement theory and the set of stabilizer states in the resource theory of magic. On the other hand, it fails in some composite scenarios, such as the composite i.i.d.\ setting, in which each hypothesis consists of i.i.d.\ copies of states drawn from a fixed constituent set~\cite{berta2021composite,Mosonyi_2022,lami2025doubly}. 

	In this work, we establish quantum Hoeffding bounds under a different set of structural assumptions that does not require tensor-product stability. Instead, the framework is based on permutation symmetry of the state sequences, together with geometric and topological regularity properties of the associated regularized \Renyi divergences. It is therefore complementary to the framework of~\cite{fang2025error} and is suited to composite scenarios that fall outside the tensor-stable setting.

	Beyond this structural extension, a particular specialization has a notable consequence: it yields an operational interpretation of the \emph{reverse sandwiched \Renyi divergence},
	\begin{align}
		D_{\RSand,\alpha}(\rho\|\sigma) := \frac{1}{\alpha-1}\log \tr \left[\left(\rho^{\frac{\alpha}{2(1-\alpha)}}\sigma\rho^{\frac{\alpha}{2(1-\alpha)}}\right)^{1-\alpha}\right],
	\end{align}
	with $\alpha \in (0,1)$. This is a quantum \Renyi divergence that extends the well-studied sandwiched \Renyi divergence from its usual parameter range $\alpha\geq 1/2$ to the complementary range $\alpha \in (0,1)$~\cite{audenaert2015alpha}. Although it has been used as a quantitative tool in~\cite{lipka2024quantum,warsi2025generalization}, its operational interpretation has remained open since its introduction in~\cite{audenaert2015alpha}. In the limit $\alpha \to 1$, the reverse sandwiched \Renyi divergence converges to a \emph{reverse quantum relative entropy} $D_{\Rev}(\rho\|\sigma):= \lim_{\alpha\to 1} D_{\RSand,\alpha}(\rho\|\sigma)$~\cite{audenaert2015alpha}, which has been studied quantitatively in~\cite{lipka2024quantum,Hayashi2025another}. We show that both $D_{\RSand,\alpha}$ and $D_{\Rev}$ acquire direct operational meanings as the optimal error exponents of a composite hypothesis-testing problem that arises naturally in quantum thermodynamics.
	
	Specifically, consider a quantum system with Hamiltonian $H=\sum_j E_j\ket{E_j}\bra{E_j}$ and energy eigenbasis $\{\ket{E_j}\}$. At inverse temperature $\beta$, its thermal equilibrium state is
	$\rho=e^{-\beta H}/\tr[e^{-\beta H}]$. A fundamental task in quantum thermodynamics is to decide whether the system is in this thermal equilibrium state $\rho$ or in a prescribed non-equilibrium probe state $\sigma$~\cite{horodecki2013fundamental,brandao2013resource}. In realistic implementations, however, the state reaching the tester may be affected by unknown noise or imperfect control~\cite{watanabe2024black,zhang2026quantum}. 
	One particular model of this uncertainty is phase noise in the energy eigenbasis, arising for example from imprecise waiting times, clock misalignment, or energy fluctuations. The relevant phase rotations form the diagonal unitary family
	\begin{align}
		\group:=\left\{\sum_j e^{i\theta_j}\ket{E_j}\bra{E_j}:\theta_j\in[0,2\pi)\right\}.
	\end{align}
	Thus the tester may know that the intended states to distinguish are $\rho$ or $\sigma$ but not which $g\in\group$ has acted on them. If the same unknown phase setting applies throughout an $n$-copy experiment, the two hypotheses become the sets of all states generated from $\rho^{\otimes n}$ and $\sigma^{\otimes n}$ by these diagonal phase rotations. This leads to the following composite hypothesis-testing problem:
	\begin{align}
		\text{(Null)} \quad  \sA_n := \{\rho^{\otimes n}\}, \qquad
		\text{(Alternative)} \quad \sB_n := \left\{g^{\otimes n} \sigma^{\otimes n} (g^\dagger)^{\otimes n} : g \in \group \right\},
	\end{align}
	where we note that $\rho$ is invariant under the action of $\group$ by definition.
	Let $\alpha_{n,r}(\sA_n\|\sB_n)$ denote the optimal worst-case Type-I error, given that the worst-case Type-II error decays exponentially at rate $r$. Theorem~\ref{thm: RSand operational meaning} establishes the following Hoeffding-type formula:
	\begin{align}
		\lim_{n\to\infty} -\frac{1}{n} \log \alpha_{n,r}(\sA_n\|\sB_n)
		= \sup_{\alpha\in(0,1)} \frac{\alpha-1}{\alpha} \bigl(r - D_{\RSand,\alpha}(\rho\|\sigma)\bigr),
	\end{align}
	for any $0<r<D_{\Rev}(\rho\|\sigma)$.
	Thus, for this composite discrimination problem, the optimal Hoeffding exponent is determined by the single-letter reverse sandwiched \Renyi divergence between the thermal and probe states. Moreover, conventional time evolution under the Hamiltonian $H$ is included as a one-parameter subgroup of $\group$ with $\theta_j=-E_jt$; under a generic density condition stated in Corollary~\ref{coro: RSand time evolution}, this smaller time-evolution family yields the same exponent.
	Similarly, in the Stein regime, let $\beta_{n,\ve}(\sA_n\|\sB_n)$ denote the optimal Type-II error exponent with a constant threshold $\ve$ on the Type-I error. We show in Corollary~\ref{coro: RSand operational meaning Stein} that 
	\begin{align}
		\lim_{n\to\infty} -\frac{1}{n} \log \beta_{n,\ve}(\sA_n\|\sB_n) = D_{\Rev}(\rho\|\sigma).
	\end{align}

	These results are conceptually significant from two perspectives. First, the governing quantity differs from the Petz \Renyi divergence that characterizes the Hoeffding exponent for simple i.i.d.\ hypotheses~\cite{nagaoka2006converse,hayashi2007error,audenaert2008asymptotic}; to the best of our knowledge, this is the first setting in which the reverse sandwiched divergence arises as the \emph{exact} rate of a Hoeffding-type exponent. It shows that moving from simple to composite hypotheses is not merely a formal generalization: it can genuinely change the divergence that governs the optimal asymptotics, suggesting that our prior experience may fail in this richer landscape. Second, although error exponents in composite hypothesis testing are typically given by regularized many-copy formulas that are difficult to evaluate~\cite{berta2021composite,Mosonyi_2022,lami2025doubly}, the exponent obtained here admits an explicit single-letter expression. This points to a useful methodological lesson for the search for operational meanings of quantum divergences: one need not restrict attention to simple hypotheses. Instead, it can be fruitful to lift the problem to a richer composite setting, where the enlarged hypothesis space may reveal  structures that remain invisible in the simpler formulation, much as lifting methods in optimization expose useful structure by embedding a problem in a higher-dimensional space~\cite{fawzi2022lifting}.

	\paragraph{Organization.} The rest of the paper is organized as follows. Section~\ref{sec: preliminaries} introduces the notation, quantum divergences, and hypothesis-testing framework used in this work. Section~\ref{sec: Hoeffding bounds} presents the main technical results on quantum Hoeffding bounds for composite hypotheses. Section~\ref{sec: matching bounds} applies these bounds to the problem of discriminating a thermal state from a dephased probe state and establishes the operational interpretation of the reverse sandwiched \Renyi divergence and the reverse quantum relative entropy. Finally, Section~\ref{sec: discussion} concludes with a summary and outlook.

	\section{Preliminaries}
	\label{sec: preliminaries}

	\subsection{Notation}

	Throughout this work, we adopt the following notational conventions. Finite-dimensional
	Hilbert spaces are denoted by $\cH$, with $|\cH|$ indicating their dimension. The
	set of all linear operators on $\cH$ is denoted by $\sL(\cH)$, while
	$\HERM(\cH)$ and $\HERM_{\pl}(\cH)$ denote the sets of Hermitian and positive
	semidefinite operators on $\cH$, respectively. The set of density operators (i.e.,
	positive semidefinite operators with unit trace) on $\cH$ is denoted by
	$\density(\cH)$. Calligraphic letters such as $\sA$, $\sB$, and $\sC$ are used
	to represent sets of linear operators or sequences of such sets. Unless otherwise specified, all logarithms
	are taken to base two and denoted by $\log(x)$. Throughout, we adopt the convention $0^{x} := 0$ for all $x \in \RR$ and $\log 0 = -\infty$. Denote $\rho \ll \sigma$ if
	the support of $\rho$ is contained within the support of $\sigma$. The positive
	semidefinite ordering is written as $X \geq Y$ if and only if $X - Y \geq 0$.
	The absolute value of an operator $X$ is defined as
	$|X|:= (X^{\dagger} X)^{1/2}$. For a Hermitian operator $X$ with spectral decomposition
	$X = \sum_{i} x_{i} E_{i}$, the projection onto the non-negative eigenspaces
	is denoted by $\{X \geq 0\} := \sum_{x_i \geq 0}E_{i}$. Similarly,
	$\{X > 0\} := \sum_{x_i > 0}E_{i}$. 

	A real-valued function $f$ on a convex set $C$ is said to be \emph{strictly convex} on $C$ if
	\begin{align}
		f((1-\lambda)x_{1} + \lambda x_{2}) < (1-\lambda)f(x_{1}) + \lambda f(x_{2}), \qquad \forall 0 < \lambda < 1,
	\end{align}
	for any two different points $x_{1}$ and $x_{2}$ in $C$.

	\subsection{Quantum divergences}

	A functional $\DD: \density \times \PSD \to \RR$ is called a \emph{quantum
	divergence} if it satisfies the data-processing inequality: for any completely
	positive and trace-preserving (CPTP) map $\cE$ and any $(\rho,\sigma) \in \density
	\times \PSD$, it holds that
	$\DD(\cE(\rho)\|\cE(\sigma)) \leq \DD(\rho\|\sigma)$. In the following, we
	introduce several quantum divergences that will be used throughout this work. We
	also define quantum divergences between two sets of quantum states. 

	\begin{definition}
		For any $\rho\in \density$ and $\sigma \in \PSD$, the Umegaki relative
		entropy is defined by~\cite{umegaki1962conditional}
		\begin{align}
			\label{eq: Umegaki}
			D(\rho\|\sigma):= \tr\!\left[\rho(\log \rho - \log \sigma)\right],
		\end{align}
		if $\rho \ll \sigma$, and $+\infty$ otherwise.
	\end{definition}

	\begin{definition}
		For any $\rho\in \density$ and $\sigma \in \PSD$, the Petz \Renyi divergence
		is defined by~\cite{petz1986quasi}:
		\begin{align}
			D_{\Petz,\alpha}(\rho\|\sigma) := \frac{1}{\alpha-1}\log Q_{\Petz,\alpha}(\rho\|\sigma),
		\end{align}
		with the Petz \Renyi quasi-divergence defined by
		\begin{align}
			\label{eq: Petz-quasi}Q_{\Petz,\alpha}(\rho\|\sigma) := \tr\left[\rho^{\alpha}\sigma^{1-\alpha}\right]
		\end{align}
		if $\alpha \in (0,1)$, or if $\alpha > 1$ and $\rho \ll \sigma$, and $Q_{\Petz,\alpha}(\rho\|\sigma) := +\infty$ otherwise.
	\end{definition}

	\begin{definition}
		For any $\rho\in \density$, $\sigma \in \PSD$, the sandwiched \Renyi divergence
		is defined by~\cite{muller2013quantum,wilde2014strong}:
		\begin{align}
			D_{\Sand,\alpha}(\rho\|\sigma) := \frac{1}{\alpha-1}\log Q_{\Sand,\alpha}(\rho\|\sigma),
		\end{align}
		with the sandwiched \Renyi quasi-divergence defined by
		\begin{align}
			\label{eq: sandwiched-quasi}Q_{\Sand,\alpha}(\rho\|\sigma) := \tr\left[\left(\sigma^{\frac{1-\alpha}{2\alpha}}\rho\sigma^{\frac{1-\alpha}{2\alpha}}\right)^{\alpha}\right]
		\end{align}
		if $\alpha \in (0,1)$, or if $\alpha > 1$ and $\rho \ll \sigma$, or if $\alpha < 0$ and $\sigma \ll \rho$, and $Q_{\Sand,\alpha}(\rho\|\sigma) := +\infty$ otherwise.
	\end{definition}

	\begin{boxdefinition}
		[Reverse sandwiched \Renyi divergence.]\label{def: reverse sandwiched Renyi divergence}
		Let $\alpha \in (0,1) \cup (1, +\infty)$. For any $\rho,\sigma\in \density$, the reverse sandwiched
		\Renyi divergence is defined by~\cite{audenaert2015alpha}:
		\begin{align}
			\label{eq: sandwiched-R}D_{\RSand,\alpha}(\rho\|\sigma) := \frac{\alpha}{1-\alpha}D_{\Sand,1-\alpha}(\sigma\|\rho).
		\end{align}
		The corresponding reverse quantum relative entropy is defined by~\cite{audenaert2015alpha}:
		\begin{align}
			D_{\Rev}(\rho\|\sigma):= \lim_{\alpha\to 1}D_{\RSand,\alpha}(\rho\|\sigma) .
		\end{align}
	\end{boxdefinition}

	Note that $D_{\RSand,\alpha}$ satisfies the data-processing inequality for $\alpha \in (0,1/2)$ but not for $\alpha \in (1/2,1)$. A closed-form expression for $D_{\Rev}(\rho\|\sigma)$ is given in~\cite[Theorem~2]{audenaert2015alpha}; in general $D_{\Rev}(\rho\|\sigma) \le D(\rho\|\sigma)$, with equality for commuting states.

	\begin{example}
		\label{ex: divergence comparison} Figure~\ref{fig: divergence comparison} illustrates
		the ordering among the Petz, sandwiched, and reverse sandwiched \Renyi divergences
		for a non-commuting qubit pair:
		\begin{align}
			\rho = \begin{pmatrix}0.8&0 \\ 0&0.2\end{pmatrix}, \qquad \sigma = \begin{pmatrix}0.7&0.3 \\ 0.3&0.3\end{pmatrix}.
		\end{align}
		The plot shows that for $\alpha\in(0,1)$, the sandwiched divergence is dominated
		by the Petz divergence, which in turn is bounded above by the Umegaki
		relative entropy, i.e., $D_{\Sand,\alpha}(\rho\|\sigma) \le D_{\Petz,\alpha}(
		\rho\|\sigma) \le D(\rho\|\sigma)$. The sandwiched and reverse sandwiched
		divergences coincide at $\alpha = 1/2$, while for $\alpha > 1/2$ the reverse
		sandwiched divergence is strictly smaller than the sandwiched one. As
		$\alpha\to 1^-$, both $D_{\Petz,\alpha}$ and $D_{\Sand,\alpha}$ converge to the
		Umegaki relative entropy $D(\rho\|\sigma)$, whereas the reverse sandwiched
		divergence $D_{\RSand,\alpha}$ converges to the strictly smaller limit
		$D_{\Rev}(\rho\|\sigma)$.
	\end{example}

	\begin{figure}[ht]
		\centering
		\includegraphics[width=0.7\textwidth]{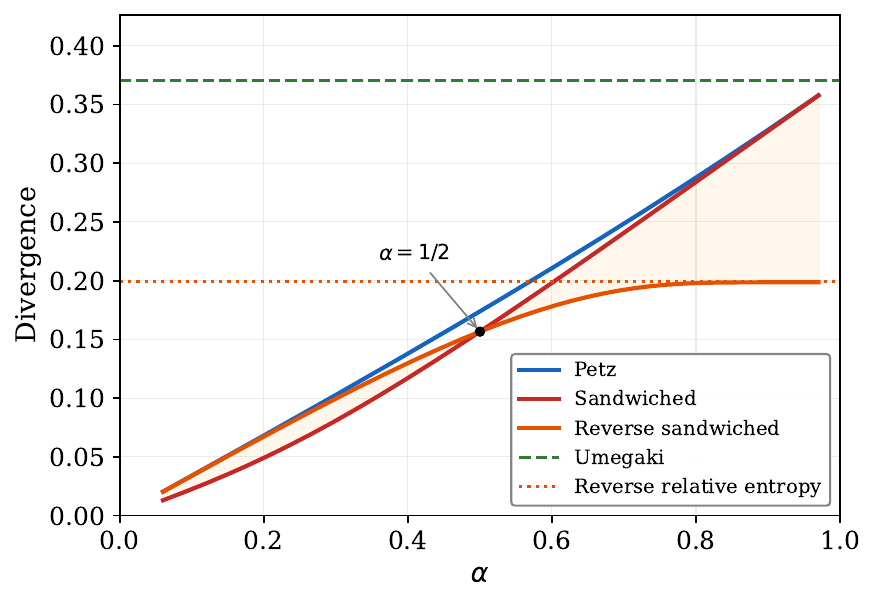}
		\caption{Comparison of the Petz, sandwiched, and reverse sandwiched \Renyi
		divergences as functions of $\alpha\in(0,1)$ given in Example~\ref{ex: divergence
		comparison}. The dashed line marks the Umegaki relative entropy $D(\rho\|\sigma)$, and the dotted line marks the reverse quantum relative entropy $D_{\Rev}(\rho\|\sigma)$. The shaded
		region highlights the gap between sandwiched and reverse sandwiched
		divergences.}
		\label{fig: divergence comparison}
	\end{figure}

	\begin{definition}[Quantum divergence between two sets of states.]
		\label{def: divergence between two sets} Let $\DD$ be a quantum divergence between
		two quantum states. Let $\cH$ be a finite-dimensional Hilbert space. Then for
		any sets $\sA,\sB\subseteq \density(\cH)$, the quantum divergence between
		these two sets is defined by
		\begin{align}
			\DD(\sA\|\sB):= \inf_{\substack{\rho \in \sA\\ \sigma \in \sB}}\DD(\rho\|\sigma).
		\end{align}
		Let $\sA = \{\sA_{n}\}_{n\in \NN}$ and $\sB = \{\sB_{n}\}_{n\in \NN}$ be two
		sequences of sets of quantum states\footnote{We abuse the notation $\sA,\sB$
		to refer both to sets of states and to sequences of such sets, depending on the
		context.}, where each $\sA_{n}, \sB_{n} \subseteq \density(\cH^{\ox n})$. The
		regularized divergence between these sequences is defined by
		\begin{align}
			\underline{\DD}^{\reg}(\sA \| \sB) & := \liminf_{n \to \infty}\frac{1}{n}\DD(\sA_{n}\| \sB_{n}), \\
			\overline{\DD}^{\reg}(\sA \| \sB)  & := \limsup_{n \to \infty}\frac{1}{n}\DD(\sA_{n}\| \sB_{n}).
		\end{align}
		If the limit exists, we define the regularized divergence as
		\begin{align}
			\DD^{\reg}(\sA \| \sB) := \lim_{n \to \infty}\frac{1}{n}\DD(\sA_{n}\| \sB_{n}).
		\end{align}
	\end{definition}

	\subsection{Hypothesis testing between two sets of quantum states}
	\label{sec: Hypothesis testing between two sets of quantum states}

	As in standard hypothesis testing, two types of errors can occur: a Type-I error,
	where a sample from $\sA_{n}$ is incorrectly classified as coming from
	$\sB_{n}$, and a Type-II error, where a sample from $\sB_{n}$ is incorrectly
	classified as coming from $\sA_{n}$. Since we aim to control the errors for any
	state within the sets, regardless of which one is drawn, the (worst-case)
	Type-I error is defined by
	\begin{align}\label{eq: worst type I}
		\alpha(\sA_{n}, M_{n}): = \sup_{\rho_n \in \sA_n}\tr[\rho_{n} (I-M_{n})],
	\end{align}
	and the (worst-case) Type-II error is defined by
	\begin{align}\label{eq: worst type II}
		\beta(\sB_{n}, M_{n}): = \sup_{\sigma_n \in \sB_n}\tr[\sigma_{n} M_{n}].
	\end{align}

	The Hoeffding regime studies the optimal
	behavior of the Type-I error provided that the Type-II error exponentially
	decays. More explicitly, the optimal Type-I error for hypothesis testing
	between two sets of quantum states, $\sA_{n}$ and $\sB_{n}$, is defined as
	\begin{align}
		\label{eq: optimal Type-I error sets 2}\alpha_{n, r}(\sA_{n}\|\sB_{n}):= \min_{0\leq M_n \leq I}\left\{\alpha(\sA_{n}, M_{n}): \beta(\sB_{n}, M_{n}) \leq 2^{-nr}\right\},
	\end{align}
	where the measurement $M_{n}$ is chosen to minimize the worst-case Type-I
	error $\alpha(\sA_{n}, M_{n})$, subject to the constraint that the Type-II error
	$\beta(\sB_{n}, M_{n})$ decays exponentially at a rate $r$. In other words, the
	measurement must perform universally well for all states in $\sA_{n}$ and $\sB_{n}$.

	Similarly, the Stein regime studies the optimal behavior of the Type-II error provided that the Type-I error is bounded by a constant threshold $\ve \in (0,1)$. More explicitly, the optimal Type-II error for hypothesis testing between two sets of quantum states, $\sA_{n}$ and $\sB_{n}$, is defined as
	\begin{align}
		\label{eq: optimal Type-II error sets 2}\beta_{n, \ve}(\sA_{n}\|\sB_{n}):= \min_{0\leq M_n \leq I}\left\{\beta(\sB_{n}, M_{n}): \alpha(\sA_{n}, M_{n}) \leq \ve\right\}.
	\end{align}

	The following result is a useful technical tool that allows us to reduce the discrimination between convex sets of quantum states to the discrimination between individual states.

	\begin{lemma}[{\cite[Lemma 24]{fang2025error}}.]
		\label{lem: optimal Type-I error minimax} Let $\cH$ be a finite-dimensional Hilbert
		space, $r >0$ be a real number, and $n\in \NN$. Let
		$\sA_{n}, \sB_{n}\subseteq \density(\cH^{\ox n})$ be two convex sets. Then it
		holds that
		\begin{align}
			\alpha_{n, r}(\sA_{n}\| \sB_{n}) = \sup_{\substack{\rho_n \in \sA_n\\ \sigma_n \in \sB_n}}\alpha_{n, r}(\rho_{n}\| \sigma_{n}).
		\end{align}
	\end{lemma}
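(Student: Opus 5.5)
This is a minimax statement, so the plan is a minimax exchange argument. Write the optimal Type-I error for fixed states as
\begin{align}
\alpha_{n,r}(\rho_n\|\sigma_n) = \min_{M_n\in \cM_{\sigma_n}} \tr[\rho_n(I-M_n)], \qquad \cM_{\sigma_n} := \{0\le M_n\le I : \tr[\sigma_n M_n]\le 2^{-nr}\}.
\end{align}

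\textbf{Easy direction.} For any test $M_n$ feasible for the composite problem and any $\sigma_n\in\sB_n$, we have $\tr[\sigma_n M_n]\le\beta(\sB_n,M_n)\le 2^{-nr}$. Hence $M_n$ is feasible for every pair $(\rho_n,\sigma_n)$, which gives
\begin{align}
\alpha(\sA_n,M_n)\ge \tr[\rho_n(I-M_n)]\ge \alpha_{n,r}(\rho_n\|\sigma_n).
\end{align}
Minimizing over $M_n$ and taking the supremum over the pair yields ``$\ge$''.

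\textbf{Reverse inequality.} The difficulty is that the constraint set depends on $\sigma_n$. I would remove this dependence by Lagrangian duality for the single-pair problem, which is a linear program over the compact convex set $\{0\le M\le I\}$. Strong duality holds by Slater's condition, since $M=0$ is strictly feasible because $2^{-nr}>0$. This gives
\begin{align}
\alpha_{n,r}(\rho_n\|\sigma_n)=\sup_{\lambda\ge0}\min_{0\le M\le I}\Big\{\tr[\rho_n(I-M)]+\lambda\big(\tr[\sigma_n M]-2^{-nr}\big)\Big\}.
\end{align}
Similarly, the composite problem can be written as
\begin{align}
\min_{0\le M\le I}\ \sup_{\rho_n\in\sA_n,\ \sigma_n\in\sB_n,\ \lambda\ge 0}\Big\{\tr[\rho_n(I-M)]+\lambda\big(\tr[\sigma_nM]-2^{-nr}\big)\Big\}.
\end{align}
The inner supremum equals $\alpha(\sA_n,M)$ when $M$ is feasible and $+\infty$ otherwise. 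Two points need care here: the sup over $\rho_n$ and the sup over $\sigma_n$ decouple, and the sup over $\sigma_n$ of a linear functional is the worst case.

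\textbf{Applying a minimax theorem.} The objective is linear, hence convex, in $M$ over a compact convex set. It is not jointly concave in $(\rho_n,\sigma_n,\lambda)$, because of the product $\lambda\sigma_n$. To fix this, reparametrize by $\tau=\lambda\sigma_n$, which ranges over the convex cone $\{\lambda\sigma:\lambda\ge0,\ \sigma\in\sB_n\}$. This cone is convex precisely because $\sB_n$ is convex, and the objective
\begin{align}
\tr[\rho(I-M)]+\tr[\tau M]-\tr[\tau]\,2^{-nr}
\end{align}
is affine in $(\rho,\tau)$. Sion's minimax theorem then applies, with compactness on the $M$ side only, and allows exchanging $\min_M$ and $\sup_{\rho,\tau}$. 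After the exchange, for each fixed $(\rho_n,\lambda,\sigma_n)$ the inner minimum over $M$ is bounded by the dual expression above. So the right-hand side becomes $\sup_{\rho_n,\sigma_n}\sup_\lambda\min_M\{\cdots\}=\sup_{\rho_n,\sigma_n}\alpha_{n,r}(\rho_n\|\sigma_n)$ by strong duality, which gives ``$\le$''.

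\textbf{Main obstacle.} The crux is the convexity fix via the cone reparametrization and the verification of the hypotheses of Sion's theorem. In particular, the infinite values must be handled through the sup over $\lambda$ on an unbounded, noncompact set, which is why compactness is placed on the $M$ side. That the minimum in the definition of $\alpha_{n,r}$ is attained follows from compactness of the feasible set of tests.
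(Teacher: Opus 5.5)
Your proof is correct and complete. The paper does not prove this lemma itself; it imports it from \cite[Lemma~24]{fang2025error}, so there is no in-paper argument to compare against. Your route (Lagrangian reformulation, then Sion's theorem) is a valid self-contained proof.

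The delicate points all check out:
\begin{itemize}
\item For fixed $M$, the supremum over $\tau$ in the cone $\{\lambda\sigma:\lambda\ge 0,\ \sigma\in\sB_n\}$ equals $0$ or $+\infty$, according to whether $\beta(\sB_n,M)\le 2^{-nr}$. This holds even when the supremum defining $\beta$ is not attained.
\item The cone is convex exactly because $\sB_n$ is convex, and its product with $\sA_n$ is convex because $\sA_n$ is convex.
\item The objective is real-valued, continuous and affine in each argument, so Sion's theorem applies with compactness on the test set $\{0\le M\le I\}$ alone.
\end{itemize}
As a result, your argument needs only convexity, not closedness or compactness of $\sA_n$ and $\sB_n$.

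One simplification: after the exchange you need only weak duality. For each $(\rho_n,\lambda\sigma_n)$, the inner minimum over $M$ is at most $\tr[\rho_n(I-M')]$ for any $M'$ feasible for the pair $(\rho_n,\sigma_n)$, hence at most $\alpha_{n,r}(\rho_n\|\sigma_n)$. So the Slater/strong-duality step can be dropped; the reverse inequality already comes from your easy direction.
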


	\section{Quantum Hoeffding bounds}
	\label{sec: Hoeffding bounds}

	This section presents the main technical contribution of the paper. We begin by establishing a general quantum Hoeffding lower bound, expressed in terms of the regularized reverse sandwiched \Renyi divergence, in Theorem~\ref{thm: quantum Hoeffding lower bound} of Section~\ref{sec: Quantum Hoeffding lower bound}:
	\begin{align}
		\liminf_{n\to \infty}-\frac{1}{n}\log \alpha_{n, r}(\sA_{n}\| \sB_{n}) & \geq \sup_{\alpha \in (0,1)}\frac{\alpha-1}{\alpha}\bigg(r - \underline{D}_{\RSand,\alpha}^{\infty}(\sA\|\sB)\bigg).
	\end{align}
	We then establish a complementary upper bound in terms of the regularized Petz \Renyi divergence, in Theorem~\ref{thm: quantum Hoeffding upper bound} of Section~\ref{sec: quantum Hoeffding upper bound}:
	\begin{align}
		\limsup_{n\to \infty} -\frac{1}{n} \log \alpha_{n,r}(\sA_n\|\sB_n) &\leq \sup_{\alpha \in (0,1)} \frac{\alpha-1}{\alpha} \bigg(r - {D}_{\Petz,\alpha}^{\infty}(\sA\| \sB) \bigg).
	\end{align}
	While these bounds are not necessarily tight in general, we will show in Section~\ref{sec: matching bounds} that they coincide in a specific composite setting, where the reverse sandwiched \Renyi divergence acquires a direct operational interpretation.

	For the ease of presentation and comparison, we list the following assumptions.

	\begin{boxassumption} 
		Let $\cH$ be a finite-dimensional Hilbert space. For each $n \in \NN$, let $\sC_n, \sC'_n \subseteq \density(\cH^{\otimes n})$, and define the sequences $\sC = \{\sC_n\}_{n \in \NN}$ and $\sC' = \{\sC'_n\}_{n \in \NN}$. Let $\perm_n$ denote the permutation group on $n$ elements, and $U_{\pi}$ the natural unitary representation of $\pi \in \perm_n$ on $\cH^{\otimes n}$.
		\begin{center}
			\renewcommand{\arraystretch}{1.5}
			\begin{tabular}{@{} c l p{10.5cm} @{}}
				\toprule \textbf{Label} & \textbf{Name}             & \textbf{Description}                                                                                                                                              \\
				\midrule (C1)           & Convexity                 & For any $n \in \NN$, the set $\sC_{n}$ is convex.                                                                                                                 \\
				(C2)                    & Compactness               & For any $n \in \NN$, the set $\sC_{n}$ is compact.                                                                                                                \\
				(C3)                    & Stability                 & For any $m,n \in \NN$, $\rho_m \ox \sigma_{n}\in \sC_{m+n}, \forall \rho_m \in \sC_m, \forall \sigma_n \in \sC_n$.                                                                                                   \\
				(C4)                    & Finiteness                & $D_{\Petz, \alpha}(\sC_{1}\|\sC'_{1}) < \infty$ for $\alpha \in (0,1)$.                                                                       \\
				(C5)                   & $\perm$-closedness & For any $n \in \NN$, $U_\pi \rho_{n} U_\pi^{\dagger} \in \sC_{n}$, $\forall \rho_{n} \in \sC_{n}, \forall \pi \in \perm_n$.                                 \\
				(C5$'$)                    & $\perm$-invariance & For any $n \in \NN$, $U_\pi \rho_{n} U_\pi^{\dagger} = \rho_{n}$, $\forall \rho_{n} \in \sC_{n}, \forall \pi \in \perm_n$.                                  \\
				(C6) & Strict-concavity & $\alpha \mapsto (1-\alpha)D_{\Petz,\alpha}^{\infty}(\sC\|\sC')$ is strictly concave on $(0,1)$.\\
				(C7) & Continuity & $\alpha \mapsto (1-\alpha)D_{\Petz,\alpha}^{\infty}(\sC\|\sC')$ is $C^1$-continuous on $(0,1)$.  \\
				\bottomrule
			\end{tabular}
		\end{center}
	\end{boxassumption}

	The following Table~\ref{tab:assumptions-bounds} summarizes how the assumptions in this paper differ from those in~\cite{fang2025error}. The earlier results rely on compactness, stability, and finiteness (C2)--(C4). By contrast, the lower bound here is driven by permutation symmetry, and the upper bound by geometric and topological regularity of the regularized \Renyi divergence. In particular, neither bound in the present work requires tensor stability.

	\begin{table}[ht]
		\centering
		\renewcommand{\arraystretch}{1.5}
		\setlength{\tabcolsep}{5pt}

		\newcolumntype{P}{>{\columncolor{myblue!10}}c}
		\newcolumntype{H}{>{\columncolor{mymagenta!10}}c}

		\resizebox{\textwidth}{!}{%
		\begin{tabular}{l P P P P | H H H H}
			\toprule
			\multirow{3}{*}{\textbf{Assumption}}
			 & \multicolumn{4}{>{\columncolor{myblue!10}}c|}{\textbf{Previous work}~\cite{fang2025error}}
			 & \multicolumn{4}{>{\columncolor{mymagenta!10}}c}{\textbf{This work}}                                                         \\
			\cmidrule(lr){2-5}\cmidrule(lr){6-9}
			 & \multicolumn{2}{>{\columncolor{myblue!10}}c}{Lower bound}
			 & \multicolumn{2}{>{\columncolor{myblue!10}}c|}{Upper bound}
			 & \multicolumn{2}{>{\columncolor{mymagenta!10}}c}{Lower bound (Thm.~\ref{thm: quantum Hoeffding lower bound})}
			 & \multicolumn{2}{>{\columncolor{mymagenta!10}}c}{Upper bound (Thm.~\ref{thm: quantum Hoeffding upper bound})}                 \\
			\cmidrule(lr){2-3}\cmidrule(lr){4-5}\cmidrule(lr){6-7}\cmidrule(lr){8-9}
			 & $\sA$      & $\sB$      & $\sA$      & $\sB$      & $\sA$      & $\sB$      & $\sA$      & $\sB$      \\
			\midrule
			(C1) Convexity                    & \checkmark & \checkmark &            &            &    \checkmark        &  & \checkmark & \checkmark \\
			(C2) Compactness                  & \checkmark & \checkmark &            &            &            &            &            &            \\
			(C3) Stability                    &            &            & \checkmark & \checkmark &            &            &            &            \\
			(C4) Finiteness                   &            &            & \checkmark & \checkmark &            &            &            &            \\
			\hline
			(C5) $\perm$-closedness &            &            &            &            &        \checkmark    &  &            &            \\
			(C5$'$) $\perm$-invariance    &            &            &            &            &  &    \checkmark        &            &            \\
	
			(C6) Strict-concavity             &            &            &            &            &            &            & \checkmark & \checkmark \\
			(C7) Continuity                   &            &            &            &            &            &            & \checkmark & \checkmark \\
			\bottomrule
		\end{tabular}%
		}
		\caption{Comparison of the structural assumptions imposed on the state sequences $\sA$ and $\sB$ by the Hoeffding bounds of~\cite{fang2025error} (left, blue) and those established in this work (right, red). The two sets of assumptions are substantially different: the previous bounds rely on compactness, stability, and finiteness (C2)--(C4), whereas the bounds obtained here replace these by the permutation-symmetry conditions (C5)--(C5$'$) for the lower bound and by geometric and topological conditions (C6)--(C7) on the R\'enyi divergence for the upper bound.}
		\label{tab:assumptions-bounds}
	\end{table}

	\subsection{Quantum Hoeffding lower bound}
	\label{sec: Quantum Hoeffding lower bound}

	This section establishes the quantum Hoeffding lower bound. We first develop three technical lemmas that constitute the core of the proof. Lemma~\ref{lem: reduction perm inv} exploits the permutation symmetry of $\sA_n$ and $\sB_n$ to reduce the Type-II optimization to permutation-invariant states, thereby enabling the use of pinching techniques. Lemma~\ref{lem: blockwise NP tests} then constructs a blockwise Neyman--Pearson test for each pinched state, and Lemma~\ref{lem: universal test pinching} assembles these blockwise tests into a single universal test that controls the error uniformly over all states in $\sA_n$ and $\sB_n$. With these ingredients, we state and prove the main result in two forms: Proposition~\ref{prop: quantum Hoeffding lower bound} expresses the bound via the sandwiched \Renyi divergence, while Theorem~\ref{thm: quantum Hoeffding lower bound} reformulates it to the reverse sandwiched \Renyi divergence.

	Let $\perm_n$ be the symmetric group on $n$ elements
	and $U_{\pi}$ denotes the natural unitary representation of $\pi$ on
		$\cH^{\ox n}$. Define the twirling map
		\begin{align}
			\cT_{\perm}(X)\coloneqq \frac{1}{n!}\sum_{\pi\in \perm_n}U_{\pi} X U_{\pi}^{\dagger},
		\end{align}
		and the resulting set of twirled states from $\sB_n$,
		\begin{align}
			\cT_{\perm}(\sB_{n}):= \{\cT_{\perm}(\sigma_{n}): \sigma_{n} \in \sB_{n}\}.
		\end{align}

	The following result is similar in spirit to~\cite[Lemma 2]{hayashi2025general}, which exploits the symmetry of the sets to restrict the optimization to symmetric states, but it applies to the worst-case Type-I error and requires a different set of assumptions.
	\begin{boxlemma}
		[Reduction to permutation-invariant states.]\label{lem: reduction perm inv}
		Let $\cH$ be a finite-dimensional Hilbert space and
		$\sA_{n}, \sB_{n} \subseteq \density(\cH^{\ox n})$ be two sets of quantum
		states. Suppose that 
		\begin{itemize}
			\item $\sA_{n}$ satisfies~(C5$'$) $\perm$-invariance;
			\item $\sB_{n}$ satisfies~(C1) convexity and~(C5) $\perm$-closedness.
		\end{itemize}
		
		Then for any $r>0$, it holds that
		\begin{align}
			\alpha_{n,r}(\sA_{n}\|\sB_{n}) = \sup_{\sigma_n\in\cT_{\perm}(\sB_{n})}\alpha_{n,r}(\sA_{n}\|\sigma_{n}).
		\end{align}
	\end{boxlemma}

	\begin{proof}
		For any $\sigma_{n}\in\sB_{n}$, we first show that
		\begin{align}
			\label{eq: alpha nr twirling inequality}\alpha_{n,r}(\sA_{n}\|\sigma_{n})\le \alpha_{n,r}(\sA_{n}\|\cT_{\perm}(\sigma_{n})).
		\end{align}
		To verify this, let $M_{n}$ be any feasible test for $\alpha_{n,r}(\sA_{n}\|\cT_{\perm}
		(\sigma_{n}))$, satisfying $0\le M_{n} \le I$ and
		$\tr[\cT_{\perm}(\sigma_{n})\,M_{n}]\le 2^{-nr}$. Since $\cT_{\perm}$ is CPTP, its adjoint
		$\cT_{\perm}^{\dagger}$ is unital and completely positive. Define the test ${M}'_{n}
		:= \cT_{\perm}^{\dagger}(M_{n})$, which satisfies $0\leq{M}'_{n} \leq I$. Its Type-II
		error against $\sigma_{n}$ is
		\begin{align}
			\tr[\sigma_{n}\,{M}'_{n}] = \tr[\sigma_{n}\,\cT_{\perm}^{\dagger}(M_{n})]=\tr[\cT_{\perm}(\sigma_{n})\,M_{n}]\le 2^{-nr},
		\end{align}
		so ${M}'_{n}$ is feasible for the optimization defining $\alpha_{n,r}(\sA_{n}
		\|\sigma_{n})$. Therefore,
		\begin{align}
			\alpha_{n,r}(\sA_{n}\|\sigma_{n}) & \le \sup_{\rho_n\in\sA_n}\tr\!\left[\rho_{n}\bigl(I-{M}'_{n}\bigr)\right]  \\
			& = \sup_{\rho_n\in\sA_n}\tr\!\left[\rho_{n}\bigl(I-\cT_{\perm}^{\dagger}(M_{n})\bigr)\right]                     \\
			& = \sup_{\rho_n\in\sA_n}\tr\!\left[\cT_{\perm}(\rho_{n})(I-M_{n})\right]                                         \\
			& \le \sup_{\rho_n\in\sA_n}\max_{\pi\in \perm_n}\tr\!\left[U_{\pi}\rho_{n} U_{\pi}^{\dagger} (I-M_{n})\right] \\
			& = \sup_{\rho_n\in\sA_n}\tr\!\left[\rho_{n}(I-M_{n})\right],
		\end{align}
		where the second inequality relaxes the average value to the maximum, and the last equality uses the permutation invariance~(C5$'$) of $\sA_{n}$.
		Since this holds for every feasible test $M_{n}$ for
		$\alpha_{n,r}(\sA_{n}\|\cT_{\perm}(\sigma_{n}))$, we conclude~\eqref{eq: alpha nr twirling inequality}.

		Moreover, since $\cT_{\perm}(\sigma_{n})$ is a convex combination of unitarily
		conjugated copies of $\sigma_{n}$, the convexity~(C1) and permutation closedness~(C5)
		of $\sB_{n}$ guarantee that $\cT_{\perm}(\sigma_{n})\in\sB_{n}$. Therefore,
		\begin{align}
			\sup_{\sigma_n \in \sB_n}\alpha_{n,r}(\sA_{n}\|\sigma_{n}) = \sup_{\sigma_n \in \cT_{\perm}(\sB_{n})}\alpha_{n,r}(\sA_{n}\|\sigma_{n}).\label{eq: perm inv tmp1}
		\end{align}

		Finally, we have
		\begin{align}
			\alpha_{n,r}(\sA_{n}\|\sB_{n}) & = \alpha_{n,r}(\conv(\sA_{n})\|\sB_{n})                                                        \\ & = \sup_{\substack{\rho_n\in\conv(\sA_n)\\ \sigma_n\in\sB_n}}\alpha_{n,r}(\rho_{n}\|\sigma_{n}) \\
			& = \sup_{\sigma_n\in\sB_n}\alpha_{n,r}(\sA_{n}\|\sigma_{n})                                     \\
			& = \sup_{\sigma_n\in\cT_{\perm}(\sB_{n})}\alpha_{n,r}(\sA_{n}\|\sigma_{n}),
		\end{align}
		where the first and third equalities follow from the linearity of the Type-I
		and Type-II errors, the second from Lemma~\ref{lem: optimal Type-I error minimax}
		together with the convexity~(C1) of $\sB_{n}$, and the fourth from~\eqref{eq: perm inv tmp1}.
	\end{proof}

	Let $\cH$ be a finite-dimensional Hilbert space with $d := \dim\cH$, and
	let $n \in \NN$. We denote by $\Lambda_{n}$ the set of Young diagrams
	(equivalently, partitions of $n$ into at most $d$ parts) that label the
	irreducible representations in the Schur--Weyl decomposition
	of $\cH^{\otimes n}$:
	\begin{align}
		\cH^{\otimes n}=\bigoplus_{\lambda \in \Lambda_{n}}\cU_{\lambda} \otimes \cV_{\lambda},
	\end{align}
	where $\cU_{\lambda}$ and $\cV_{\lambda}$ are the irreducible representation
	spaces of the unitary group $\mathfrak{U}(\cH)$ and the symmetric group $\perm_n$,
	respectively. Write
	\begin{align}
		d_{\lambda}:=\dim \cU_{\lambda} \quad \text{and}\quad d_{n}:=\sum_{\lambda\in\Lambda_n}d_{\lambda}.
	\end{align}
	Then every permutation-invariant state $\omega_{n} \in \density(\cH^{\ox n})$
	is block-diagonal in this decomposition:
	\begin{align}
		\omega_{n}=\bigoplus_{\lambda\in\Lambda_n}\omega_{\lambda}(\omega_{n})\otimes \pi_{\cV_\lambda},
	\end{align}
	where $\pi_{\cV_\lambda}$ denotes the maximally mixed state on $\cV_{\lambda}$.

	\begin{boxlemma}
		[Blockwise Neyman--Pearson tests for pinched state.]\label{lem: blockwise NP tests}

		Let $\cH$ be a finite-dimensional Hilbert space, $n \in \NN$, and $\alpha \in (0,1)$. Let $\sA_n \subseteq \density(\cH^{\otimes n})$ be a set of quantum states, and let $\sigma_n \in \density(\cH^{\otimes n})$ be a permutation-invariant state with the decomposition
		\begin{align}
			\sigma_n = \bigoplus_{\lambda \in \Lambda_n} \omega_{\lambda}(\sigma_n) \otimes \pi_{\cV_\lambda}.
		\end{align}
		Let $\omega_{\lambda}(\sigma_n) = \sum_j c_{j,\lambda} |\psi_{j,\lambda}\rangle\!\langle\psi_{j,\lambda}|$ be the spectral decomposition. Define the projections $P_{j,\lambda} := |\psi_{j,\lambda}\rangle\!\langle\psi_{j,\lambda}| \otimes I_{\cV_\lambda}$, and the associated pinching map
		\begin{align}
			\mathcal{E}_n(X) := \sum_{\lambda \in \Lambda_n} \sum_j P_{j,\lambda} X P_{j,\lambda},
		\end{align}
		for any operator $X$.
		Then for any $r \in \RR$ and $\epsilon > 0$, there exist states
		$\rho_{j,\lambda}\in\sA_{n}$ (one for each index pair) and tests
		$T_{j,\lambda}$,
		such that:
		\begin{enumerate}[label=\textnormal{(\roman*)}]
			\item (Approximate maximality)
			\begin{align}
				\tr[P_{j,\lambda}\rho_{j,\lambda}] \geq \sup_{\rho_n\in\sA_n}\tr[P_{j,\lambda}\rho_{n}] - \epsilon, \quad \forall (j,\lambda); \label{eq: approx max}
			\end{align}
			\item (Blockwise Type-II bound)
			\begin{align}
				\tr[\sigma_{n}\, T_{j,\lambda}] & \le 2^{-nr}; \label{eq: feasible test tmp6}
			\end{align}
			\item (Blockwise Type-I bound)
			\begin{align}
				\tr\!\left[{\cal E}_{n}(\rho_{j,\lambda})(I-T_{j,\lambda})\right] & \le d_{n}^{\frac{1-\alpha}{\alpha}}\, 2^{\frac{1-\alpha}{\alpha}n\left(r-\frac{1}{n}D_{\Sand,\alpha}(\sA_n\|\sigma_n)\right)}. \label{eq: blockwise type I}
			\end{align}
		\end{enumerate}
	\end{boxlemma}

	\begin{proof}
		The $\epsilon$-approximate maximizer in item (i) always exists by the definition of the supremum.
		For each $(j,\lambda)$ define the threshold
		\begin{align}
			\label{eq: Rjlambda}R_{j,\lambda}:= \frac{nr+\log Q_{\Petz,\alpha}({\cal E}_{n}(\rho_{j,\lambda})\|\sigma_{n})}{n\alpha}.
		\end{align}
		Consider the Neyman--Pearson test between the pinched state ${\cal E}_{n}(\rho_{j,\lambda})$ and $\sigma_{n}$:
		\begin{align}
			T_{j,\lambda}:=\Bigl\{{\cal E}_{n}(\rho_{j,\lambda}) \geq 2^{nR_{j,\lambda}}\sigma_{n}\Bigr\}.
		\end{align}
		Recall that for any $V,W\in\cH_{+}$ and $\alpha\in(0,1)$, the following holds~\cite{audenaert2007discriminating}:
		\begin{align}
			\tr[V^{\alpha} W^{1-\alpha}] \geq \tr W\{W\le V\} + \tr V\{W > V\}.
		\end{align}
		Applying this with $V ={\cal E}_{n}(\rho_{j,\lambda})$ and
		$W = 2^{nR_{j,\lambda}}\sigma_{n}$, we obtain
		\begin{align}
			\tr\bigl[2^{nR_{j,\lambda}}\sigma_{n}\, T_{j,\lambda}\bigr] + \tr\bigl[{\cal E}_{n}(\rho_{j,\lambda})(I - T_{j,\lambda})\bigr] \le 2^{n(1-\alpha)R_{j,\lambda}}\, Q_{\Petz,\alpha}({\cal E}_{n}(\rho_{j,\lambda})\|\sigma_{n}).
		\end{align}
		Since both terms on the left-hand side are non-negative, each is
		individually bounded by the right-hand side:
		\begin{align}
			\tr[\sigma_{n}\, T_{j,\lambda}]                                  & \le 2^{-n\alpha R_{j,\lambda}}\, Q_{\Petz,\alpha}({\cal E}_{n}(\rho_{j,\lambda})\|\sigma_{n}), \label{eq: NP type II raw}  \\
			\tr\bigl[{\cal E}_{n}(\rho_{j,\lambda})(I - T_{j,\lambda})\bigr] & \le 2^{n(1-\alpha)R_{j,\lambda}}\, Q_{\Petz,\alpha}({\cal E}_{n}(\rho_{j,\lambda})\|\sigma_{n}). \label{eq: NP type I raw}
		\end{align}
		Substituting~\eqref{eq: Rjlambda} into~\eqref{eq: NP type II raw} yields
		item~(ii). Substituting into~\eqref{eq: NP type I raw} gives
		\begin{align}
			\tr\!\left[{\cal E}_{n}(\rho_{j,\lambda})(I-T_{j,\lambda})\right] & \le 2^{\frac{1-\alpha}{\alpha}n\left(r-\frac{1}{n}D_{\Petz,\alpha}({\cal E}_n(\rho_{j,\lambda})\|\sigma_n)\right)}\\
			& \le 2^{\frac{1-\alpha}{\alpha}n\left(r-\frac{1}{n}D_{\Sand,\alpha}({\cal E}_n(\rho_{j,\lambda})\|\sigma_n)\right)},
		\end{align}
		where the second inequality follows since 
		$D_{\Petz,\alpha}\geq D_{\Sand,\alpha}$ for $\alpha\in(0,1)$ (e.g.~\cite[Eq.~(7.5.45)]{khatri2024principlesquantumcommunicationtheory}).

		By~\cite[Lemma 3]{hayashi2016correlation}, we have the relation for the sandwiched \Renyi divergence between the original state and the pinched state:
		\begin{align}
			D_{\Sand,\alpha}(\rho_{j,\lambda}\|\sigma_{n}) \leq D_{\Sand,\alpha}({\cal E}_{n}(\rho_{j,\lambda})\|\sigma_{n})+\log d_{n} . \label{eq: pinching Sand bound}
		\end{align}

		Since $\rho_{j,\lambda}\in \sA_{n}$,
		the definition of the divergence between a set and a state
		(Definition~\ref{def: divergence between two sets}) gives
		$D_{\Sand,\alpha}(\rho_{j,\lambda}\|\sigma_{n}) \geq D_{\Sand,\alpha}(\sA_{n}
		\|\sigma_{n})$. Combining these two bounds establishes item~(iii).
	\end{proof}

	Using the blockwise tests from Lemma~\ref{lem: blockwise NP tests}, we now
	assemble them into a single universal test with uniform error control against all states in $\sA_{n}$.

	\begin{boxlemma}
		[Universal test via Schur--Weyl pinching.]\label{lem: universal test pinching}
		Let $\cH$ be a finite-dimensional Hilbert space, $n \in \NN$, and
		$\alpha \in (0,1)$. Let $\sA_{n} \subseteq \density(\cH^{\ox n})$
		be a set of quantum states and $\sigma_{n} \in \density(\cH^{\ox n})$ be a
		permutation-invariant state. 
		Then for any $r \in \RR$, there exists a test $0 \le M_{n} \le I$ such that:
		\begin{enumerate}[label=\textnormal{(\roman*)}]
			\item (Universal Type-II bound)
			\begin{align}
				\tr[\sigma_{n}\, M_{n}]  \le 2^{-nr}; \label{eq: universal test type II}
			\end{align}
			\item (Universal Type-I bound)
			\begin{align}
				\sup_{\rho_n \in \sA_n}\tr[\rho_{n}(I - M_{n})] \le d_{n}^{\frac{2}{\alpha}}\, 2^{\frac{1-\alpha}{\alpha}n\left(r - \frac{1}{n}D_{\Sand,\alpha}(\sA_{n}\|\sigma_{n})\right)}. \label{eq: universal test type I}
			\end{align}
		\end{enumerate}
	\end{boxlemma}

	\begin{proof}
		We use the same notation as in Lemma~\ref{lem: blockwise NP tests}.
		Define an operator $M_{n}$ by
		\begin{align}
			M_{n} := \sum_{j',\lambda'}m_{j',\lambda'}\, P_{j',\lambda'}, \quad \text{where}\quad m_{j',\lambda'}:= \max_{j,\lambda}\frac{\tr\!\left[T_{j,\lambda}P_{j',\lambda'}\right]}{\tr P_{j',\lambda'}}. \label{eq: Mn def}
		\end{align}
		We then show that this is the universal test satisfying items (i) and (ii). 
		Before doing so, we first show the relations that
		\begin{align}
			T_{j,\lambda} \leq M_{n} \leq \sum_{j,\lambda}T_{j,\lambda}, \qquad \forall (j,\lambda). \label{eq: Mn bounds}
		\end{align}

		First note that $\sigma_{n}$ is diagonal in $\{P_{j,\lambda}
		\}_{j,\lambda}$ by construction,
		and ${\cal E}_{n}(\rho_{j,\lambda})$ is also diagonal in the same
		decomposition because ${\cal E}_{n}$ is the corresponding pinching map. Hence the Neyman--Pearson test $T_{j,\lambda}=\bigl\{{\cal E}_{n}(\rho_{j,\lambda}) \geq 2^{nR_{j,\lambda}}
		\sigma_{n}\bigr\}$ is also diagonal in the
		same decomposition. Therefore, for each fixed $(j,\lambda)$, there exist
		coefficients $t^{(j,\lambda)}_{j',\lambda'}\in\{0,1\}$ such that
		\begin{align}
			T_{j,\lambda}=\sum_{j',\lambda'}t^{(j,\lambda)}_{j',\lambda'}\,P_{j',\lambda'}.
		\end{align}
		In particular,
		\begin{align}
			\frac{\tr[T_{j,\lambda}P_{j',\lambda'}]}{\tr P_{j',\lambda'}}=t^{(j,\lambda)}_{j',\lambda'}\in\{0,1\}.
		\end{align}

		This implies that
		\begin{align}
			m_{j',\lambda'}=\max_{j,\lambda}t^{(j,\lambda)}_{j',\lambda'}\in\{0,1\}.\label{eq: Mn leq sum P}
		\end{align}
		Fix any $(j,\lambda)$. Then for every $(j',\lambda')$ we have $t^{(j,\lambda)}
		_{j',\lambda'}\le m_{j',\lambda'}$, and thus
		\begin{align}
			M_{n}-T_{j,\lambda}= \sum_{j',\lambda'}\bigl(m_{j',\lambda'}-t^{(j,\lambda)}_{j',\lambda'}\bigr)P_{j',\lambda'}\ge 0.
		\end{align}
		This proves the lower bound in~\eqref{eq: Mn bounds}.

		From ~\eqref{eq: Mn leq sum P}, we have
		$m_{j',\lambda'}\le \sum_{j,\lambda}t^{(j,\lambda)}_{j',\lambda'}$ for each
		$(j',\lambda')$. Therefore,
		\begin{align}
			M_{n} = \sum_{j',\lambda'}m_{j',\lambda'}\, P_{j',\lambda'}\le \sum_{j',\lambda'}\sum_{j,\lambda}t^{(j,\lambda)}_{j',\lambda'}\, P_{j',\lambda'}= \sum_{j,\lambda}T_{j,\lambda}.
		\end{align}
		This proves the upper bound in~\eqref{eq: Mn bounds}.

		We now check that $M_{n}$ satisfies items (i) and (ii). For any $r \in \RR$, we define 
		\begin{align} 
			\tilde{r} := r + \frac{1}{n}\log d_{n},
		\end{align} 
		and apply
		Lemma~\ref{lem: blockwise NP tests} with this $\tilde{r}$ and an
		arbitrary $\epsilon > 0$.
		Since $\sigma_{n} \ge 0$, this gives
		\begin{align}
			\tr[\sigma_{n} M_{n}] & \le \sum_{j,\lambda}\tr[\sigma_{n} T_{j,\lambda}] \le \sum_{j,\lambda}2^{-n\tilde{r}}= d_{n}\,2^{-n\tilde{r}}= 2^{-nr}. \label{eq: assembled type II}
		\end{align}
		where the second inequality follows from~\eqref{eq: feasible test tmp6}. This confirms item (i). 
		
		Moreover, we have for every $(j,\lambda)$,
		\begin{align}
			\tr[\cE_{n}(\rho_{j,\lambda})(I-M_{n})] & \le \tr\!\left[\cE_{n}(\rho_{j,\lambda})(I-T_{j,\lambda})\right]\\
			& \le d_{n}^{\frac{1-\alpha}{\alpha}}\, 2^{\frac{1-\alpha}{\alpha}n\left(\tilde{r}-\frac{1}{n}D_{\Sand,\alpha}(\sA_n\|\sigma_n)\right)}\\
			& = d_{n}^{\frac{2(1-\alpha)}{\alpha}}\, 2^{\frac{1-\alpha}{\alpha}n\left(r-\frac{1}{n}D_{\Sand,\alpha}(\sA_n\|\sigma_n)\right)},\label{eq: assembled blockwise type I}
		\end{align}
		where the first line uses the lower bound in~\eqref{eq: Mn bounds} and the second line applies~\eqref{eq: blockwise type I} with $\tilde{r}$ in place of $r$ and the last line uses the definition of $\tilde{r}$.

		We now bound the Type-I error uniformly over $\sA_{n}$. For any $\rho_{n}\in\sA
		_{n}$:
		\begin{align}
			\tr[\rho_{n}(I-M_{n})] & \leq d_{n}\,\tr[{\cal E}_{n}(\rho_{n})(I-M_{n})] \\
			& = d_{n} \sum_{j,\lambda}\tr[\rho_{n} P_{j,\lambda}]\,(1 - m_{j,\lambda})                                                                                                  \\
			& \leq d_{n}\sum_{j,\lambda}\bigl(\tr[\rho_{j,\lambda}P_{j,\lambda}] + \epsilon\bigr)\,(1 - m_{j,\lambda})                                                                  \\
			& \leq d_{n}\sum_{j,\lambda}\tr[\rho_{j,\lambda}P_{j,\lambda}]\,(1 - m_{j,\lambda}) + d_{n}^{2} \epsilon                                                                    \\
			& \leq d_{n}\sum_{j,\lambda}\tr\!\left[{\cal E}_{n}(\rho_{j,\lambda})(I-M_{n})\right] + d_{n}^{2}\,\epsilon                                                                 \\
			& \leq d_{n}\sum_{j,\lambda}d_{n}^{\frac{2(1-\alpha)}{\alpha}}\, 2^{\frac{1-\alpha}{\alpha}n\left(r-\frac{1}{n}D_{\Sand,\alpha}(\sA_n\|\sigma_n)\right)}+ d_{n}^{2}\,\epsilon  \\
			& \leq d_{n}^{\frac{2}{\alpha}}\, 2^{\frac{1-\alpha}{\alpha}n\left(r-\frac{1}{n}D_{\Sand,\alpha}(\sA_n\|\sigma_n)\right)}+ d_{n}^{2}\,\epsilon. \label{eq: uniform type I eps}
		\end{align}
		The first line is the pinching inequality that $\rho_n \leq d_n \cE_n(\rho_n)$ (see e.g.~\cite[Lemma 3.10]{hayashi2017quantum}). For the second line, we
		expand $\tr[{\cal E}_{n}(\rho_{n})(I-M_{n})]$ using the block decomposition and
		the orthogonality of the projections $P_{j,\lambda}$ gives
		\begin{align}
			\tr[{\cal E}_{n}(\rho_{n})(I-M_{n})] & = \sum_{j,\lambda}\tr[\rho_{n} P_{j,\lambda}]\,(1 - m_{j,\lambda}).
		\end{align}
		The third line uses the approximate maximality of $\rho_{j,\lambda}$: by~\eqref{eq: approx max},
		$\tr[\rho_{n} P_{j,\lambda}] \leq \tr[\rho_{j,\lambda}P_{j,\lambda}] + \epsilon$
		for every $(j,\lambda)$. The fourth line pulls out the constant $\epsilon$ outside
		of summation. For the fifth line, we use the same block-diagonal expansion applied
		to $\rho_{j,\lambda}$:
		\begin{align}
			\tr[{\cal E}_{n}(\rho_{j,\lambda})(I-M_{n})] = \sum_{j',\lambda'}\tr[\rho_{j,\lambda}P_{j',\lambda'}](1 - m_{j',\lambda'}) \geq \tr[\rho_{j,\lambda}P_{j,\lambda}](1 - m_{j,\lambda}),
		\end{align}
		where the inequality holds because every term
		$\tr[\rho_{j,\lambda}P_{j',\lambda'}](1 - m_{j',\lambda'})$ in the sum is
		non-negative. The sixth line applies~\eqref{eq: assembled blockwise type I}, and the last
		uses $\sum_{j,\lambda}1 \leq d_{n}$.

		As~\eqref{eq: uniform type I eps} holds for any $\epsilon > 0$ and $\rho_n \in \sA_n$, we conclude
		\begin{align}
			\sup_{\rho_n \in \sA_n}\tr[\rho_{n}(I-M_{n})] \leq d_{n}^{\frac{2}{\alpha}}\,2^{\frac{1-\alpha}{\alpha}n\left(r-\frac{1}{n}D_{\Sand,\alpha}(\sA_n\|\sigma_n)\right)}.
		\end{align}
		This confirms item (ii) and completes the proof.
	\end{proof}

	We are now in a position to state the quantum Hoeffding lower bound. We provide two versions, one in terms of the sandwiched \Renyi divergence in Proposition~\ref{prop: quantum Hoeffding lower bound} and the other in terms of the reverse sandwiched \Renyi divergence in Theorem~\ref{thm: quantum Hoeffding lower bound}. 

	\begin{boxproposition}
		[Quantum Hoeffding lower bound via $D_{\Sand,\alpha}$.]\label{prop: quantum Hoeffding lower bound} Let $\cH$ be a finite-dimensional
		Hilbert space. Let $\sA = \{\sA_{n}\}_{n\in \NN}$ and
		$\sB = \{\sB_{n}\}_{n\in \NN}$ be sequences of sets of quantum states such that $\sA_{n}, \sB_{n} \subseteq \density(\cH^{\ox n})$ for every $n \in \NN$. Suppose that
		\begin{itemize}
			\item $\sA$ satisfies assumption (C5$'$) $\perm$-invariance;

			\item $\sB$ satisfies assumptions (C1) convexity and (C5) $\perm$-closedness.
		\end{itemize}
		Then, for every $r>0$, it holds that
		\begin{align}
			\liminf_{n\to \infty}-\frac{1}{n}\log \alpha_{n, r}(\sA_{n}\| \sB_{n}) & \geq \sup_{\alpha \in (0,1)}\frac{\alpha-1}{\alpha}\bigg(r - \underline{D}_{\Sand,\alpha}^{\infty}(\sA\|\sB)\bigg).  \label{NK3}
		\end{align}
	\end{boxproposition}

	\begin{proof}

		By Lemma~\ref{lem: reduction perm inv} and the required assumptions, we have
		\begin{align}\label{eq: lower bound proof tmp1}
			\alpha_{n,r}(\sA_{n}\|\sB_{n}) = \sup_{\sigma_n\in\cT_{\perm}(\sB_{n})}\alpha_{n,r}(\sA_{n}\|\sigma_{n}).
		\end{align}
		Fix any $\sigma_{n}\in\cT_{\perm}(\sB_{n})$ and $\alpha \in (0,1)$.
		Lemma~\ref{lem: universal test pinching} implies that
		\begin{align}
			\alpha_{n,r}(\sA_{n}\|\sigma_{n}) \leq d_{n}^{\frac{2}{\alpha}}\, 2^{\frac{1-\alpha}{\alpha}n\left(r-\frac{1}{n}D_{\Sand,\alpha}(\sA_n\|\sigma_n)\right)}.
		\end{align}
		By the assumptions for $\sB_n$, we have the inclusion that $\cT_{\group}(\sB_n) \subseteq \sB_n$. So $\sigma_n \in \sB_n$ and therefore $D_{\Sand,\alpha}(\sA_{n}\|\sigma_{n}) \ge D_{\Sand,\alpha}(\sA_{n}\|\sB_{n})$ by definition, so
		\begin{align}
			\alpha_{n,r}(\sA_{n}\|\sigma_{n}) \leq d_{n}^{\frac{2}{\alpha}}\, 2^{\frac{1-\alpha}{\alpha}n\left(r-\frac{1}{n}D_{\Sand,\alpha}(\sA_n\|\sB_n)\right)}.
		\end{align}
		Since the right-hand side does not depend on the specific choice of $\sigma_{n} \in \cT_{\perm}(\sB_{n})$, taking the supremum and applying~\eqref{eq: lower bound proof tmp1} yields
		\begin{align}
			\alpha_{n,r}(\sA_{n}\|\sB_{n}) \leq d_{n}^{\frac{2}{\alpha}}\, 2^{\frac{1-\alpha}{\alpha}n\left(r-\frac{1}{n}D_{\Sand,\alpha}(\sA_n\|\sB_n)\right)}.
		\end{align}
		Taking $-\frac{1}{n}\log$ of both sides and using $\frac{\log d_{n}}{n}\to 0$
		as $n \to \infty$, we obtain
		\begin{align}
			\liminf_{n\to \infty}-\frac{1}{n}\log \alpha_{n, r}(\sA_{n}\| \sB_{n}) \geq \frac{\alpha-1}{\alpha}\Big(r - \underline{D}_{\Sand,\alpha}^{\infty}(\sA\|\sB)\Big).
		\end{align}
		Since this holds for every $\alpha \in (0,1)$, taking the supremum over $\alpha$
		concludes the proof.
	\end{proof}

		It is interesting that this lower bound closely parallels the strong converse bound in~\cite[Theorem 32]{fang2025error},~\footnote{The presented bound combines~\cite[Theorem 32]{fang2025error} and~\cite[Proof of Lemma 23]{fang2025error}.}
	\begin{align}\label{eq: strong converse}
		\liminf_{n\to \infty}-\frac{1}{n}\log \left(1-\alpha_{n,r}(\sA_{n}\|\sB_{n})\right) \geq \sup_{\alpha > 1}\frac{\alpha-1}{\alpha}\bigg(r - \overline{D}_{\Sand,\alpha}^{\infty}(\sA\|\sB)\bigg).
	\end{align}
	Both~\eqref{NK3} and~\eqref{eq: strong converse} are expressed in terms of sandwiched \Renyi divergences, but over complementary parameter ranges. 

	\begin{boxtheorem}
		[Quantum Hoeffding lower bound via $D_{\RSand,\alpha}$.]\label{thm: quantum Hoeffding lower bound} Let $\cH$ be a finite-dimensional
		Hilbert space. Let $\sA = \{\sA_{n}\}_{n\in \NN}$ and
		$\sB = \{\sB_{n}\}_{n\in \NN}$ be sequences of sets of quantum states such that $\sA_{n}, \sB_{n} \subseteq \density(\cH^{\ox n})$ for every $n \in \NN$. Suppose that
		\begin{itemize}
			\item $\sA$ satisfies assumptions (C1) convexity and (C5) $\perm$-closedness;

			\item $\sB$ satisfies assumption (C5$'$) $\perm$-invariance.
		\end{itemize}
		Then, for every $r>0$, it holds that
		\begin{align}
			\liminf_{n\to \infty}-\frac{1}{n}\log \alpha_{n, r}(\sA_{n}\| \sB_{n}) & \geq \sup_{\alpha \in (0,1)}\frac{\alpha-1}{\alpha}\bigg(r - \underline{D}_{\RSand,\alpha}^{\infty}(\sA\|\sB)\bigg). \label{NK4}
		\end{align}
	\end{boxtheorem}
	\begin{proof}
		The proof follows similar steps to those of Proposition~\ref{prop: quantum Hoeffding lower bound}, but with the roles of $\sA$ and $\sB$ exchanged.
		Fix $\alpha \in (0,1)$ and $r > 0$.
		Applying Lemma~\ref{lem: reduction perm inv} with $\sA$ and $\sB$ swapped (here $\sA$ satisfies~(C1) and~(C5), and $\sB$ satisfies~(C5$'$)) gives
		\begin{align}
			\alpha_{n,r}(\sA_{n}\|\sB_{n}) = \sup_{\rho_n\in\cT_{\perm}(\sA_{n})}\alpha_{n,r}(\rho_{n}\|\sB_{n}).
		\end{align}
		Define the auxiliary rate
		\begin{align}
			r' := -\frac{\alpha}{1-\alpha}\, r + \frac{1}{n}D_{\Sand,\alpha}(\sB_{n}\|\sA_{n}) - \frac{2}{\alpha}\,\frac{\log d_{n}}{n},
		\end{align}
		which satisfies
		\begin{align}
			2^{-nr'}  & = d_{n}^{\frac{2}{\alpha}}\, 2^{-n\left(-\frac{\alpha r}{1-\alpha} + \frac{1}{n}D_{\Sand,\alpha}(\sB_n\|\sA_n)\right)},\label{eq: 2 nr'}       \\
			2^{-nr} & = d_{n}^{\frac{2}{\alpha}}\, 2^{\frac{1-\alpha}{\alpha}n\left(r' - \frac{1}{n}D_{\Sand,\alpha}(\sB_n\|\sA_n)\right)}.\label{eq: r' r relation}
		\end{align}
		Fix any $\rho_{n}\in\cT_{\perm}(\sA_{n})$. Applying Lemma~\ref{lem: universal test pinching} with the set $\sB_{n}$, the permutation-invariant state $\rho_{n}$, and rate $r'$ yields a test $0 \le M_{n} \le I$ satisfying
		\begin{align}
			\tr[\rho_{n}\, M_{n}]                                  & \le 2^{-nr'}, \label{eq: coro type II}\\
			\sup_{\sigma_n \in \sB_n}\tr[\sigma_{n}(I - M_{n})] & \le d_{n}^{\frac{2}{\alpha}}\, 2^{\frac{1-\alpha}{\alpha}n\left(r' - \frac{1}{n}D_{\Sand,\alpha}(\sB_n\|\rho_n)\right)}. \label{eq: coro type I}
		\end{align}
		Since $\sA_n$ is convex and permutation-closed, we have $\rho_{n} \in \sA_{n}$, and hence $D_{\Sand,\alpha}(\sB_{n}\|\rho_{n}) \ge D_{\Sand,\alpha}(\sB_{n}\|\sA_{n})$. Substituting this together with~\eqref{eq: 2 nr'} and~\eqref{eq: r' r relation} into the above bounds gives
		\begin{align}
			\tr[\rho_{n}\, M_{n}]    & \leq d_{n}^{\frac{2}{\alpha}}\, 2^{-n\left(-\frac{\alpha r}{1-\alpha} + \frac{1}{n}D_{\Sand,\alpha}(\sB_n\|\sA_n)\right)},\label{eq: coro type II simplified} \\
			\sup_{\sigma_n \in \sB_n}\tr[\sigma_{n}(I - M_{n})] & \leq 2^{-nr}. \label{eq: coro type I simplified}
		\end{align}

		The bound~\eqref{eq: coro type I simplified} shows that $I - M_{n}$ is a feasible test for the optimization defining $\alpha_{n,r}(\rho_n\|\sB_{n})$, with corresponding Type-I error $\tr[\rho_{n}\, M_{n}]$ controlled by~\eqref{eq: coro type II simplified}. Therefore,
		\begin{align}
			\label{eq: coro alpha bound}\alpha_{n,r}(\rho_{n}\|\sB_{n}) \leq d_{n}^{\frac{2}{\alpha}}\, 2^{-n\left(-\frac{\alpha r}{1-\alpha} + \frac{1}{n}D_{\Sand,\alpha}(\sB_n\|\sA_n)\right)}.
		\end{align}
		Since the right-hand side is independent of $\rho_{n} \in \cT_{\perm}(\sA_{n})$, taking the supremum yields
		\begin{align}
			\alpha_{n,r}(\sA_{n}\|\sB_{n}) \leq d_{n}^{\frac{2}{\alpha}}\, 2^{-n\left(-\frac{\alpha r}{1-\alpha} + \frac{1}{n}D_{\Sand,\alpha}(\sB_n\|\sA_n)\right)}.
		\end{align}
		Taking $-\frac{1}{n}\log$ of both sides and using $\frac{\log d_{n}}{n}\to 0$
		as $n \to \infty$, we obtain
		\begin{align}
			\label{eq: coro Sand form}
			\liminf_{n\to \infty}-\frac{1}{n}\log \alpha_{n,r}(\sA_{n}\|\sB_{n}) \geq \frac{\alpha r}{\alpha - 1}+ \underline{D}_{\Sand,\alpha}^{\infty}(\sB\|\sA).
		\end{align}
		To convert this into the reverse sandwiched form, we use the definition~\eqref{eq: sandwiched-R}, which gives
		\begin{align}
			\underline{D}_{\Sand,\alpha}^{\infty}(\sB\|\sA) = \frac{\alpha}{1-\alpha}\,\underline{D}_{\RSand,1-\alpha}^{\infty}(\sA\|\sB).
		\end{align}
		Substituting into~\eqref{eq: coro Sand form} yields
		\begin{align}
			\liminf_{n\to \infty}-\frac{1}{n}\log \alpha_{n,r}(\sA_{n}\|\sB_{n}) \geq \frac{\alpha}{\alpha-1}\Big(r - \underline{D}_{\RSand,1-\alpha}^{\infty}(\sA\|\sB)\Big).
		\end{align}
		Since this holds for every $\alpha \in (0,1)$, replacing $\alpha$ by $1-\alpha$ and taking the supremum over $\alpha$ establishes the lower bound.
	\end{proof}

	The above estimate, or more explicitly its form in~\eqref{eq: coro Sand form}, generalizes~\cite[Theorem 2]{Hayashi2025entanglement}, which treats the special case of testing i.i.d.\ copies of a state against the set of all separable states.
	This lower bound is not expected to be tight in general: for simple i.i.d.\ null and alternative hypotheses, it fails to recover the Hoeffding exponent governed by the Petz \Renyi divergence. Crucially for the present work, however, the reverse sandwiched form in~\eqref{NK4} is tight enough to yield a matching bound in Section~\ref{sec: matching bounds} for a class of composite hypothesis testing problems, including the one that underlies the operational interpretation of the reverse sandwiched \Renyi divergence.

	\subsection{Quantum Hoeffding upper bound}\label{sec: quantum Hoeffding upper bound}

	We next establish the quantum Hoeffding upper bound. As in the previous subsection, we first develop the technical ingredients and then state and prove Theorem~\ref{thm: quantum Hoeffding upper bound}. The strategy is to reduce the quantum problem to a classical large-deviation analysis and then lift it to composite sets. Specifically, the Nussbaum--Szko\l{}a distributions recast the testing problem in terms of classical log-likelihood ratios, to which a tailored G\"{a}rtner--Ellis theorem (Lemma~\ref{lem: Gartner-Ellis}) supplies the basic large-deviation estimate. This then leads to Proposition~\ref{lem: Hoeffding bound sets}, which controls the optimal Type-I exponent for two state sequences in terms of their regularized Petz \Renyi divergence. Finally, Lemma~\ref{lem: averaged state pair} constructs a universal tight pair by averaging $\alpha$-dependent near-minimizers over a refining grid, so that a single sequence simultaneously attains the regularized divergence at every $\alpha \in (0,1)$ of continuity, lifting the bound from individual pairs to composite sets.

	Let the spectral decompositions of $\rho$ and $\sigma$ be given by
	\begin{align}
		\rho = \sum_{i=1}^{d} \lambda_{i} \ketbra{u_i}{u_i}\quad \text{and}\quad \sigma = \sum_{j=1}^{d} \mu_{j} \ketbra{v_j}{v_j},
	\end{align}
	where $\ket{u_i}$ and $\ket{v_j}$ are two orthonormal bases and $\lambda_{i}$ and
	$\mu_{j}$ are the corresponding eigenvalues, respectively. Then the Nussbaum-Szko\l{}a
	distributions of $\rho,\sigma$ are defined by~\cite{Nussbaum2009}
	\begin{align}
		(P_{\rho,\sigma})(i,j) = \lambda_{i}|\<u_{i}|v_{j}\>|^{2} \quad \text{and}\quad (Q_{\rho,\sigma})(i,j) = \mu_{j} |\<u_{i}|v_{j}\>|^{2},
	\end{align}
	where $i,j \in \{1,\ldots, d\}$.

	Given a sequence of random variables
	$\{X_{n}\}_{n\in \NN}$, the asymptotic cumulant generating function is defined
	as
	\begin{align}
		\Lambda_{X}(t):= \lim_{n\to \infty}\frac{1}{n}\log \mathbb{E}\left[\exp({nt X_n})\right],
	\end{align}
	provided that the limit exists.

	We will use the G\"{a}rtner--Ellis theorem~\cite[Theorem~2.3.6]{Dembo2010} in the following slightly tailored form, which suffices for our later arguments.

	\begin{lemma}
		[{\cite[Lemma 29]{fang2025error}}.]\label{lem: Gartner-Ellis} Assume that the
		asymptotic cumulant generating function $t \mapsto \Lambda_{X}(t)$ exists
		for all $t\in\RR$, and that $\Lambda_{X}$ is strictly convex and $C^{1}$-continuous on $(a,b)$. Fix an open
		interval $(a,b)\subseteq \RR$. Then, for any
		$x \in \bigl(\Lambda_{X}'(a),\, \Lambda_{X}'(b)\bigr)$,
		\begin{align}
			\limsup_{n\to \infty}-\frac{1}{n}\log \Pr\{X_{n} > x\} \leq \sup_{t\in (a,b)}\{tx - \Lambda_{X}(t)\}.
		\end{align}
	\end{lemma}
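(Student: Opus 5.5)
The plan is to prove this by the standard exponential change-of-measure (tilting) argument behind the Gärtner--Ellis lower bound. I will use it only at a single tilting parameter inside $(a,b)$, since only local smoothness of $\Lambda_X$ is assumed. Write $\Lambda_n(t):=\frac1n\log\EE[\exp(ntX_n)]$, so that $\Lambda_n(t)\to\Lambda_X(t)$ for every $t\in\RR$. Throughout, $\exp$ and $\log$ must use the same base, and the argument is insensitive to which base that is.

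First I would fix $x\in(\Lambda_X'(a),\Lambda_X'(b))$, and choose $x'>x$ and $\delta>0$ such that $x'-\delta>x$ and $x'\in(\Lambda_X'(a),\Lambda_X'(b))$. Strict convexity and $C^1$-continuity on $(a,b)$ make $\Lambda_X'$ continuous and strictly increasing there. Hence there is a unique $t^*\in(a,b)$ with $\Lambda_X'(t^*)=x'$, and $t^*$ depends continuously on $x'$.

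Next I would define the tilted laws $dQ_n := \exp(nt^*X_n)\,dP/\EE[\exp(nt^*X_n)]$. Under $Q_n$ the scaled cumulant generating function is $s\mapsto\Lambda_X(t^*+s)-\Lambda_X(t^*)$, which is differentiable at $s=0$ with derivative $x'$.

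Applying the exponential Chebyshev inequality under $Q_n$ with a small $s>0$ gives
\begin{align}
\limsup_{n\to\infty}\frac1n\log Q_n\{X_n\ge x'+\delta\}\le -\bigl[s(x'+\delta)-\Lambda_X(t^*+s)+\Lambda_X(t^*)\bigr].
\end{align}
The bracket is strictly positive for small $s$, because its derivative at $s=0$ equals $\delta>0$. The symmetric bound with $s<0$ controls $Q_n\{X_n\le x'-\delta\}$. Therefore $Q_n\{|X_n-x'|<\delta\}\to1$.

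Then I would undo the tilt:
\begin{align}
\Pr\{X_n>x\}\ge \Pr\{|X_n-x'|<\delta\}=\EE_{Q_n}\!\left[e^{-nt^*X_n}\mathbf 1_{\{|X_n-x'|<\delta\}}\right]e^{n\Lambda_n(t^*)}\ge e^{n(\Lambda_n(t^*)-t^*x'-|t^*|\delta)}\,Q_n\{|X_n-x'|<\delta\}.
\end{align}
Taking $-\frac1n\log$ and $\limsup$ yields
\begin{align}
\limsup_{n\to\infty}-\frac1n\log\Pr\{X_n>x\}\le t^*x'-\Lambda_X(t^*)+|t^*|\delta.
\end{align}
Since $t^*x'-\Lambda_X(t^*)\le t^*x-\Lambda_X(t^*)+|t^*|(x'-x)\le \sup_{t\in(a,b)}\{tx-\Lambda_X(t)\}+|t^*|(x'-x)$, it then suffices to let $\delta\to0$ and then $x'\downarrow x$. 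Along this limit $t^*$ stays bounded: it converges to the unique solution of $\Lambda_X'(t)=x$ in $(a,b)$. This gives the claim.

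The main obstacle I expect is that $\Lambda_X$ is only assumed smooth and strictly convex on $(a,b)$, not essentially smooth on all of $\RR$. For this reason the textbook Gärtner--Ellis lower bound cannot be quoted directly. The tilting argument above avoids the issue, because it uses only the existence of $\Lambda_X$ near $t^*$ and its differentiability at $t^*$; the latter is exactly what makes the tilted concentration step work.
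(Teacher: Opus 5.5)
Your tilting argument is correct and complete. Three steps carry it. The choice $x'-\delta>x$ handles the strict inequality. Differentiability of $\Lambda_X$ at $t^*$ is exactly what makes both Chernoff tails under $Q_n$ decay. The bound $t^*x'-\Lambda_X(t^*)\le\sup_{t\in(a,b)}\{tx-\Lambda_X(t)\}+|t^*|(x'-x)$, together with $t^*(x')\to t^*(x)$, closes the argument.

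The paper gives no proof of its own. It imports the statement from~\cite[Lemma 29]{fang2025error} as a tailored form of the G\"{a}rtner--Ellis theorem~\cite[Theorem~2.3.6]{Dembo2010}. That route reads the claim off the open-set lower bound $\liminf_n\frac1n\log\Pr\{X_n\in G\}\ge-\inf_{y\in G\cap\mathcal F}\Lambda_X^*(y)$. It observes that each $y\in(\Lambda_X'(a),\Lambda_X'(b))$ is an exposed point of $\Lambda_X^*$ with exposing hyperplane $t_y\in(a,b)$. It then uses that $\Lambda_X^*(y)=t_y y-\Lambda_X(t_y)$ is continuous there, so one may let $y\downarrow x$.

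You instead rerun the change-of-measure mechanism behind that theorem at a single parameter. A one-dimensional Chernoff estimate replaces the exposed-point machinery and the general upper bound applied to the tilted laws. The citation route is shorter. Yours is self-contained and uses $\Lambda_X$ only on a neighborhood of $t^*\in(a,b)$. It therefore dispenses with the global hypotheses of the textbook theorem: existence of $\Lambda_X$ on all of $\RR$, with $0$ in the interior of its effective domain. This matters here, because Lemma~\ref{LH1} applies the statement with $(a,b)=(0,1)$ while only assuming that $\phi$ exists on $(0,1)$.

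One small correction to your closing remark. Essential smoothness is needed only for the full large deviation principle, part (c) of the G\"{a}rtner--Ellis theorem. The exposed-point lower bound, part (b), does not need it, and that is precisely the part the tailored lemma is drawn from. What your direct argument actually removes is the global existence requirement, not an essential-smoothness obstacle.
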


	The following lemma extends~\cite[Lemma 30]{fang2025error} to general sequences
	of states.

	\begin{boxlemma}
		\label{LH1} Let $\cH$ be a finite-dimensional Hilbert space and let $\rho^{(n)}
		, \sigma^{(n)}\in \density(\cH^{\ox n})$ be sequences of quantum states.
		Define
		\begin{align}
			\phi(s):= \lim_{n\to \infty}\frac{1}{n}\log \tr (\rho^{(n)})^{1-s}(\sigma^{(n)})^{s},
		\end{align}
		and assume that $\phi(s)$ exists and is finite for each $s\in(0,1)$. Assume further
		that $\phi(s)$ is strictly convex and $C^{1}$-continuous on $(0,1)$.
		If $R \in (-\phi'(1), -\phi'(0))$, then for any $0 \leq T_{n} \leq I$,
		\begin{align}
			\limsup_{n\to \infty}-\frac{1}{n}\log \left[\tr e^{-nR}\rho^{(n)}(I-T_{n}) +\tr\sigma^{(n)}T_{n} \right] \leq \max_{s\in (0,1)}(1-s)R -\phi(s).
		\end{align}
	\end{boxlemma}

	\begin{proof}
		Let ${P^{(n)}}$ and ${Q^{(n)}}$ be the Nussbaum-Szko\l{}a distributions of
		$\rho^{(n)}$ and $\sigma^{(n)}$. Let
		\begin{align}
			S_{n} = \left\{e^{-nR }{P^{(n)}}>{Q^{(n)}}\right\},
		\end{align}
		be a likelihood ratio test. Consider the random variable
		\begin{align}
			X_{n}(x) := \frac{1}{n}\left(\log{Q^{(n)}}(x) - \log{P^{(n)}}(x)\right),
		\end{align}
		where $x$ is drawn from the distribution ${P^{(n)}}$. Then we have the asymptotic
		cumulant generating function of the random variable $X_{n}$ as,
		\begin{align}
			\lim_{n\to \infty} & \frac{1}{n}\log \sum_{x}{P^{(n)}}(x) \exp \left(s n X_{n}(x) \right)    \\
			                   & = \lim_{n\to \infty}\frac{1}{n}\log \tr{(Q^{(n)})}^{s}{(P^{(n)})}^{1-s}       \\
			                   & = \lim_{n\to \infty}\frac{1}{n}\log \tr (\sigma^{(n)})^{s} (\rho^{(n)})^{1-s} \\
			                   & =\phi(s),
		\end{align}
		where the second equality is a standard property of the Nussbaum-Szko\l{}a distributions
		(cf.~\cite[Proposition 1]{audenaert2008asymptotic}). 
		Applying the G\"{a}rtner--Ellis theorem in Lemma~\ref{lem: Gartner-Ellis}
		for the random variable $X_{n}$, interval $(0,1)$ and $x=-R$, we have
		\begin{align}
			\limsup_{n\to \infty}- \frac{1}{n}\log \Pr\{X_{n} \geq -R\} \leq \sup_{s\in (0,1)}-s R -\phi(s). \label{eq: NS relation tmp1}
		\end{align}
		Similarly, consider the random variable
		\begin{align}
			Y_{n}(x):= \frac{1}{n}\left(\log{P^{(n)}}(x) - \log{Q^{(n)}}(x)\right),
		\end{align}
		where $x$ is drawn from the distribution ${Q^{(n)}}$. Then we have the
		asymptotic cumulant generating function of the random variable $Y_{n}$ as,
		\begin{align}
			\lim_{n\to \infty} & \frac{1}{n}\log \sum_{x}{Q^{(n)}}(x) \exp \left(t n Y_{n}(x)\right)    \\
			& = \lim_{n\to \infty}\frac{1}{n}\log \tr{(Q^{(n)})}^{1-t}{(P^{(n)})}^{t}      \\
			& = \lim_{n\to \infty}\frac{1}{n}\log \tr (\sigma^{(n)})^{1-t}(\rho^{(n)})^{t} \\
			& =\phi(1-t).
		\end{align}
		Applying again the G\"{a}rtner--Ellis theorem in Lemma~\ref{lem: Gartner-Ellis}
		for the random variable $Y_{n}$, interval $(0,1)$ and $x=R$, we have
		\begin{align}
			\limsup_{n\to \infty}-\frac{1}{n}\log \Pr\{Y_{n} > R\} & \leq \sup_{t \in (0,1)}t R -\phi(1-t).\label{eq: NS relation tmp2}
		\end{align}
		By direct calculation, we have the relations
		\begin{align}
			\limsup_{n\to \infty}- \frac{1}{n}\log \Pr\{X_{n} \geq -R\} & = \limsup_{n\to \infty}-\frac{1}{n}\log \tr P^{(n)}(I-S_{n}),\label{eq: NS relation tmp3} \\
			\limsup_{n\to \infty}- \frac{1}{n}\log \Pr\{Y_{n} > R\}     & = \limsup_{n\to \infty}-\frac{1}{n}\log \tr Q^{(n)}S_{n}. \label{eq: NS relation tmp4}
		\end{align}
		Moreover, the Nussbaum-Szko\l{}a theorem (cf.~\cite[Lemma 3.4]{hayashi2017quantum})
		implies that for any test $T_{n}$,
		\begin{align}
			\label{eq: NS relation tmp5}\tr e^{-nR}\rho^{(n)}(I-T_{n}) & + \tr \sigma^{(n)}T_{n} \geq \frac{1}{2}\left(\tr e^{-nR }{P^{(n)}}(I- S_{n}) + \tr{Q^{(n)}}S_{n}\right).
		\end{align}
		Combining Eqs.~\eqref{eq: NS relation tmp1},~\eqref{eq: NS relation tmp2},~\eqref{eq: NS relation tmp3},~\eqref{eq: NS relation tmp4}
		and~\eqref{eq: NS relation tmp5}, we have
		\begin{align}
			\limsup_{n\to \infty} & -\frac{1}{n}\log \left[\tr e^{-nR}\rho^{(n)}(I-T_{n}) +\tr\sigma^{(n)}T_{n} \right]                          \\
			& \leq \limsup_{n\to \infty}- \frac{1}{n}\log \frac{1}{2}\left[\tr e^{-nR}P^{(n)}(I-S_{n}) +\tr{Q^{(n)}}S_{n}\right] \\
			& \leq \min \left\{ R+ \sup_{s\in (0,1)}-s R -\phi(s), \sup_{t \in (0,1)}t R -\phi(1-t)\right\}                      \\
			& =\sup_{s\in (0,1)}(1-s)R -\phi(s),\label{eq: lemma tmp1}
		\end{align}
		where the last equality follows by replacing $t$ with $1-s$. Finally, as the
		objective function in~\eqref{eq: lemma tmp1} is concave and its first
		derivative is given by $-R - \phi'(s)$, there is a unique critical point that
		achieves the maximum if $R \in (- \phi'(1), -\phi'(0))$. Therefore, the supremum
		is attained.
	\end{proof}

	\begin{boxlemma}
		\label{lem: phi continuity} Let $\cH$ be a finite-dimensional Hilbert space and
		let $\rho^{(n)}, \sigma^{(n)}\in \density(\cH^{\ox n})$ be sequences of quantum
		states. Define
		\begin{align}
			\phi_{n}(s):= \frac{1}{n}\log \tr (\rho^{(n)})^{1-s}(\sigma^{(n)})^{s}, \quad \text{and} \quad  \phi(s):= \lim_{n\to \infty}\phi_{n}(s),
		\end{align}
		and assume that $\phi(s)$ exists and is finite for each $s\in(0,1)$. Assume further
		that $\phi(s)$ is strictly convex and $C^{1}$-continuous on $(0,1)$. Consider two optimization problems:
		\begin{align}
			h_n(r) & := \max_{s\in (0,1)}\frac{-\phi_{n}(s)-s r}{1-s}, \quad \text{and} \quad 
			h(r)   := \max_{s\in (0,1)}\frac{-\phi(s)-s r}{1-s}.
		\end{align}
		Then
		for any
		$0< r < \lim_{s\to 0^+}-\phi'(s)$, we have that
		\begin{itemize}
			\item $h_n$ and $h$ are optimally achieved at unique critical points $s_{n,r}, s_r \in (0,1)$, respectively;
			\item $s_{n,r}$ and $s_r$ are continuous in $r$, and $s_{n,r} \to s_r$ as $n \to \infty$.
		\end{itemize} 
	\end{boxlemma}
	\begin{proof}
		We first show that the derivatives converge:
		\begin{align}
			\lim_{n\to \infty}\phi_{n}'(s)=\phi'(s) \label{BNA}
		\end{align}
		for $s\in (0,1)$. To see this, note that the convexity of $\phi_{n}$ gives, for
		any $\epsilon>0$,
		\begin{align}
			\frac{\phi_{n}(s)-\phi_{n}(s-\epsilon)}{\epsilon}\le \phi_{n}'(s) \le \frac{\phi_{n}(s+\epsilon)-\phi_{n}(s)}{\epsilon}.
		\end{align}
		Taking $\liminf$ and $\limsup$ as $n\to\infty$ (using the pointwise convergence
		$\phi_{n} \to \phi$), we obtain
		\begin{align}
			\frac{\phi(s)-\phi(s-\epsilon)}{\epsilon}\le \liminf_{n\to \infty}\phi_{n}'(s) \le \limsup_{n\to \infty}\phi_{n}'(s) \le \frac{\phi(s+\epsilon)-\phi(s)}{\epsilon}.
		\end{align}
		Sending $\epsilon\to 0$ and using the differentiability of $\phi$ yields~\eqref{BNA}.

		Define the objective function $f(s
		) := \frac{-\phi(s)-sr}{1-s}$ for $s\in(0,1)$. Its derivative is
		\begin{align}
			f'(s) = \frac{(s-1)\phi'(s) - \phi(s) - r}{(1-s)^{2}}.
		\end{align}
		Define the numerator as $g(s) := (s-1)\phi'(s) - \phi(s) - r$. 
		
		We claim that
		$g$ is strictly decreasing on $(0,1)$. Since $\phi$ is differentiable and
		strictly convex, it lies strictly above
		every tangent line~\cite[Eq. (3.3)]{boyd2004convex}: for any $a \neq b$ in $(0,1)$,
		\begin{align}
			\phi(a) > \phi(b) + \phi'(b)(a - b).\label{eq: strict convex tangent}
		\end{align}
		Now fix $0 < s_{1} < s_{2} < 1$. Applying~\eqref{eq: strict convex tangent}
		with $(a,b) = (s_{1}, s_{2})$ and $(a,b) = (s_{2}, s_{1})$ respectively:
		\begin{align}
			\phi(s_{1}) - \phi(s_{2}) &> \phi'(s_{2})(s_{1} - s_{2}), \label{eq: tangent 1}\\
			\phi(s_{2}) - \phi(s_{1}) &> \phi'(s_{1})(s_{2} - s_{1}). \label{eq: tangent 2}
		\end{align}
		Computing $g(s_{1}) - g(s_{2})$ directly:
		\begin{align}
			g(s_{1}) - g(s_{2}) &= (s_{1}-1)\phi'(s_{1}) - \phi(s_{1}) - (s_{2}-1)\phi'(s_{2}) + \phi(s_{2}).
		\end{align}
		Substituting the bound~\eqref{eq: tangent 2}, i.e.,
		$\phi(s_{2}) - \phi(s_{1}) > \phi'(s_{1})(s_{2} - s_{1})$, to replace
		$-\phi(s_{1}) + \phi(s_{2})$:
		\begin{align}
			g(s_{1}) - g(s_{2}) &> (s_{1}-1)\phi'(s_{1}) + \phi'(s_{1})(s_{2} - s_{1}) - (s_{2}-1)\phi'(s_{2}) \notag\\
			&= (s_{2}-1)\big(\phi'(s_{1}) - \phi'(s_{2})\big).
		\end{align}
		From~\eqref{eq: tangent 1} and~\eqref{eq: tangent 2}, adding the two
		inequalities gives $0 > (\phi'(s_{1}) - \phi'(s_{2}))(s_{2} - s_{1})$, so
		$\phi'(s_{1}) < \phi'(s_{2})$ (i.e., $\phi'$ is strictly increasing). Since
		$s_{2} - 1 < 0$ and $\phi'(s_{1}) - \phi'(s_{2}) < 0$, we obtain
		$g(s_{1}) - g(s_{2}) > 0$. Thus $g$ is strictly decreasing. 
		
		Since $\phi$ is $C^{1}$-continuous
		and strictly convex with $\phi(0) = \phi(1) = 0$, we have the boundary limits of $g$ as
		\begin{align}
			\lim_{s\to 1^-}g(s) = -r < 0, \qquad \lim_{s\to 0^+}g(s) = \lim_{s\to 0^+} -\phi'(s) - r > 0.
		\end{align}
		Since $g$ is continuous and strictly decreasing, there exists a unique $s_{r}
		\in(0,1)$ such that $g(s_{r}) = 0$, or equivalently,
		\begin{align}
			\label{eq: first order sr}r = (s_{r}-1) \phi'(s_{r}) - \phi(s_{r}).
		\end{align}
		Moreover, $g(s) > 0$ for $s < s_{r}$ and $g(s) < 0$ for $s > s_{r}$, so
		$f'(s) > 0$ for $s < s_{r}$ and $f'(s) < 0$ for $s > s_{r}$. Hence $f$ attains
		its unique maximum at $s_{r}$.

		An identical argument applied to $\phi_{n}$ (which is also convex and
		$C^{1}$) shows that $f_{n}(s) := \frac{-\phi_{n}(s)-sr}{1-s}$ has a unique maximizer
		$s_{n,r}\in (0,1)$ satisfying
		\begin{align}
			r = (s_{n,r}-1) \phi_{n}'(s_{n,r}) - \phi_{n}(s_{n,r}).
		\end{align}
		Define $g_{n}(s) := (s-1)\phi_{n}'(s) - \phi_{n}(s) - r$, so that
		$g_{n}(s_{n,r}) = 0$. 
		
		We claim that $s_{r}$ is a continuous function of $r$.
		Recall that $s_{r}$ is defined implicitly by $g(s_{r}) = 0$, i.e.,
		$y(s_{r}) = r$, where $y(s) := (s-1)\phi'(s) - \phi(s)$. We have already
		shown that $g(s) = y(s) - r$ is strictly decreasing and continuous on
		$(0,1)$. Hence $y$ is itself strictly decreasing and continuous on $(0,1)$,
		and $s_{r} = y^{-1}(r)$. Since the inverse of a continuous strictly
		monotone function is continuous, $s_{r}$ is continuous in $r$. An identical argument applies to $s_{n,r}$.

		We now show $s_{n,r}\to s_{r}$. Since $\{s_{n,r}\}\subset (0,1)$ is bounded,
		it suffices to show that every convergent subsequence has limit $s_{r}$. Let
		$s_{r,n_k}\to s^{*}$ for some $s^{*}\in[0,1]$. We show $s^{*}\in(0,1)$. If $s
		^{*} = 0$: for any $\delta\in(0,1)$, eventually $s_{r,n_k}<\delta$. Since
		$g_{n_k}$ is strictly decreasing and $g_{n_k}(s_{r,n_k})=0$, we have $g_{n_k}
		(\delta)<0$ for large $k$. Taking $k\to\infty$: $g(\delta)\le 0$. But this
		holds for all $\delta\in(0,1)$, contradicting $\lim_{s\to 0^+}g(s) > 0$. If
		$s^{*} = 1$: for any $\delta\in(0,1)$, eventually $s_{r,n_k}>\delta$. Since $g
		_{n_k}$ is strictly decreasing and $g_{n_k}(s_{r,n_k})=0$, we have
		$g_{n_k}(\delta)>0$ for large $k$. Taking $k\to\infty$: $g(\delta)\ge 0$. But
		this holds for all $\delta\in(0,1)$, contradicting $g(1^{-}) = -r < 0$. Hence
		$s^{*}\in(0,1)$. 
		
		Choose a compact set $K\subset(0,1)$ containing $s^{*}$ in
		its interior. For $k$ large enough, $s_{r,n_k}\in K$. Then we have
		\begin{align}
			|g(s^{*}) - g_{n_k}(s_{r,n_k})| & \le |g(s^{*}) - g(s_{r,n_k})| + |g(s_{r,n_k}) - g_{n_k}(s_{r,n_k})|  \\ & \le |g(s^{*}) - g(s_{r,n_k})| + \sup_{s\in K}|g(s) - g_{n}(s)|.
		\end{align}
		The first term vanishes as $k\to\infty$ by the continuity of $g$, and the
		second by the uniform convergence $g_{n}\to g$ on $K$. This follows from the fact that each $\phi_{n}$ is convex and
		$\phi_{n} \to \phi$ pointwise on $(0,1)$, we have $\phi_{n}\to\phi$
		uniformly on every compact subset of $(0,1)$ by~\cite[Theorem~10.8]{rockafellar}.
		Moreover, since $\phi$ is differentiable on $(0,1)$, $\phi_{n}'\to\phi'$ uniformly
		on every compact subset of $(0,1)$ by~\cite[Theorem~25.7]{rockafellar}. It
		follows that $g_{n} \to g$ uniformly on compact subsets of $(0,1)$.

		Since $g_{n_k}(s_{r,n_k}
		) = 0$, we obtain $g(s^{*}) = 0$. Since $g$ is strictly decreasing with a unique
		zero at $s_{r}$, we conclude $s^{*} = s_{r}$. Hence $s_{n,r}\to s_{r}$. This concludes the proof.
	\end{proof}

	We now combine the large-deviation estimate with the continuity control on the optimizing parameter to upper bound the error exponent of sequences of states. This is the core statement before we lift the argument to composite sets.

	\begin{boxproposition}
		\label{lem: Hoeffding bound sets} Let $\cH$ be a finite-dimensional Hilbert space
		and let $\rho^{(n)}, \sigma^{(n)}\in \density(\cH^{\ox n})$ be sequences of quantum
		states. Define
		\begin{align}
			\phi(s):= \lim_{n\to \infty}\frac{1}{n}\log \tr (\rho^{(n)})^{1-s}(\sigma^{(n)})^{s}
		\end{align}
		and assume that $\phi(s)$ exists and is finite for each $s\in(0,1)$. Assume further
		that $\phi(s)$ is strictly convex and $C^{1}$-continuous on $(0,1)$. Then
		for any
		$0< r < \sup_{\alpha\in(0,1)}D_{\Petz,\alpha}^{\reg}(\{\rho^{(n)}\}_{n}\|\{\sigma
		^{(n)}\}_{n})$,
		\begin{align}
			\limsup_{n\to \infty}-\frac{1}{n}\log \alpha_{n, r}(\rho^{(n)}\| \sigma^{(n)}) & \le \sup_{\alpha \in (0,1)}\frac{\alpha-1}{\alpha}\left(r - D_{\Petz,\alpha}^{\infty}(\{\rho^{(n)}\}_{n}\|\{\sigma^{(n)}\}_{n})\right). \label{NL3T}
		\end{align}
	\end{boxproposition}

	\begin{proof}
		For any test with $\tr \sigma^{(n)}T_{n} \le e^{-nr}$, we have
		\begin{align}
			\limsup_{n\to \infty}-\frac{1}{n}\log & \left[\tr e^{-nR}\rho^{(n)}(I-T_{n}) +\tr\sigma^{(n)}T_{n} \right]                         \\
			& \geq \limsup_{n\to \infty}-\frac{1}{n}\log \left[\tr e^{-nR}\rho^{(n)}(I-T_{n}) + e^{-nr}\right] \\
			& = \min \left\{r, R + \limsup_{n\to \infty}-\frac{1}{n}\log \tr \rho^{(n)}(I-T_{n}) \right\},
		\end{align}
		where the equality follows from~\cite[Lemma 45]{fang2025error}. Therefore, by
		Lemma~\ref{LH1}, we have
		\begin{align}
			\label{eq: limsup proof tmp1}\min \left\{r, R + \limsup_{n\to \infty}-\frac{1}{n}\log \tr \rho^{(n)}(I-T_{n}) \right\} \le\max_{s\in (0,1)}(1-s)R -\phi(s).
		\end{align}

		Since $\phi$ is convex with $\phi
		(0)=0$, for $0<s_{1}<s_{2}$ we have
		$\phi(s_{1}) = \phi\!\big(\frac{s_{1}}{s_{2}}s_{2} + (1-\frac{s_{1}}{s_{2}})\cdot
		0\big) \le \frac{s_{1}}{s_{2}}\phi(s_{2})$, so $\frac{\phi(s_{1})}{s_{1}}\le\frac{\phi(s_{2})}{s_{2}}$.
		Hence the slope $\frac{\phi(s)}{s}$ is non-decreasing in $s$, and $\lim_{s\to 0^+}\phi'(s) = \inf_{s\in(0,1)}\frac{\phi(s)}{s}$. Using
		$-\frac{\phi(s)}{s}= D_{\Petz,1-s}^{\infty}(\{\rho^{(n)}\}_{n}\|\{\sigma^{(n)}
		\}_{n})$
		and the substitution $\alpha = 1-s$, we obtain the relation
		\begin{align}
			\lim_{s\to 0^+}-\phi'(s) = \sup_{s\in(0,1)}\Big(-\frac{\phi(s)}{s}\Big) = \sup_{\alpha\in(0,1)}D_{\Petz,\alpha}^{\infty}(\{\rho^{(n)}\}_{n}\|\{\sigma^{(n)}\}_{n}).
		\end{align}
		Therefore, $r < \sup_{\alpha\in(0,1)}D_{\Petz,\alpha}^{\infty}(\{\rho^{(n)}\}_{n}\|\{\sigma^{(n)}\}_{n}) = \lim_{s\to 0^+} -\phi'(s)$.

		Let $r'<r$. From Lemma~\ref{lem: phi continuity}, we know that the
		optimization
		\begin{align}
			\max_{s\in (0,1)}\frac{-\phi(s)-sr'}{1-s}
		\end{align}
		has a unique maximizer $s_{r'}\in (0,1)$ such that $r' = (s_{r'}-1) \phi'(s_{r'}
		) - \phi(s_{r'})$. We set
		\begin{align}
			\label{eq: limsup proof tmp2}R_{r'}:=\frac{\phi(s_{r'})+r'}{1-s_{r'}}= -\phi'(s_{r'}).
		\end{align}
		It is clear that $R_{r'}\in (-\phi'(1), -\phi'(0))$. By the proof of Lemma~\ref{LH1},
		$\max_{s\in (0,1)}(1-s)R -\phi(s)$ is uniquely achieved at point $s$ such
		that $R = -\phi'(s)$. For $R = R_{r'}$, ~\eqref{eq: limsup proof tmp2}
		implies that the maximum is uniquely achieved at $s = s_{r'}$. That is,
		\begin{align}
			\max_{s\in (0,1)}(1-s)R_{r'}-\phi(s) = (1-s_{r'})R_{r'}-\phi(s_{r'}) = r'.
		\end{align}
		Then, by ~\eqref{eq: limsup proof tmp1} we have
		\begin{align}
			\min \left\{ r, R_{r'}+ \limsup_{n\to \infty}-\frac{1}{n}\log \tr \rho^{(n)}(I-T_{n}) \right\} \leq r'.
		\end{align}
		Then, we have
		\begin{align}
			R_{r'}+ \limsup_{n\to \infty}-\frac{1}{n}\log \tr \rho^{(n)}(I-T_{n}) \le r.
		\end{align}
		Otherwise, it will contradict to the assumption that $r' < r$. Thus, we have
		\begin{align}
			\limsup_{n\to \infty}-\frac{1}{n}\log \tr \rho^{(n)}(I-T_{n}) \le r - R_{r'}.
		\end{align}

		Note that by Lemma~\ref{lem: phi continuity}, $s_{r'}$ is continuous in $r'$. This implies the continuity of $R_{r'}$ in $r'$. Hence, sending $r'\to r^-$ gives
		\begin{align}
			\lim_{r'\to r^-}R_{r'}=\frac{\phi(s_{r})+r}{1-s_{r}}=: R_r.
		\end{align}
		This gives
		\begin{align}
			\limsup_{n\to \infty}- \frac{1}{n}\log \tr \rho^{(n)}(I-T_{n}) & \le r - R_{r}.
		\end{align}
		By direct calculation, the right-hand side gives
		\begin{align}
			r - R_{r}= \frac{-\phi(s_{r})-s_{r} r}{1-s_{r}}= \max_{s\in (0,1)}\frac{-\phi(s)-s r}{1-s},
		\end{align}
		where the second equality follows from the optimality of $s_{r}$. Then we
		have
		\begin{align}
			\limsup_{n\to \infty}- \frac{1}{n}\log \tr \rho^{(n)}(I-T_{n})\leq \max_{s\in (0,1)}\frac{-\phi(s)-s r}{1-s}.
		\end{align}
		This implies that
		\begin{align}
			\limsup_{n\to \infty}- \frac{1}{n}\log \alpha_{n, r}(\rho^{(n)}\| \sigma^{(n)})\leq \max_{s\in (0,1)}\frac{-\phi(s)-s r}{1-s}.
		\end{align}

		Finally, we show that
		\begin{align}
			\max_{s\in (0,1)}\frac{-\phi(s)-s r}{1-s} & = \sup_{\alpha \in (0,1)}\frac{\alpha-1}{\alpha}\left(r - D_{\Petz,\alpha}^{\infty}(\{\rho^{(n)}\}_{n}\|\{\sigma^{(n)}\}_{n})\right). \label{eq:phi D alpha limit}
		\end{align}
		By definition, $D_{\Petz,\alpha}
		^{\infty}(\{\rho^{(n)}\}_{n}\|\{\sigma^{(n)}\}_{n}) = \frac{1}{\alpha - 1}\phi(1-\alpha)$ for $\alpha\in(0,1)$.
		Writing $s = 1-\alpha$, we have $D_{\Petz,1-s}
		^{\infty}(\{\rho^{(n)}\}_{n}\|\{\sigma^{(n)}\}_{n}) = \frac{1}{-s}\phi(s) = -\frac{\phi(s)}{s}$. Thus, for each $s\in(0
		,1)$,
		\begin{align}
			\frac{-\phi(s)-s r}{1-s} & = \frac{s}{1-s}\!\left(-\frac{\phi(s)}{s}- r\right) = \frac{s}{1-s}\!\left(D_{\Petz,1-s}^{\infty}(\{\rho^{(n)}\}_{n}\|\{\sigma^{(n)}\}_{n}) - r\right). 
		\end{align}
		Under the substitution $\alpha = 1-s$, we have
		\begin{align}
			\frac{-\phi(s)-s r}{1-s}= \frac{\alpha-1}{\alpha}\!\left(r - D_{\Petz,\alpha}^{\infty}(\{\rho^{(n)}\}_{n}\|\{\sigma^{(n)}\}_{n})\right). \label{NL56}
		\end{align}
		Taking supremum over $\alpha \in (0,1)$ gives~\eqref{eq:phi D alpha limit}. This completes the proof.
	\end{proof}

	The next lemma addresses a fundamental obstacle in the composite setting: the minimizer of $D_{\Petz,\alpha}(\sA_n\|\sB_n)$ generally depends on $\alpha$, so no single pair of states simultaneously achieves the regularized divergence at every $\alpha$. The following construction overcomes this difficulty by averaging the $\alpha$-dependent minimizers over a grid that refines with $n$, yielding a \emph{universal} pair $(\rho^{(n)}, \sigma^{(n)})$ that is independent of any particular $\alpha$ yet asymptotically achieves $D_{\Petz,\alpha}^{\infty}(\sA\|\sB)$ for every $\alpha \in (0,1)$ at which the regularized divergence is continuous.
	
	\begin{boxlemma}
		[Universal tight state pair.]\label{lem: averaged state pair}
		Let $\cH$ be a finite-dimensional Hilbert space.
		Let $\sA:=\{\sA_{n}\}_{n\in \NN}$ and $\sB:=\{\sB_{n}\}_{n\in \NN}$ be sequences of convex
		sets $\sA_n, \sB_n \subseteq \density(\cH^{\ox n})$.
		Suppose that $\alpha \mapsto D_{\Petz,\alpha}^{\infty}(\sA\|\sB)$
		is continuous at $\alpha$ in $(0,1)$.
		For each $n$, define the averaged states
		\begin{align}
			\label{eq: choice of rhon and sigman}\rho^{(n)}:= \frac{1}{n+1}\sum_{j=0}^{n} \rho_{n}\!\left(\frac{j}{n}\right),\quad \sigma^{(n)}:= \frac{1}{n+1}\sum_{j=0}^{n} \sigma_{n}\!\left(\frac{j}{n}\right),
		\end{align}
		where, for each $\alpha \in (0,1)$, $(\rho_{n}(\alpha),\sigma_{n}(\alpha)) \in \sA_n \times \sB_n$ is an approximate minimizer satisfying
		\begin{align}
			{D}_{\Petz,\alpha}(\rho_{n}(\alpha)\|\sigma_{n}(\alpha)) \le D_{\Petz,\alpha}(\sA_{n}\|\sB_{n}) + 1.
		\end{align}
		Then $\rho^{(n)}\in \sA_{n}$,
		$\sigma^{(n)}\in \sB_{n}$, and for any $\alpha \in (0,1)$,
		\begin{align}
			{D}^\infty_{\Petz,\alpha}(\{\rho^{(n)}\}\| \{\sigma^{(n)}\}) ={D}_{\Petz,\alpha}^{\infty}(\sA\| \sB).\label{KSD}
		\end{align}
	\end{boxlemma}

	\begin{proof}
		For each $\alpha \in (0,1)$ and $n \in \NN$, the approximate minimizer
		$(\rho_{n}(\alpha),\sigma_{n}(\alpha))$ exists because $D_{\Petz,\alpha}(\sA_n\|\sB_n)$ is defined as an infimum, so a pair achieving the infimum to within additive error~$1$ can always be found.
		Since $\sA_{n}$ and $\sB_{n}$ are convex, we have that $\rho^{(n)}\in \sA_{n}$ and $\sigma^{(n)}\in \sB_{n}$. This implies that
		\begin{align}
			\frac{1}{n}D_{\Petz,\alpha}(\rho^{(n)}\|\sigma^{(n)})
			\ge \frac{1}{n}D_{\Petz,\alpha}(\sA_{n}\| \sB_{n}).
		\end{align}
		Taking $\liminf_{n\to\infty}$ yields
		\begin{align} 
            \liminf_{n\to \infty}\frac{1}{n}D_{\Petz,\alpha}(\rho^{(n)}\|
		\sigma^{(n)}) \ge D_{\Petz,\alpha}^{\infty}(\sA\| \sB). \label{eq: avg lower}
        \end{align}

		For the reverse direction, we use Lieb's concavity theorem, which states that the map
		$(A,B) \mapsto \tr A^{\alpha} B^{1-\alpha}$ is jointly concave for
		$\alpha \in (0,1)$. This gives
		\begin{align}
			\tr (\rho^{(n)})^{\alpha} (\sigma^{(n)})^{1-\alpha} & \ge \frac{1}{(n+1)^{2}}\sum_{j=0}^{n} \sum_{j'=0}^{n} \tr \rho_{n}\!\left(\tfrac{j}{n}\right)^{\alpha} \sigma_{n}\!\left(\tfrac{j'}{n}\right)^{1-\alpha}. \label{eq: Lieb double sum}
		\end{align}
		Every term in the double sum is nonnegative, so for any
		$j'' \in \{0,\ldots,n\}$ we can keep only the diagonal term
		$j = j' = j''$:
		\begin{align}
			\tr (\rho^{(n)})^{\alpha} (\sigma^{(n)})^{1-\alpha}
			& \ge \frac{1}{(n+1)^{2}}\tr \rho_{n}\!\left(\tfrac{j''}{n}\right)^{\alpha} \sigma_{n}\!\left(\tfrac{j''}{n}\right)^{1-\alpha}. \label{eq: keep one term}
		\end{align}
		This implies that
		\begin{align}
			\frac{1}{n}D_{\Petz,\alpha}(\rho^{(n)}\|\sigma^{(n)}) & \le \frac{1}{n}D_{\Petz,\alpha}\!\left(\rho_{n}\!\left(\tfrac{j''}{n}\right) \middle\| \sigma_{n}\!\left(\tfrac{j''}{n}\right)\right) + \frac{2\log(n+1)}{n(1-\alpha)}. \label{eq: after log}
		\end{align}
		Now choose $j''$ to be the smallest index in
		$\{0,\ldots,n\}$ such that $j''/n \geq \alpha$. Then
		$\alpha \le j''/n \le \alpha + 1/n$. Since $D_{\Petz,\alpha}(\rho\|\sigma)$ is
		non-decreasing in $\alpha$ for fixed states $\rho, \sigma$, we obtain
		\begin{align}
			D_{\Petz,\alpha}\!\left(\rho_{n}\!\left(\tfrac{j''}{n}\right) \middle\| \sigma_{n}\!\left(\tfrac{j''}{n}\right)\right)
			& \leq D_{\Petz,j''/n}\!\left(\rho_{n}\!\left(\tfrac{j''}{n}\right) \middle\| \sigma_{n}\!\left(\tfrac{j''}{n}\right)\right)\\
			& \leq D_{\Petz,j''/n}(\sA_{n}\|\sB_{n}) + 1\\
            & \leq  D_{\Petz,\alpha+1/n}(\sA_{n}\|\sB_{n}) + 1. \label{eq: mono and min}
		\end{align}
		where the first inequality uses the monotonicity $\alpha \le j''/n$, the second is the approximate minimizer condition, and the third uses $j''/n \leq \alpha + 1/n$. Combining~\eqref{eq: after log}
		and~\eqref{eq: mono and min}, we get
		\begin{align}
			\frac{1}{n}D_{\Petz,\alpha}(\rho^{(n)}\|\sigma^{(n)}) & \le \frac{1}{n}D_{\Petz,\alpha + 1/n}(\sA_{n}\|\sB_{n}) + \frac{2\log(n+1)}{n(1-\alpha)} + \frac{1}{n}. \label{eq: upper before limsup}
		\end{align}
        Fix any $\beta \in (\alpha,1)$. For all $n > 1/(\beta - \alpha)$, we have $\alpha + 1/n < \beta$. Since $D_{\Petz,\alpha}$ is non-decreasing in $\alpha$ on $(0,1)$, it follows that
		\begin{align}
			\frac{1}{n}D_{\Petz,\alpha+1/n}(\sA_n\|\sB_n) \leq \frac{1}{n}D_{\Petz,\beta}(\sA_n\|\sB_n).
		\end{align}
		Substituting into~\eqref{eq: upper before limsup} and taking $\limsup_{n\to\infty}$:
		\begin{align}
			\limsup_{n\to \infty} \frac{1}{n}D_{\Petz,\alpha}(\rho^{(n)}\|\sigma^{(n)}) \leq D_{\Petz,\beta}^{\infty}(\sA\|\sB).
		\end{align}
		Since this holds for every $\beta > \alpha$, taking $\beta \to \alpha^+$ and using the assumed continuity of $\alpha \mapsto D_{\Petz,\alpha}^{\infty}(\sA\|\sB)$ at $\alpha$ yields
		\begin{align}
			\limsup_{n\to \infty} \frac{1}{n}D_{\Petz,\alpha}(\rho^{(n)}\|\sigma^{(n)}) \leq D_{\Petz,\alpha}^{\infty}(\sA\|\sB).
		\end{align}
		Together with~\eqref{eq: avg lower}, this establishes~\eqref{KSD}.
	\end{proof}

	Combining Proposition~\ref{lem: Hoeffding bound sets} and Lemma~\ref{lem: averaged state pair}, we obtain the general upper bound for convex composite hypotheses. Conceptually, the theorem shows that a single asymptotically tight pair of states is enough to control the exponent for the entire family.

	\begin{boxtheorem}[Quantum Hoeffding upper bound via $D_{\Petz,\alpha}$.]
		\label{thm: quantum Hoeffding upper bound}
		Let $\cH$ be a finite-dimensional Hilbert space.
		Let $\sA = \{\sA_{n}\}_{n\in\NN}$ and $\sB = \{\sB_{n}\}_{n\in\NN}$ be
		sequences of sets of quantum states such that $\sA_n, \sB_n \subseteq \density(\cH^{\ox n})$ for every $n\in \NN$.
		Suppose that 
		\begin{itemize}
			\item Both $\sA$ and $\sB$ satisfy the assumption (C1) convexity;
			\item $(\sA, \sB)$ satisfy the assumptions (C6)  strict-concavity and (C7) continuity.
		\end{itemize}
		Then for any $0 < r < \sup_{\alpha \in (0,1)} D_{\Petz,\alpha}^{\infty}(\sA\|\sB)$,
		\begin{align}
			\limsup_{n\to \infty}-\frac{1}{n}\log \alpha_{n,r}(\sA_n\|\sB_n) & \le \sup_{\alpha \in (0,1)}\frac{\alpha-1}{\alpha}\bigg(r -{D}_{\Petz,\alpha}^{\infty}(\sA\| \sB) \bigg). \label{eq: general Hoeffding bound}
		\end{align}
	\end{boxtheorem}

	\begin{proof}
		Given the required assumptions, we can apply Lemma~\ref{lem: averaged state pair} to obtain a sequence of states $\rho^{(n)} \in \sA_n$ and $\sigma^{(n)} \in \sB_n$ such that $D_{\Petz,\alpha}^{\infty}(\{\rho^{(n)}\}\|\{\sigma^{(n)}\}) = D_{\Petz,\alpha}^{\infty}(\sA\|\sB)$ for every $\alpha \in (0,1)$. Applying Proposition~\ref{lem: Hoeffding bound sets} to the pair of sequences $\{\rho^{(n)}\}$ and $\{\sigma^{(n)}\}$, we get 
		\begin{align}
			\limsup_{n\to \infty}-\frac{1}{n}\log \alpha_{n,r}(\rho^{(n)}\|\sigma^{(n)}) & \le \sup_{\alpha \in (0,1)}\frac{\alpha-1}{\alpha}\bigg(r -{D}_{\Petz,\alpha}^{\infty}(\sA\| \sB) \bigg).
		\end{align}
		As $\rho^{(n)} \in \sA_n$ and $\sigma^{(n)} \in \sB_n$, we have $\alpha_{n,r}(\sA_n\|\sB_n) \geq \alpha_{n,r}(\rho^{(n)}\|\sigma^{(n)})$ by definition. Taking this into the above inequality gives the asserted result.
	\end{proof}

	\section{Matching bounds and operational interpretations}
	\label{sec: matching bounds}

	The lower and upper bounds derived in the previous section are governed by distinct R\'enyi divergences and need not coincide in general. In this section, we identify conditions under which the two bounds match. We first establish that pinching with respect to a family of orthogonal projections is equivalent to twirling over the associated diagonal unitary group. Building on this equivalence, we then show that, under suitable group compatibility assumptions, the composite Hoeffding exponent is exactly characterized by the regularized reverse sandwiched \Renyi divergence. Specializing further to a particular composite hypothesis testing setting, the regularization can be removed and the exponent reduces to a single-letter expression in the reverse sandwiched \Renyi divergence, thereby providing an operational interpretation of this divergence. This further leads to an operational interpretation of the reverse quantum relative entropy in the Stein regime.

	\subsection{Equivalence of twirling and pinching}

	Fix a family of orthogonal projections $\{P_i\}_{i=1}^{m}$ on $\cH$ satisfying $\sum_{i=1}^{m} P_i = I$. For each multi-index $\vec{j} = (j_1,\dots,j_n) \in \{1,\dots,m\}^{n}$, let $P_{\vec{j}} := P_{j_1}\otimes\cdots\otimes P_{j_n}$ denote the associated product projection. For each $k \in \{1,\dots,m\}$, let $N_k(\vec{j}) := \bigl|\{\ell : j_\ell = k\}\bigr|$ denote the number of coordinates of $\vec{j}$ equal to~$k$, and let $N(\vec{j}) := (N_1(\vec{j}),\dots,N_m(\vec{j}))$ be its type. Let 
    \begin{align}\label{eq: type projector}
        \Pi_{t} := \sum_{\vec{j} \,:\, N(\vec{j}) = t} P_{\vec{j}},
    \end{align}
    be the projector onto the type class~$t$. We denote the set of types of length~$n$ by $\mathcal{T}_{n,m}$, so that
    \begin{align} 
	|\mathcal{T}_{n,m}| = \binom{n+m-1}{m-1} \quad \text{and} \quad \sum_{t \in \mathcal{T}_{n,m}} \Pi_t = I_{\cH^{\otimes n}}.
    \end{align}
	We define the pinching map with respect to the projections $\{P_i\}_{i=1}^{m}$ as
    \begin{align}\label{eq: pinching def}
        \cP^n(X) := \sum_{t \in \mathcal{T}_{n,m}} \Pi_t X \Pi_t,
    \end{align}

    For the projections $\{P_i\}_{i=1}^{m}$, we consider the corresponding compact group of unitaries
    \begin{align}\label{eq: group def}
        \group:= \left\{g = \sum_{j=1}^{m} e^{i\theta_j} P_j : \theta_i \in [0,2\pi) \right\},
    \end{align}
    equipped with the normalized Haar measure $\mathrm{d}g = \prod_{j=1}^{m}\frac{\mathrm{d}\theta_j}{2\pi}$. For each $n \in \NN$, we let $\group$ act on $\cH^{\otimes n}$ through the representation $g \mapsto g^{\otimes n}$, and write the associated twirling:
    \begin{align}\label{eq: twirling def}
        \cT_{\group}^{n}(X) := \int_{\group} g^{\otimes n}\, X\, (g^{\dagger})^{\otimes n}\, \mathrm{d}g.
    \end{align}

	The following lemma shows that these two constructions are in fact identical: pinching with respect to the projections $\{P_i\}$ is exactly the same as twirling over the diagonal unitary group $\group$. This equivalence gives the otherwise algebraic pinching map a simple ``operational interpretation''. In essence, pinching is just the averaging effect of uncertainty in the unitary phases.

    \begin{boxlemma}\label{lem: twirling equals pinching}
        Let $\{P_i\}_{i=1}^{m}$ be a family of orthogonal projections on $\cH$ with $\sum_{i=1}^{m} P_i = I$. Let  $\cP^n$ and $\cT_{\group}^{n}$ be defined as in~\eqref{eq: pinching def} and~\eqref{eq: twirling def}, respectively. Then, it holds that
        $\cP^n = \cT_{\group}^{n}$.
    \end{boxlemma}

    \begin{proof}
        For $g = g_\theta \in \group$ we have $g_\theta^{\otimes n} = \sum_{\vec{j} \in \{1,\dots,m\}^{n}} e^{i\sum_{\ell=1}^{n}\theta_{j_\ell}}\, P_{\vec{j}}$, and therefore
        \begin{align}
            g_\theta^{\otimes n}\, X\, (g_\theta^{\dagger})^{\otimes n}
            = \sum_{\vec{j},\vec{k} \in \{1,\dots,m\}^{n}} e^{i\sum_{\ell=1}^{n}(\theta_{j_\ell} - \theta_{k_\ell})}\, P_{\vec{j}}\, X\, P_{\vec{k}}.
        \end{align}
        Regrouping the sum $\sum_{\ell=1}^{n}\theta_{j_\ell}$ according to the value of each coordinate gives $\sum_{\ell=1}^{n}\theta_{j_\ell} = \sum_{r=1}^{m}\theta_r\, N_r(\vec{j})$, and hence
        \begin{align}\label{eq: exponent rewritten}
            \sum_{\ell=1}^{n}\bigl(\theta_{j_\ell} - \theta_{k_\ell}\bigr)
            = \sum_{r=1}^{m}\theta_{r}\bigl(N_{r}(\vec{j}) - N_{r}(\vec{k})\bigr).
        \end{align}
        Substituting~\eqref{eq: exponent rewritten} into the phase, the integrand factorizes as $\prod_{r=1}^{m} e^{i\theta_r(N_r(\vec{j}) - N_r(\vec{k}))}$. Applying Fubini's theorem together with the one-dimensional orthogonality relation $\frac{1}{2\pi}\int_{0}^{2\pi} e^{i\theta_r a_r}\,\mathrm{d}\theta_r = \delta_{a_r,0}$ for $a_r \in \ZZ$, we obtain
        \begin{align}
            \int_{[0,2\pi)^{m}} e^{i\sum_{r=1}^{m}\theta_{r}(N_{r}(\vec{j}) - N_{r}(\vec{k}))}\, \prod_{r=1}^{m}\frac{\mathrm{d}\theta_r}{2\pi}
            = \prod_{r=1}^{m} \delta_{N_{r}(\vec{j}),\, N_{r}(\vec{k})}
            = \delta_{N(\vec{j}),\, N(\vec{k})},
        \end{align}
        which vanishes unless $\vec{j}$ and $\vec{k}$ have the same type. Grouping the surviving pairs by their common type $t$ then yields
        \begin{align}
            \cT_{\group}^{n}(X)
            = \sum_{t \in \mathcal{T}_{n,m}} \ \ \sum_{\vec{j},\vec{k}\,:\, N(\vec{j}) = N(\vec{k}) = t}\!\! P_{\vec{j}}\, X\, P_{\vec{k}}
            = \sum_{t \in \mathcal{T}_{n,m}} \Pi_{t}\, X\, \Pi_{t},
        \end{align}
        where the last step uses the definition~\eqref{eq: type projector} of~$\Pi_{t}$. This completes the proof.
    \end{proof}

	The next lemma explains why this identification is useful for the Hoeffding problem: when the null hypothesis is invariant under the group action, twirling the alternative converts the Petz \Renyi divergence into the reverse sandwiched \Renyi divergence. This conversion is precisely the mechanism that will align the lower and upper bounds in the application below.

	\begin{boxlemma}
		\label{lem: Petz RSand twirling}
		Let $\cH$ be a finite-dimensional Hilbert space, let $\{P_i\}_{i=1}^{m}$
		be orthogonal projections on $\cH$ with $\sum_{i} P_i = I$. Let
		$\cP^n$, $\group$ and $\cT_{\group}^{n}$ defined as in~\eqref{eq: pinching def},~\eqref{eq: group def} and~\eqref{eq: twirling def}, respectively.
		Let $\sA = \{\sA_{n}\}_{n\in\NN}$ and $\sB = \{\sB_{n}\}_{n\in\NN}$ be
		sequences with each $\sA_n, \sB_n \subseteq \density(\cH^{\ox n})$. Denote the sequence of twirled sets by $\cT_{\group}(\sB):=\{\cT_{\group}^{n}(\sB_n)\}_{n\in\NN}$. Suppose that every $\rho_n \in \sA_n$ has the form $\rho_n=\sum_{t \in \cT_{n,m}} s_t \Pi_t$, where $s_t$ may depend on $\rho_n$.
		Then for any $\alpha \in (0,1) \cup (1,+\infty)$, 
		\begin{align}\label{eq: Petz RSand infty}
			{D}_{\Petz,\alpha}^{\infty}\!\left(\sA\,\middle\|\,\cT_{\group}(\sB)\right)
			 = D_{\RSand,\alpha}^{\infty}\!\left(\sA\,\middle\|\,\sB\right).
		\end{align}
	\end{boxlemma}
	\begin{proof}
		By the definition of the reverse sandwiched R\'enyi divergence and Petz \Renyi divergence, the asserted result is equivalent to 
		\begin{align}
			{D}_{\Petz,\alpha}^{\infty}\!\left(\cT_{\group}(\sB)\middle\|\,\sA\ \right)
			 = D_{\Sand,\alpha}^{\infty}\!\left(\sB\,\middle\|\,\sA\right).
		\end{align}
		
		For any $\rho_n \in \sA_n$ and $\sigma_n \in \sB_n$, we have 
		\begin{align}
			D_{\Petz,\alpha}(\cT_{\group}^{n}(\sigma_n)\|\rho_n) & = D_{\Sand,\alpha}(\cT_{\group}^{n}(\sigma_n)\|\rho_n) \leq D_{\Sand,\alpha}(\sigma_n\|\rho_n),
		\end{align}
		where the equality follows because $\cT_{\group}^{n}(\sigma_n)$ commutes with $\rho_n =\sum_{t\in \cT_{n,m}} s_{t} \Pi_{t}$, and the inequality is the data-processing inequality under the twirling map~\cite[Proposition~14]{muller2013quantum} (note that $\rho_n = \cT_{\group}^n(\rho_n)$). Optimizing over $\rho_n \in \sA_n$, $\sigma_n \in \sB_n$ and taking the regularized limit yields
		\begin{align}
			{D}_{\Petz,\alpha}^{\infty}\!\left(\cT_{\group}(\sB)\middle\|\,\sA\ \right)
			 \leq D_{\Sand,\alpha}^{\infty}\!\left(\sB\,\middle\|\,\sA\right).
		\end{align}

		For the reverse direction, fix any $\rho_n \in \sA_n$ and $\sigma_n \in \sB_n$. We have
		\begin{align}
			D_{\Petz,\alpha}(\cT_{\group}^{n}(\sigma_n)\|\rho_n) & \geq D_{\Sand,\alpha}(\cT_{\group}^{n}(\sigma_n)\|\rho_n)\\
			& \geq D_{\Sand,\alpha}(\sigma_n\|\rho_n) -2\log |\cT_{n,m}|\\
			& \geq D_{\Sand,\alpha}\!\left(\sB_{n}\middle\|\,\sA_n\ \right) - 2\log |\cT_{n,m}|,
		\end{align}
		where the first inequality is $D_{\Petz,\alpha} \geq D_{\Sand,\alpha}$, the second follows from~\cite[Lemma~3]{hayashi2016correlation}, and the third uses $\sigma_n \in \sB_n$ and $\rho_n \in \sA_n$.
		Optimizing over $\rho_n \in \sA_n$ and $\sigma_n \in \sB_n$ and taking the regularized limit yields
		\begin{align}
			{D}_{\Petz,\alpha}^{\infty}\!\left(\cT_{\group}(\sB)\middle\|\,\sA\ \right) \geq D_{\Sand,\alpha}^{\infty}\!\left(\sB\,\middle\|\,\sA\right).
		\end{align}
		This completes the proof.
	\end{proof}

	\subsection{Matching bounds}

	Combining Lemma~\ref{lem: Petz RSand twirling} with the bounds in the previous section, we identify a set of conditions under which the reverse sandwiched \Renyi divergence governs the Hoeffding exponent.

	\begin{boxassumption}
		Let $\group$ be the compact group defined in~\eqref{eq: group def}. For a sequence of sets of quantum states $\sC = \{\sC_{n}\}_{n\in \NN}$ with each $\sC_{n} \subseteq \density(\cH^{\ox n})$, we introduce the following assumptions.
		\begin{center}
			\renewcommand{\arraystretch}{1.5}
			\begin{tabular}{@{} c l p{10.5cm} @{}}
				\toprule \textbf{Label} & \textbf{Name}                   & \textbf{Description}                                                                                                               \\
				\midrule 
				(C1$''$)                    & ${\group}$-convexity        & For any $n \in \NN$, the twirled set $\cT_{\group}^{n}(\sC_{n})$ is convex.                                                         \\
				(C8)                   & ${\group}$-closedness       & For any $n \in \NN$, $g^{\ox n}(\rho_{n})(g^\dagger)^{\ox n} \in \sC_n$, $\forall \rho_{n} \in \sC_{n}, \forall g \in \group$.                           \\
				(C8$'$)           & ${\group}$-block-constantness       
				& For any $n \in \NN$, 
$\rho_{n}$				has the form $\sum_{t \in \cT_{n,m}} s_t \Pi_t$, $\forall \rho_{n} \in \sC_{n}$.                        \\                	
				\bottomrule
			\end{tabular}
		\end{center}
	\end{boxassumption}

	Note that (C1$''$) $\group$-convexity is strictly weaker than the standard convexity (C1), since it only requires convexity after twirling; by the linearity of twirling, (C1) automatically implies (C1$''$).
	
	With these assumptions, the following theorem shows that the composite Hoeffding exponent is governed by the \emph{reverse sandwiched} \Renyi divergence rather than the Petz \Renyi divergence familiar from the simple i.i.d.\ setting. This distinction is significant both conceptually and quantitatively: conceptually, it endows the reverse sandwiched divergence with an exact operational role in hypothesis testing; quantitatively, the two divergences can differ strictly for non-commuting states (see Example~\ref{ex: divergence comparison}), so the resulting exponent can depart substantially from its i.i.d.\ counterpart.

	\begin{boxtheorem}[Quantum Hoeffding bound via $D_{\RSand,\alpha}$.]
		\label{thm: matching bounds}
		Let $\cH$ be a finite-dimensional Hilbert space and $\group$ be a compact group of unitaries as defined in~\eqref{eq: group def}.
		Let $\sA = \{\sA_{n}\}_{n\in \NN}$ and $\sB = \{\sB_{n}\}_{n\in \NN}$ be two sequences of sets of quantum states with each $\sA_{n}, \sB_{n} \subseteq \density(\cH^{\ox n})$. Denote $\cT_{\group}(\sB):=\{\cT_{\group}^{n}(\sB_n)\}_{n\in\NN}$. 
		Suppose that
		\begin{itemize}
			\item $\sA$ satisfies assumptions (C1) convexity, (C5) $\perm$-closedness, and (C8$'$) $\group$-block-constantness;

			\item $\sB$ satisfies assumptions (C1$''$) $\group$-convexity, (C5$'$) $\perm$-invariance, and (C8) $\group$-closedness;

			\item $(\sA, \cT_\group(\sB))$ satisfy assumptions (C6) strict-concavity and (C7) continuity.
		\end{itemize}
		Then for any $0 < r < \sup_{\alpha \in (0,1)}D_{\RSand,\alpha}^{\reg}(\sA\|\sB)$,
		\begin{align}\label{eq: exact Hoeffding RSand}
			\lim_{n\to \infty}-\frac{1}{n}\log \alpha_{n, r}(\sA_{n}\| \sB_{n})
			= \sup_{\alpha \in (0,1)}\frac{\alpha-1}{\alpha}\bigg(r - D_{\RSand,\alpha}^{\infty}(\sA\|\sB)\bigg).
		\end{align}
	\end{boxtheorem}
	\begin{proof}
		The lower bound follows from Theorem~\ref{thm: quantum Hoeffding lower bound}.
		We now turn to the upper bound. Applying Theorem~\ref{thm: quantum Hoeffding upper bound} to the sequences $\sA$ and $\cT_{\group}(\sB)$, we have, for any $0 < r < \sup_{\alpha \in (0,1)} D_{\Petz,\alpha}^{\infty}(\sA\|\cT_{\group}(\sB))$,
		\begin{align}
			\limsup_{n\to \infty}-\frac{1}{n}\log \alpha_{n, r}(\sA_n\|\cT_{\group}^{n}(\sB_{n})) & \le \sup_{\alpha \in (0,1)}\frac{\alpha-1}{\alpha}\bigg(r -{D}_{\Petz,\alpha}^{\infty}(\sA\|\cT_{\group}(\sB)) \bigg). \label{eq: upper bound coro}
		\end{align}
		By Lemma~\ref{lem: Petz RSand twirling}, ${D}_{\Petz,\alpha}^{\infty}(\sA\|\cT_{\group}(\sB)) = D_{\RSand,\alpha}^{\infty}(\sA\|\sB)$. Moreover, since $\sB_n$ is closed under the group action, $\cT_{\group}^{n}(\sB_{n}) \subseteq \conv(\sB_n)$, and hence
		\begin{align}
			\alpha_{n, r}(\sA_{n}\|\sB_{n}) = \alpha_{n, r}(\sA_{n}\|\conv(\sB_{n})) \geq \alpha_{n, r}(\sA_{n}\|\cT_{\group}^{n}(\sB_{n})).
		\end{align}
		Substituting these into~\eqref{eq: upper bound coro} establishes the asserted upper bound and concludes the proof.
	\end{proof}

	\subsection{Operational interpretations}

	The assumptions in Theorem~\ref{thm: matching bounds} may appear rather strong at first sight, and the regularized expression on the right-hand side might suggest that the resulting exponent is difficult to evaluate. We now demonstrate that the framework is indeed useful by exhibiting a physically natural composite hypothesis-testing problem in which all of these assumptions are satisfied and the regularization can be removed. Consequently, the error exponent admits a single-letter expression in terms of the reverse sandwiched R\'enyi divergence, providing an operational interpretation of this divergence between a pair of quantum states. This then leads to operational interpretation of the reverse quantum relative entropy as well. Below, we discuss two physical models that fit into this framework: independent phase noise in the energy eigenbasis, and free time evolution under a Hamiltonian.
	
	\subsubsection*{Independent phase noise in energy eigenbasis}

	Following the setting outlined in the introduction, consider a quantum system with Hamiltonian $H = \sum_j E_j \ket{E_j}\bra{E_j}$, where $\{\ket{E_j}\}$ denotes the energy eigenbasis with eigenvalues $\{E_j\}$. The null hypothesis is the thermal equilibrium (Gibbs) state at inverse temperature $\beta$,
	\begin{align} 
		\rho = e^{-\beta H}/\tr[e^{-\beta H}],
	\end{align}
	which is by construction diagonal in the energy eigenbasis. The alternative hypothesis is a non-equilibrium probe state $\sigma$, which may carry off-diagonal coherences between energy levels. Before reaching the tester, however, each energy level acquires an unknown phase noise. This models dephasing in the energy eigenbasis, arising for instance from imprecise waiting times, clock misalignment, or fluctuating environmental couplings to individual energy levels.
	The state delivered to the tester therefore takes the form $g\, \sigma\, g^\dagger$ for some unknown $g \in \group$, with
	\begin{align}\label{eq: group def corollary}
        \group:= \left\{g = \sum_{j=1}^{m} e^{i\theta_j} \ket{E_j}\bra{E_j} : \theta_j \in [0,2\pi) \right\},
	\end{align}
	and the tester's task is to distinguish $\rho^{\otimes n}$ from such phase-corrupted copies of $\sigma^{\otimes n}$. We accordingly define the sequences of sets
	\begin{align}
		\sA_n &:= \{\rho^{\otimes n}\}, \quad \text{and}\quad
		\sB_n := \left\{g^{\otimes n}\sigma^{\otimes n}(g^{\otimes n})^\dagger : g \in \group\right\}, \label{eq: orbit alt set}
	\end{align}
	where we note that $\rho$ is invariant under the action of $\group$ by definition.

	Free time evolution under $H$ is a special case of this noise model: the unitary $e^{-iHt} = \sum_j e^{-iE_j t}\ket{E_j}\bra{E_j}$ applies correlated phases $\theta_j = -E_j t$ governed by the single parameter $t$, a setting we will discuss in detail later. The group $\group$ generalizes this by allowing each energy level to accrue an independent phase, thereby capturing the most general dephasing in the energy eigenbasis.
	The following theorem delivers the main operational payoff of this paper: the optimal Hoeffding exponent for discriminating the thermal state from its phase-corrupted alternative is given exactly by the single-letter reverse sandwiched \Renyi divergence. We emphasize that the alternative hypothesis is composite i.i.d.\ and therefore not tensor-stable, so the framework of~\cite{fang2025error} does not apply.

	\begin{boxtheorem}[Operational meaning of $D_{\RSand,\alpha}$.]
		\label{thm: RSand operational meaning}
		Let $\rho, \sigma \in \density(\cH)$ with $\rho > 0$ and $\sigma > 0$, and let $\sA_n$, $\sB_n$ be as defined in~\eqref{eq: orbit alt set}.
		Then for any $0 < r < D_{\Rev}(\rho\|\sigma)$,
		\begin{align}\label{eq: RSand operational meaning}
			\lim_{n\to\infty}-\frac{1}{n}\log \alpha_{n,r}(\sA_n\|\sB_n)
			= \sup_{\alpha \in (0,1)} \frac{\alpha-1}{\alpha}\big(r - D_{\RSand,\alpha}(\rho\|\sigma)\big).
		\end{align}
	\end{boxtheorem}

	\begin{proof}
		We verify the assumptions of Theorem~\ref{thm: matching bounds} for $\sA_n$ and $\sB_n$.

		\medskip
		\textit{1) Verification for $\sA$.} The singleton $\sA_n = \{\rho^{\otimes n}\}$ is convex, permutation-invariant, and hence permutation-closed. Since $\rho$ is diagonal in the basis $\{|E_j\rangle\}$, $\rho^{\ox n}$ is $\group$-block-constant.

		\medskip
		\textit{2) Verification for $\sB$.} For (C5$'$), every state in $\sB_n$ has the product form $g^{\otimes n}\sigma^{\otimes n}(g^{\otimes n})^\dagger$ and is therefore permutation-invariant. For (C1$''$), we have $\cT_\group^n(g^{\otimes n}\sigma^{\otimes n}(g^{\otimes n})^\dagger) = \cT_\group^n(\sigma^{\otimes n})$ for all $g \in \group$, so $\cT_\group^n(\sB_n) = \{\cT_\group^n(\sigma^{\otimes n})\}$ is a singleton and hence convex. For $\group$-closedness~(C8), note that $\sB_n$ is the orbit of $\sigma^{\otimes n}$ under the group action, so it is closed under that action by construction.
		
		\medskip
		\textit{3) Verification for $(\sA, \cT_\group(\sB))$.} We next compute the regularized divergence. Since $D_{\RSand,\alpha}(\rho\|\sigma)$ is unitarily invariant and $\rho$ commutes with every $g \in \group$, 
		\begin{align}
			D_{\RSand,\alpha}(\sA_n\|\sB_n) & = \inf_{g \in \group} D_{\RSand,\alpha}\big(\rho^{\otimes n}\|g^{\ox n} \sigma^{\ox n} (g^\dagger)^{\ox n}\big)
			 = n\, D_{\RSand,\alpha}(\rho\|\sigma).
		\end{align}
		This gives
		\begin{align}\label{eq: operational meaning proof tmp1}
			D_{\RSand,\alpha}^\infty(\sA\|\sB) = D_{\RSand,\alpha}(\rho\|\sigma) = -\log Q_{\Sand,1-\alpha}(\sigma\|\rho).
		\end{align}
		By Lemma~\ref{lem: Petz RSand twirling}, 
		\begin{align}\label{eq: operational meaning proof tmp2}
			(1-\alpha){D}_{\Petz,\alpha}^{\infty}\!\left(\sA\middle\|\,\cT_{\group}(\sB)\right) = (1-\alpha) D_{\RSand,\alpha}^{\infty}\!\left(\sA\middle\|\,\sB\right) = -\log Q_{\Sand,1-\alpha}(\sigma\|\rho).
		\end{align}
		Thus verifying assumptions (C6) and (C7) for the pair $(\sA, \cT_\group(\sB))$ reduces to showing that the single-letter function
		$\alpha \mapsto -\log Q_{\Sand,1-\alpha}(\sigma\|\rho)$
		is strictly concave and $C^1$-continuous on $(0,1)$.
		Setting $\beta := 1-\alpha \in (0,1)$, it is equivalent to establish that 
		\begin{align} 
			\beta \mapsto g(\beta) := \log Q_{\Sand,\beta}(\sigma\|\rho)
		\end{align}
		is strictly convex and $C^1$-continuous on $(0,1)$. We verify these two properties in turn.

		\bigskip 
		\textit{3.1) Verification for $C^1$-continuity.} We first show that $g(\beta)$ is real-analytic. Recall that a scalar or matrix-valued function on an open interval is called real-analytic if around every point it admits a convergent power-series expansion. We will use the standard facts that sums, products, and compositions of real-analytic maps are again real-analytic, that the scalar logarithm is real-analytic on $(0,\infty)$, and that the principal matrix logarithm and the matrix exponential are real-analytic on the cone of positive definite matrices.

		Since both $\rho$ and $\sigma$ are full rank, for every $\beta \in (0,1)$ the operator
		\begin{align}
			X(\beta) := \sigma^{1/2}\rho^{(1-\beta)/\beta}\sigma^{1/2}
		\end{align}
		is positive definite on $\cH$. Moreover,
		\begin{align}
			Q_{\Sand,\beta}(\sigma\|\rho) = \tr\big[X(\beta)^{\beta}\big],
		\end{align}
		because $\rho^{\frac{1-\beta}{2\beta}}\sigma\rho^{\frac{1-\beta}{2\beta}}$ and $\sigma^{1/2}\rho^{\frac{1-\beta}{\beta}}\sigma^{1/2}$ have the same non-zero eigenvalues.

		Now the scalar map $\beta \mapsto \frac{1-\beta}{\beta}$ is real-analytic on $(0,1)$, and since $\rho>0$ we may write
		\begin{align}
			\rho^{(1-\beta)/\beta} = \exp\!\left(\frac{1-\beta}{\beta}\log \rho\right),
		\end{align}
		where $\log \rho$ is the principal matrix logarithm of $\rho$. Since the matrix exponential is real-analytic, it follows by composition that $\beta \mapsto \rho^{(1-\beta)/\beta}$ is a real-analytic matrix-valued map on $(0,1)$. Multiplication by the fixed matrix $\sigma^{1/2}$ preserves real-analyticity, and therefore $\beta \mapsto X(\beta)$ is real-analytic as well.

		Next, for positive definite matrices the map
		\begin{align}
			(\beta,A) \longmapsto A^{\beta} = \exp\big(\beta \log A\big)
		\end{align}
		is real-analytic on $(0,1)$ times the cone of positive definite matrices, again because both the principal matrix logarithm and the matrix exponential are real-analytic there. Applying this to $A=X(\beta)$ shows that $\beta \mapsto X(\beta)^{\beta}$ is real-analytic. Since the trace is a linear map, it preserves real-analyticity, and hence so is
		\begin{align}
			\beta \longmapsto Q_{\Sand,\beta}(\sigma\|\rho) = \tr\big[X(\beta)^{\beta}\big].
		\end{align}
		Finally, $Q_{\Sand,\beta}(\sigma\|\rho) > 0$ for all $\beta \in (0,1)$, and the scalar logarithm is real-analytic on $(0,\infty)$. Therefore
		\begin{align}
			g(\beta)=\log Q_{\Sand,\beta}(\sigma\|\rho)
		\end{align}
		is real-analytic on $(0,1)$ as a composition of real-analytic maps. In particular $g \in C^1(0,1)$.

		\bigskip
		\textit{3.2) Verification for strict convexity.}  By~\cite[Lemma 3.1]{hayashi2017quantum}, $\beta \mapsto g(\beta)$ is convex on $(0,1)$. To upgrade convexity to strict convexity, it suffices to rule out that $g$ is affine, since a real-analytic convex function on an interval is either strictly convex or affine. We include the short argument for this fact for completeness.

		Suppose that $g$ were not strictly convex. There would exist $0<\beta_{0}<\beta_{1}<\beta_{2}<1$ such that
		\begin{align}
			g(\beta_{1})= \frac{\beta_{2}-\beta_{1}}{\beta_{2}-\beta_{0}}g(\beta_{0}) + \frac{\beta_{1}-\beta_{0}}{\beta_{2}-\beta_{0}}g(\beta_{2}).
		\end{align}
		Let $\ell$ be the affine function joining the two endpoint values,
		\begin{align}
			\ell(\beta):= \frac{\beta_{2}-\beta}{\beta_{2}-\beta_{0}}g(\beta_{0}) + \frac{\beta-\beta_{0}}{\beta_{2}-\beta_{0}}g(\beta_{2}).
		\end{align}
		By convexity, $g(\beta)\le \ell(\beta)$ for every $\beta \in [\beta_{0},\beta_{2}]$, and by assumption $g(\beta_{1})=\ell(\beta_{1})$. We claim that in fact $g(\beta)=\ell(\beta)$ for all $\beta \in [\beta_{0},\beta_{2}]$.

		Indeed, fix $\beta \in [\beta_{0},\beta_{1}]$. Then
		\begin{align}
			\beta_{1}= \lambda \beta + (1-\lambda)\beta_{2},
			\qquad
			\lambda := \frac{\beta_{2}-\beta_{1}}{\beta_{2}-\beta} \in (0,1].
		\end{align}
		Convexity gives
		\begin{align}
			g(\beta_{1}) \le \lambda g(\beta) + (1-\lambda)g(\beta_{2}).
		\end{align}
		On the other hand, since $g(\beta_{1})=\ell(\beta_{1})$, $\ell$ is affine, and $\ell(\beta_{2})=g(\beta_{2})$, we have
		\begin{align}
			\lambda \ell(\beta) + (1-\lambda)g(\beta_{2})
			&= \lambda \ell(\beta) + (1-\lambda)\ell(\beta_{2}) \\
			&= \ell\bigl(\lambda \beta + (1-\lambda)\beta_{2}\bigr) \\
			&= \ell(\beta_{1}) \\
			&= g(\beta_{1}) \\
			&\le \lambda g(\beta) + (1-\lambda)g(\beta_{2}).
		\end{align}
		Subtracting $(1-\lambda)g(\beta_{2})$ from both sides and dividing by $\lambda>0$, we obtain
		\begin{align}
			\ell(\beta) \le g(\beta).
		\end{align}
		Since already $g(\beta)\le \ell(\beta)$, we get $g(\beta)=\ell(\beta)$ on $[\beta_{0},\beta_{1}]$.  Hence $g$ is affine on this interval.

		Now $h:=g-\ell$ is real-analytic on $(0,1)$ and vanishes on the nonempty open interval $(\beta_{0},\beta_{1})$. By the identity theorem for real-analytic functions, $h$ must vanish identically on $(0,1)$; see, e.g.,~\cite[Corollary 1.2.5]{krantz_parks_analytic}. Therefore $g$ is affine on all of $(0,1)$. We have shown that a real-analytic convex function on $(0,1)$ is either strictly convex or affine.

		It remains to rule out the affine case. Assume therefore that $g$ is affine on $(0,1)$. Taking the limit $\beta \to 1^-$ and using $Q_{\Sand,1}(\sigma\|\rho)=\tr[\sigma]=1$, we obtain a constant $c \in \RR$ such that
		\begin{align}
			g(\beta)=c(\beta-1), \qquad \beta \in (0,1).
		\end{align}
		Equivalently,
		\begin{align}
			D_{\Sand,\beta}(\sigma\|\rho)=c, \qquad \beta \in (0,1).
		\end{align}
		Since $\sigma>0$, we have $\Pi_{\sigma}=I$, and 
		\begin{align}
			\lim_{\beta\to 0^+} D_{\Sand,\beta}(\sigma\|\rho) \leq \lim_{\beta\to 0^+} D_{\Petz,\beta}(\sigma\|\rho) = -\log \tr[\rho]=0,
		\end{align}
		so $c=0$. On the other hand, 
		\begin{align}
			\lim_{\beta\to 1^-} D_{\Sand,\beta}(\sigma\|\rho) = D(\sigma\|\rho).
		\end{align}
		Because $0< r < D_{\Rev}(\rho\|\sigma)$, we have $\rho \neq \sigma$, and therefore $D(\sigma\|\rho)>0$. This contradicts $c=0$. Hence $g$ is not affine, and therefore $g$ is strictly convex on $(0,1)$.

		Having verified all required assumptions of Theorem~\ref{thm: matching bounds} and identified the regularized divergence as $D_{\RSand,\alpha}^{\infty}(\sA\|\sB) = D_{\RSand,\alpha}(\rho\|\sigma)$, the claim~\eqref{eq: RSand operational meaning} follows.
	\end{proof}

	\begin{boxremark}[Necessity of $\sigma>0$ for strict convexity.]
		The assumption $\sigma > 0$ is essential for the strict convexity of $\beta \mapsto \log Q_{\Sand,\beta}(\sigma\|\rho)$. Indeed, let
		\begin{align}
			\rho = \begin{pmatrix} p & 0 \\ 0 & 1-p \end{pmatrix},
			\qquad
			\sigma = \begin{pmatrix} 1 & 0 \\ 0 & 0 \end{pmatrix} = |0\rangle\langle 0|,
			\qquad 0<p<1.
		\end{align}
		Then $\rho > 0$ but $\sigma$ is not strictly positive. For every $\beta \in (0,1)$,
		\begin{align}
			\rho^{\frac{1-\beta}{2\beta}}\sigma\rho^{\frac{1-\beta}{2\beta}}
			= p^{\frac{1-\beta}{\beta}} |0\rangle\langle 0|,
		\end{align}
		and therefore
		\begin{align}
			Q_{\Sand,\beta}(\sigma\|\rho)
			= \tr\!\left[\left(\rho^{\frac{1-\beta}{2\beta}}\sigma\rho^{\frac{1-\beta}{2\beta}}\right)^\beta\right]
			= p^{1-\beta}.
		\end{align}
		Hence
		\begin{align}
			\log Q_{\Sand,\beta}(\sigma\|\rho) = (1-\beta)\log p,
		\end{align}
		which is affine in $\beta$, not strictly convex. 
	\end{boxremark}

	The following corollary endows the reverse quantum relative entropy $D_{\Rev}(\rho\|\sigma)$ with the corresponding operational meaning in the Stein regime. Recall that the optimal Type-II error $\beta_{n,\ve}(\sA_n\|\sB_n)$ is defined in~\eqref{eq: optimal Type-II error sets 2}.

	\begin{boxcorollary}[Operational meaning of $D_{\Rev}$.]	
		\label{coro: RSand operational meaning Stein}
		Let $\rho, \sigma \in \density(\cH)$ with $\rho > 0$ and $\sigma > 0$, and let $\sA_n$, $\sB_n$ be as defined in~\eqref{eq: orbit alt set}.
		Then for any $\ve \in (0,1)$,
		\begin{align}\label{eq: RSand operational meaning Stein}
			\lim_{n\to\infty}-\frac{1}{n}\log \beta_{n,\ve}(\sA_n\|\sB_n)
			= D_{\Rev}(\rho\|\sigma).
		\end{align}
	\end{boxcorollary}
	\begin{proof}
		We prove the two directions separately.

		\medskip
		\textit{1) Achievability:} Fix any $0 < \delta < D_{\Rev}(\rho\|\sigma)$ and set $r := D_{\Rev}(\rho\|\sigma) - \delta$. So we have $0 < r < D_{\Rev}(\rho\|\sigma)$. 
		Since $ \sup_{\alpha \in (0,1)} D_{\RSand,\alpha}(\rho\|\sigma) = D_{\Rev}(\rho\|\sigma) > r$, there exists $\alpha' \in (0,1)$ such that $D_{\RSand,\alpha'}(\rho\|\sigma) > r$. Since $\frac{\alpha-1}{\alpha} < 0$ on $(0,1)$, the right hand side of~\eqref{eq: RSand operational meaning} is strictly positive. Theorem~\ref{thm: RSand operational meaning} therefore implies that there exists a sequence of tests $(M_n)_{n\in\mathbb{N}}$ such that
		\begin{align}
			\alpha(\sA_n, M_n) \le \ve, \qquad \beta(\sB_n, M_n) \le 2^{-nr},
		\end{align}
		for sufficiently large $n$.
		Hence $\beta_{n,\ve}(\sA_n\|\sB_n) \le 2^{-nr}$ for $n$ large, which yields
		\begin{align}
			\liminf_{n\to\infty}-\frac{1}{n}\log \beta_{n,\ve}(\sA_n\|\sB_n) \ge r = D_{\Rev}(\rho\|\sigma) - \delta.
		\end{align}
		Letting $\delta \to 0^+$ gives the achievability bound.

		\medskip
		\textit{2) Converse:} Since $\sB_n$ is closed under group actions $g^{\ox n}$, we have 
		\begin{align}
			\beta(\sB_n, M_n) = \sup_{\sigma_n \in \sB_n} \tr[M_n \sigma_n] \geq \sup_{\sigma_n \in \sB_n} \tr[M_n \cT^n_{\group}(\sigma_n)] = \beta(\cT^n_\group(\sB_n), M_n),
		\end{align}
		where the inequality follows as the average is no greater than the maximum. This implies that
		\begin{align}\label{eq: stein operational meaning tmp1}
			\beta_{n,\ve}(\sA_n\|\sB_n) \geq \beta_{n,\ve}(\sA_n\|\cT^n_\group(\sB_n)).
		\end{align}

		The converse part can then be shown by standard arguments applying to $\sA_n$ and $\cT^n_\group(\sB_n)$. 
		Let $\alpha > 1$. For any $\delta>0$, let $\rho_n \in \sA_n, \sigma_n\in\cT_{\group}^n(\sB_n)$ such that 
		\begin{align} 
			D_{\Sand,\alpha}(\rho_n\|\sigma_n) \leq D_{\Sand,\alpha}(\sA_n\|\cT_{\group}^n(\sB_n)) + \delta.
		\end{align} 
		By standard arguments, e.g.~\cite[Lemma 5]{cooney2016strong}, we have for any $0\leq M_n \leq I$, that
		\begin{align}
			\frac{1}{n} \log \left(1- \tr[(I-M_n)\rho_n]\right) \leq \frac{\alpha-1}{\alpha} \left(\frac{1}{n}D_{\Sand,\alpha}(\rho_n\|\sigma_n) + \frac{1}{n} \log \tr[M_n \sigma_n]\right).
		\end{align}
		Since $\tr[(I-M_n)\rho_n] \leq \alpha(\sA_n, M_n)$ and $\tr[M_n \sigma_n] \leq \beta(\cT_{\group}^n(\sB_n), M_n)$ by definitions, 
		\begin{align}
			\frac{1}{n} \log \left(1- \alpha(\sA_n, M_n)\right) \leq \frac{\alpha-1}{\alpha} \left(\frac{1}{n}D_{\Sand,\alpha}(\sA_n\|\cT_{\group}^n(\sB_n)) + \delta + \frac{1}{n} \log \beta(\cT_{\group}^n(\sB_n), M_n)\right).
		\end{align}
		Considering tests $0\leq M_n\leq I$ with $\alpha(\sA_n, M_n) \le \ve$, we get
		\begin{align}
			\frac{1}{n} \log (1-\ve) \le \frac{\alpha-1}{\alpha} \left(\frac{1}{n}D_{\Sand,\alpha}(\sA_n\|\cT_{\group}^n(\sB_n)) + \delta + \frac{1}{n} \log \beta_{n,\ve}(\sA_n\|\cT_{\group}^n(\sB_n))\right).
		\end{align}
		Rearranging the terms and utilizing $D_{\Petz,\alpha} \geq D_{\Sand,\alpha}$, we have 
		\begin{align}
			-\frac{1}{n} \log \beta_{n,\ve}(\sA_n\|\cT_{\group}^n(\sB_n)) \le \frac{\alpha}{\alpha-1} \frac{1}{n} \log \frac{1}{1-\ve} + \frac{1}{n}D_{\Petz,\alpha}(\sA_n\|\cT_{\group}^n(\sB_n)) + \delta.
		\end{align}
		Substituting~\eqref{eq: stein operational meaning tmp1} and taking the limit $n \to \infty$ and then $\delta \to 0$ gives
		\begin{align}
			\limsup_{n\to\infty}-\frac{1}{n} \log \beta_{n,\ve}(\sA_n\|\sB_n) \le D_{\Petz,\alpha}^{\infty}(\sA\|\cT_{\group}(\sB)).
		\end{align}
		As calculated in~\eqref{eq: operational meaning proof tmp1} and~\eqref{eq: operational meaning proof tmp2}, we have
		\begin{align}
			D_{\Petz,\alpha}^{\infty}(\sA\|\cT_{\group}(\sB)) = D_{\RSand,\alpha}^{\infty}(\sA\|\sB) = D_{\RSand,\alpha}(\rho\|\sigma),
		\end{align}
		Optimizing over $\alpha > 1$ gives
		\begin{align}
			\limsup_{n\to\infty}-\frac{1}{n} \log \beta_{n,\ve}(\sA_n\|\sB_n) \le \inf_{\alpha > 1} D_{\RSand,\alpha}(\rho\|\sigma) = D_{\Rev}(\rho\|\sigma).
		\end{align}
	This completes the converse part and hence the proof of Corollary~\ref{coro: RSand operational meaning Stein}.
	\end{proof}

	\begin{boxremark}[Restricted data-processing inequality.]
		The reverse sandwiched \Renyi divergence $D_{\RSand,\alpha}$ with $\alpha \in (1/2,1)$, and the reverse quantum relative entropy $D_{\Rev}$, are known to violate the standard data-processing inequality~\cite{audenaert2015alpha}. Corollary~\ref{coro: RSand operational meaning Stein} nevertheless yields a natural \emph{restricted} form of data processing under any channel that respects the group $\group$. Specifically, let $\Phi : \cL(\cH) \to \cL(\cH)$ be a $\group$-covariant channel, in the sense that for every $g \in \group$ there exists $\hat{g} \in \group$ with
		\begin{align}
			\Phi(g X g^\dagger) = \hat{g}\, \Phi(X)\, \hat{g}^\dagger \qquad \forall\, X \in \cL(\cH).
		\end{align}
		Then the data-processing inequality for $D_{\Rev}$ holds for the pair $(\rho,\sigma)$ under $\Phi$:
		\begin{align}\label{eq: restricted DPI}
			D_{\Rev}(\Phi(\rho)\|\Phi(\sigma)) \le D_{\Rev}(\rho\|\sigma).
		\end{align}
			Indeed, let $(\sA_n,\sB_n)$ be defined as in~\eqref{eq: orbit alt set} for $(\rho,\sigma,\group)$, and let $(\hat{\sA}_n,\hat{\sB}_n)$ be defined analogously for $(\Phi(\rho),\Phi(\sigma),\group)$. The covariance condition implies $\Phi^{\otimes n}(\sA_n) = \hat{\sA}_n$ and $\Phi^{\otimes n}(\sB_n) = \hat{\sB}_n$, so the data-processing inequality of the worst-case hypothesis-testing relative entropy yields
		\begin{align}
			\beta_{n,\ve}(\sA_n\|\sB_n) \le \beta_{n,\ve}(\Phi^{\otimes n}(\sA_n)\|\Phi^{\otimes n}(\sB_n)) = \beta_{n,\ve}(\hat{\sA}_n\|\hat{\sB}_n).
		\end{align}
		Taking $-\frac{1}{n}\log$, sending $n\to\infty$, and applying Corollary~\ref{coro: RSand operational meaning Stein} to both sides establishes~\eqref{eq: restricted DPI}.
	\end{boxremark}

	\subsubsection*{Free time evolution under a Hamiltonian}

	The full diagonal-unitary group in~\eqref{eq: group def corollary} represents the most general dephasing noise in the energy eigenbasis. A physically important specialization is the one-parameter family generated by the Hamiltonian itself. We now show that, under a rational independence condition, this restriction does not change the asymptotic exponent. Consider a quantum system governed by a Hamiltonian $H$ with energy eigenbasis $\{|E_j\rangle\}$ and eigenvalues $\{E_j\}$. Its thermal equilibrium state at inverse temperature $\beta$ is
	\begin{align}
		\rho = \frac{e^{-\beta H}}{\tr[e^{-\beta H}]} = \sum_{j=1}^{m}\lambda_j|E_j\rangle\langle E_j|,
		\qquad \lambda_j = \frac{e^{-\beta E_j}}{\tr[e^{-\beta H}]}.
	\end{align}
	To connect this with physical time evolution, note that
	\begin{align}
		\log\lambda_j = -\beta E_j - \log Z_\beta, \qquad Z_\beta := \tr[e^{-\beta H}],
	\end{align}
	so for any $t \in \RR$,
	\begin{align}
		e^{it\log\lambda_j} = e^{-it\log Z_\beta}\,e^{-i(\beta t)E_j}.
	\end{align}
	Hence
	\begin{align}
		\sum_j e^{it\log\lambda_j}|E_j\rangle\langle E_j|
		= e^{-it\log Z_\beta}\,e^{-iH(\beta t)}.
	\end{align}
	The prefactor $e^{-it\log Z_\beta}$ is a global phase and therefore cancels under conjugation, while $\beta t$ is only a reparametrization of time. Thus conjugation by the physical evolution $e^{-iHt}$ is equivalent to conjugation by the diagonal one-parameter family
	\begin{align}\label{eq: time evolution group}
		\group' := \left\{g_t := \sum_{j=1}^{m} e^{it\log\lambda_j}|E_j\rangle\langle E_j|  \,:\, t \geq 0 \right\}.
	\end{align}
	We then define the null and alternative hypotheses by
	\begin{align}
		\sA_n' &:= \{\rho^{\otimes n}\}, \quad \text{and}\quad
		\sB_n' := \left\{g_t^{\otimes n}\sigma^{\otimes n}(g_t^\dagger)^{\otimes n} : g_t \in \group'\right\}. \label{eq: time orbit alt set}
	\end{align}

	Assume that the eigenvalues of $\rho$ are non-degenerate and that $\{\log\lambda_j\}_{j=1}^m$ is rationally independent. Then $\group'$ is dense in the full diagonal-unitary group $\group$, so each $\group'$-orbit is dense in the corresponding $\group$-orbit. As shown below, this density already implies equality of the optimal worst-case Type-I errors, and hence of the asymptotic error exponents. Although stated as an assumption, rational independence is a generic condition: the set of vectors $\RR^m$ that fail to be rationally independent is a countable union of hyperplanes, and therefore has Lebesgue measure zero in $\RR^m$.

	\begin{boxcorollary}[Alternative operational meanings of $D_{\RSand,\alpha}$ and $D_{\Rev}$.]
		\label{coro: RSand time evolution}
		Let $\rho, \sigma \in \density(\cH)$ with $\rho > 0$, $\sigma > 0$.
		Assume further that $\{\log\lambda_j\}_{j=1}^{m}$ are rationally independent, i.e., $\sum_{j=1}^{m} n_j\log\lambda_j = 0$ with $n_j \in \ZZ$ implies $n_j = 0$ for all $j$.
		Let $\sA_n'$ and $\sB_n'$ be as defined in~\eqref{eq: time orbit alt set}.
		Then for any $0 < r < D_{\Rev}(\rho\|\sigma)$, it holds that
		\begin{align}\label{eq: RSand time evolution}
			\lim_{n\to\infty}-\frac{1}{n}\log \alpha_{n,r}(\sA_n'\|\sB_n')
			= \sup_{\alpha \in (0,1)} \frac{\alpha-1}{\alpha}\big(r - D_{\RSand,\alpha}(\rho\|\sigma)\big).
		\end{align}
		For any $\ve \in (0,1)$, it holds that
		\begin{align}\label{eq: RSand time evolution Stein}
			\lim_{n\to\infty}-\frac{1}{n}\log \beta_{n,\ve}(\sA_n'\|\sB_n')
			= D_{\Rev}(\rho\|\sigma).
		\end{align}
	\end{boxcorollary}

	\begin{proof}
		We compare the testing problem defined by the time-evolution family $(\sA_n',\sB_n')$ with the one defined by the full diagonal-unitary group $(\sA_n,\sB_n)$ in Theorem~\ref{thm: RSand operational meaning}. Since $\sA_n' = \sA_n = \{\rho^{\otimes n}\}$, it suffices to show that
		\begin{align}
			\alpha_{n,r}(\sA_n'\|\sB_n') = \alpha_{n,r}(\sA_n\|\sB_n)
		\end{align}
		for every $n \in \NN$ and every $r > 0$.

		The key point is that the one-parameter family $\group'$ is dense in the full diagonal-unitary group $\group$. To see this, identify $\group$ with the torus $\mathbb{T}^{m} := \{(z_1,\dots,z_m) : |z_j|=1\}$ via
		\begin{align}
			(z_1,\dots,z_m) \longleftrightarrow \sum_{j=1}^{m} z_j |E_j\rangle\langle E_j|.
		\end{align}
		Under this identification, the family $\group'$ is exactly the image of the continuous homomorphism
		\begin{align}
			\Phi\colon \RR &\to \mathbb{T}^{m}, \\
			t &\mapsto \bigl(e^{it\log\lambda_1},\dots,e^{it\log\lambda_{m}}\bigr).
		\end{align}
		Kronecker's density theorem on the torus (e.g.~\cite[Theorem 443]{hardy1965introduction} or~\cite{bailleul2022explicit}) states that if $a_1,\dots,a_m$ are rationally independent, then the set
		\begin{align}
			\left\{(e^{ita_1},\dots,e^{ita_m}) : t \geq 0\right\}
		\end{align}
		is dense in $\mathbb{T}^m$. Applying this with $a_j = \log\lambda_j$, we conclude that for every target phase vector $(e^{i\theta_1},\dots,e^{i\theta_m}) \in \mathbb{T}^{m}$ and every $\varepsilon > 0$, there exists some $t \geq 0$ such that each coordinate $e^{it\log\lambda_j}$ is within $\varepsilon$ of $e^{i\theta_j}$. Equivalently, every element of $\group$ can be approximated arbitrarily well by some element $g_t \in \group'$. Hence $\group'$ is dense in $\group$.

		We now pass this density from the unitaries to the alternative states. Fix $n \in \NN$ and $g \in \group$. Choose a sequence $t_k$ such that $g_{t_k} \to g$. Since matrix multiplication is continuous in finite dimensions, we obtain
		\begin{align}
			g_{t_k}^{\otimes n}\sigma^{\otimes n}(g_{t_k}^\dagger)^{\otimes n}
			\longrightarrow
			g^{\otimes n}\sigma^{\otimes n}(g^\dagger)^{\otimes n}.
		\end{align}
		Therefore $\sB_n'$ is dense in $\sB_n$ for every $n$.

		Now fix any test $0 \le M_n \le I$. The map
		\begin{align}
			\omega \mapsto \tr[\omega M_n]
		\end{align}
		is continuous on the state space, so taking the supremum over the dense subset $\sB_n'$ or over its closure $\sB_n$ gives the same worst-case Type-II error:
		\begin{align}
			\sup_{\omega \in \sB_n'} \tr[\omega M_n]
			= \sup_{\omega \in \sB_n} \tr[\omega M_n].
		\end{align}
		Hence the feasible tests in the optimization problems defining $\alpha_{n,r}(\sA_n'\|\sB_n')$ and $\alpha_{n,r}(\sA_n\|\sB_n)$ coincide, and therefore these two quantities are equal for every $n$ and $r$.
		The two testing problems thus have the same asymptotic error exponent, and the claim follows from Theorem~\ref{thm: RSand operational meaning}. The same argument applies to the Stein regime as well, Corollary~\ref{coro: RSand operational meaning Stein} then yields~\eqref{eq: RSand time evolution Stein}.
	\end{proof}

	\section{Discussion}
	\label{sec: discussion}

	We established quantum Hoeffding bounds for composite hypothesis testing under a new set of structural assumptions that replaces tensor-product stability with permutation symmetry of the state sequences, together with regularity properties of the associated regularized \Renyi divergences. As a notable application, these bounds yield the direct operational interpretation of the reverse sandwiched \Renyi divergence and the reverse quantum relative entropy in composite quantum hypothesis testing. Along the way, we developed several technical tools that may be of independent interest, including the construction of a universal test via pinching in Lemma~\ref{lem: universal test pinching}, the construction of a universally tight state pair in Lemma~\ref{lem: averaged state pair}, and the equivalence between twirling and pinching established in Lemma~\ref{lem: twirling equals pinching}.

	Several directions remain open. First, our framework still relies on the assumption of convexity and therefore does not yet cover the fully general composite i.i.d.\ setting; removing or weakening this requirement, so as to pin down the exact error exponent in this regime, is an interesting open problem. Second, as exemplified in this work, the composite setting offers a natural arena for uncovering operational interpretations of quantum divergences; it would therefore be valuable to explore the possibility of such interpretations for divergences whose operational meaning is currently unknown, such as the Belavkin--Staszewski relative entropy~\cite{belavkin1982c,fang2021geometric}. Third, since operational interpretations are by no means unique, it would also be interesting to identify further operational roles of the reverse sandwiched \Renyi divergence in other contexts. Finally, the error-exponent bounds developed here may find applications in broader information-theoretic tasks, which we leave for future investigation.

	\paragraph{Acknowledgements.} 
	M.H. is supported in part by the General R\&D Projects of 1+1+1 CUHK-CUHK(SZ)-GDST Joint Collaboration Fund (Grant No. GRDP2025-022), the Guangdong Provincial Quantum Science Strategic Initiative (Grant No. GDZX2505003),
	the Shenzhen International Quantum Academy (Grant No. SIQA2025KFKT07). 
	K.F. is supported in part by the National Natural Science Foundation of China (Grant No. 12404569 and 92470113), the Shenzhen Science and Technology Program (Grant No. QNXMB20250701091826036 and JCYJ20240813113519025), the Shenzhen International Quantum Academy (Grant No. SIQA2025KFKT03), the Shenzhen Fundamental Research Program (Grant No. JCYJ20241202124023031), the General R\&D Projects of 1+1+1 CUHK-CUHK(SZ)-GDST Joint Collaboration Fund (Grant No. GRDP2025-022), and the University Development Fund (Grant No. UDF01003565). 

	\paragraph{Data availability statement.} No datasets were generated or analysed during the current study. There is no conflict of interest in this work.

	\bibliographystyle{alpha_abbrv}
	\bibliography{Bib}

\newcommand{\etalchar}[1]{$^{#1}$}
\begin{thebibliography}{ACMT{\etalchar{+}}07}

\bibitem[ACMT{\etalchar{+}}07]{audenaert2007discriminating}
K.~M. Audenaert, J.~Calsamiglia, R.~Munoz-Tapia, E.~Bagan, L.~Masanes, A.~Acin,
  and F.~Verstraete.
\newblock {Discriminating states: The quantum Chernoff bound}.
\newblock {\em Physical Review Letters}, 98(16):160501, 2007.

\bibitem[AD15]{audenaert2015alpha}
K.~M. Audenaert and N.~Datta.
\newblock $\alpha$-z-{R}{\'e}nyi relative entropies.
\newblock {\em Journal of Mathematical Physics}, 56(2), 2015.

\bibitem[ANSV08]{audenaert2008asymptotic}
K.~M. Audenaert, M.~Nussbaum, A.~Szko{\l}a, and F.~Verstraete.
\newblock Asymptotic error rates in quantum hypothesis testing.
\newblock {\em Communications in Mathematical Physics}, 279(1):251--283, 2008.

\bibitem[Bai22]{bailleul2022explicit}
A.~Bailleul.
\newblock Explicit {Kronecker--Weyl} theorems and applications to prime number
  races.
\newblock {\em Research in Number Theory}, 8(3):43, 2022.

\bibitem[BBH21]{berta2021composite}
M.~Berta, F.~G. Brandão, and C.~Hirche.
\newblock On composite quantum hypothesis testing.
\newblock {\em Communications in Mathematical Physics}, 385(1):55--77, 2021.

\bibitem[BHO{\etalchar{+}}13]{brandao2013resource}
F.~G. Brandao, M.~Horodecki, J.~Oppenheim, J.~M. Renes, and R.~W. Spekkens.
\newblock Resource theory of quantum states out of thermal equilibrium.
\newblock {\em Physical Review Letters}, 111(25):250404, 2013.

\bibitem[BP10]{brandao2010generalization}
F.~G. Brandao and M.~B. Plenio.
\newblock A generalization of quantum {Stein}'s lemma.
\newblock {\em Communications in Mathematical Physics}, 295(3):791--828, 2010.

\bibitem[BS82]{belavkin1982c}
V.~P. Belavkin and P.~Staszewski.
\newblock C*-algebraic generalization of relative entropy and entropy.
\newblock In {\em Annales de l'institut Henri Poincar{\'e}. Section A, Physique
  Th{\'e}orique}, volume~37, pages 51--58, 1982.

\bibitem[BV04]{boyd2004convex}
S.~Boyd and L.~Vandenberghe.
\newblock {\em Convex optimization}.
\newblock Cambridge University Press, 2004.

\bibitem[CMW16]{cooney2016strong}
T.~Cooney, M.~Mosonyi, and M.~M. Wilde.
\newblock Strong converse exponents for a quantum channel discrimination
  problem and quantum-feedback-assisted communication.
\newblock {\em Communications in Mathematical Physics}, 344(3):797--829, 2016.

\bibitem[DZ10]{Dembo2010}
A.~Dembo and O.~Zeitouni.
\newblock {\em Large Deviations Techniques and Applications}.
\newblock Springer Berlin Heidelberg, 2010.

\bibitem[FF21]{fang2021geometric}
K.~Fang and H.~Fawzi.
\newblock {Geometric R{\'e}nyi divergence and its applications in quantum
  channel capacities}.
\newblock {\em Communications in Mathematical Physics}, 384(3):1615--1677,
  2021.

\bibitem[FFF24]{fang2024generalized}
K.~Fang, H.~Fawzi, and O.~Fawzi.
\newblock Generalized quantum asymptotic equipartition.
\newblock {\em arXiv preprint arXiv:2411.04035}, 2024.

\bibitem[FGP{\etalchar{+}}22]{fawzi2022lifting}
H.~Fawzi, J.~Gouveia, P.~A. Parrilo, J.~Saunderson, and R.~R. Thomas.
\newblock Lifting for simplicity: Concise descriptions of convex sets.
\newblock {\em SIAM Review}, 64(4):866--918, 2022.

\bibitem[FGW25]{fang2025towards}
K.~Fang, G.~Gour, and X.~Wang.
\newblock Towards the ultimate limits of quantum channel discrimination and
  quantum communication.
\newblock {\em Science China Information Sciences}, 68(8):180509, 2025.

\bibitem[FH26]{fang2025error}
K.~Fang and M.~Hayashi.
\newblock Error exponents of quantum state discrimination with composite
  correlated hypotheses.
\newblock {\em IEEE Transactions on Information Theory}, 2026.
\newblock 10.1109/TIT.2026.3684314.

\bibitem[Hay07]{hayashi2007error}
M.~Hayashi.
\newblock Error exponent in asymmetric quantum hypothesis testing and its
  application to classical-quantum channel coding.
\newblock {\em Physical Review A—Atomic, Molecular, and Optical Physics},
  76(6):062301, 2007.

\bibitem[Hay17]{hayashi2017quantum}
M.~Hayashi.
\newblock {\em Quantum Information Theory}.
\newblock Graduate Texts in Physics. Springer, Berlin, Heidelberg, 2 edition,
  2017.

\bibitem[Hay25a]{Hayashi2025another}
M.~Hayashi.
\newblock Another quantum version of {Sanov} theorem.
\newblock In {\em Annales Henri Poincar{\'e}}, pages 1--22. Springer, 2025.

\bibitem[Hay25b]{hayashi2025general}
M.~Hayashi.
\newblock General detectability measure.
\newblock {\em Communications in Mathematical Physics}, 406(12):289, 2025.

\bibitem[HI25]{Hayashi2025entanglement}
M.~Hayashi and Y.~Ito.
\newblock Entanglement measures for detectability.
\newblock {\em IEEE Transactions on Information Theory}, 71(6):4385–4405,
  April 2025.

\bibitem[HMO07]{hiai2007large}
F.~Hiai, M.~Mosonyi, and T.~Ogawa.
\newblock Large deviations and {C}hernoff bound for certain correlated states
  on a spin chain.
\newblock {\em Journal of Mathematical Physics}, 48(12), 2007.

\bibitem[HMO08]{hiai2008error}
F.~Hiai, M.~Mosonyi, and T.~Ogawa.
\newblock Error exponents in hypothesis testing for correlated states on a spin
  chain.
\newblock {\em Journal of Mathematical Physics}, 49(3), 2008.

\bibitem[HO13]{horodecki2013fundamental}
M.~Horodecki and J.~Oppenheim.
\newblock Fundamental limitations for quantum and nanoscale thermodynamics.
\newblock {\em Nature Communications}, 4(1):2059, 2013.

\bibitem[HP91]{hiai1991proper}
F.~Hiai and D.~Petz.
\newblock The proper formula for relative entropy and its asymptotics in
  quantum probability.
\newblock {\em Communications in Mathematical Physics}, 143:99--114, 1991.

\bibitem[HT16]{hayashi2016correlation}
M.~Hayashi and M.~Tomamichel.
\newblock {Correlation detection and an operational interpretation of the
  {R{\'e}nyi} mutual information}.
\newblock {\em Journal of Mathematical Physics}, 57(10), 2016.

\bibitem[HW65]{hardy1965introduction}
G.~Hardy and E.~Wright.
\newblock {\em An Introduction to the Theory of Numbers. 4th Ed}.
\newblock Oxford science publications. Clarendon Press, 1965.

\bibitem[KL51]{kullback1951information}
S.~Kullback and R.~A. Leibler.
\newblock On information and sufficiency.
\newblock {\em The Annals of Mathematical Statistics}, 22(1):79--86, 1951.

\bibitem[KP92]{krantz_parks_analytic}
S.~G. Krantz and H.~R. Parks.
\newblock {\em A Primer of Real Analytic Functions}.
\newblock Birkh\"auser, 1992.

\bibitem[KW24]{khatri2024principlesquantumcommunicationtheory}
S.~Khatri and M.~M. Wilde.
\newblock Principles of quantum communication theory: A modern approach.
\newblock {\em arXiv preprint arXiv:2011.04672}, 2024.

\bibitem[Lam25a]{lami2024solutiongeneralisedquantumsteins}
L.~Lami.
\newblock {A solution of the generalized quantum Stein’s lemma}.
\newblock {\em IEEE Transactions on Information Theory}, 71(6):4454–4484,
  June 2025.

\bibitem[Lam25b]{lami2025doubly}
L.~Lami.
\newblock A doubly composite {Chernoff-Stein} lemma and its applications.
\newblock {\em arXiv preprint arXiv:2510.06342}, 2025.

\bibitem[LBCR{\etalchar{+}}24]{lipka2024quantum}
P.~Lipka-Bartosik, C.~T. Chubb, J.~M. Renes, M.~Tomamichel, and K.~Korzekwa.
\newblock Quantum dichotomies and coherent thermodynamics beyond first-order
  asymptotics.
\newblock {\em PRX Quantum}, 5:020335, May 2024.

\bibitem[MLDS{\etalchar{+}}13]{muller2013quantum}
M.~M{\"u}ller-Lennert, F.~Dupuis, O.~Szehr, S.~Fehr, and M.~Tomamichel.
\newblock On quantum {R}{\'e}nyi entropies: {A} new generalization and some
  properties.
\newblock {\em Journal of Mathematical Physics}, 54(12), 2013.

\bibitem[MO15a]{mosonyi2015quantum}
M.~Mosonyi and T.~Ogawa.
\newblock {Quantum hypothesis testing and the operational interpretation of the
  quantum R{\'e}nyi relative entropies}.
\newblock {\em Communications in Mathematical Physics}, 334:1617--1648, 2015.

\bibitem[MO15b]{mosonyi2015two}
M.~Mosonyi and T.~Ogawa.
\newblock Two approaches to obtain the strong converse exponent of quantum
  hypothesis testing for general sequences of quantum states.
\newblock {\em IEEE Transactions on Information Theory}, 61(12):6975--6994,
  2015.

\bibitem[MSW22]{Mosonyi_2022}
M.~Mosonyi, Z.~Szilagyi, and M.~Weiner.
\newblock On the error exponents of binary state discrimination with composite
  hypotheses.
\newblock {\em IEEE Transactions on Information Theory}, 68(2):1032–1067,
  February 2022.

\bibitem[Nag06]{nagaoka2006converse}
H.~Nagaoka.
\newblock The converse part of the theorem for quantum hoeffding bound.
\newblock {\em arXiv preprint quant-ph/0611289}, 2006.

\bibitem[NS09]{Nussbaum2009}
M.~Nussbaum and A.~Szko{\l}a.
\newblock {The Chernoff lower bound for symmetric quantum hypothesis testing}.
\newblock {\em The Annals of Statistics}, 37(2):1040--1057, 2009.

\bibitem[ON00]{Ogawa2000}
T.~Ogawa and H.~Nagaoka.
\newblock {Strong converse and {Stein's} lemma in quantum hypothesis testing}.
\newblock {\em IEEE Transactions on Information Theory}, 46(7):2428--2433, feb
  2000.

\bibitem[Pet86]{petz1986quasi}
D.~Petz.
\newblock Quasi-entropies for finite quantum systems.
\newblock {\em Reports on Mathematical Physics}, 23(1):57--65, 1986.

\bibitem[Roc70]{rockafellar}
R.~T. Rockafellar.
\newblock {\em Convex Analysis}.
\newblock Princeton University Press, 1970.

\bibitem[Sha48]{shannon1948mathematical}
C.~E. Shannon.
\newblock A mathematical theory of communication.
\newblock {\em The Bell System Technical Journal}, 27(3):379--423, 1948.

\bibitem[Ume62]{umegaki1962conditional}
H.~Umegaki.
\newblock Conditional expectation in an operator algebra, {IV} (entropy and
  information).
\newblock In {\em Kodai Mathematical Seminar Reports}, volume~14, pages 59--85.
  Department of Mathematics, Tokyo Institute of Technology, 1962.

\bibitem[Wat18]{watrous2018theory}
J.~Watrous.
\newblock {\em The theory of quantum information}.
\newblock Cambridge University Press, 2018.

\bibitem[WDH26]{warsi2025generalization}
N.~A. Warsi, A.~Dasgupta, and M.~Hayashi.
\newblock Generalization bounds for quantum learning via {R}\'{e}nyi
  divergences.
\newblock {\em IEEE Transactions on Information Theory}, 2026.
\newblock 10.1109/TIT.2026.3683678.

\bibitem[Wil11]{wilde2011classical}
M.~M. Wilde.
\newblock From classical to quantum {S}hannon theory.
\newblock {\em arXiv preprint arXiv:1106.1445}, 2011.

\bibitem[WT24]{watanabe2024black}
K.~Watanabe and R.~Takagi.
\newblock Black box work extraction and composite hypothesis testing.
\newblock {\em Physical Review Letters}, 133(25):250401, 2024.

\bibitem[WWY14]{wilde2014strong}
M.~M. Wilde, A.~Winter, and D.~Yang.
\newblock Strong converse for the classical capacity of entanglement-breaking
  and {H}adamard channels via a sandwiched {R}{\'e}nyi relative entropy.
\newblock {\em Communications in Mathematical Physics}, 331:593--622, 2014.

\bibitem[ZF26]{zhang2026quantum}
M.~Zhang and K.~Fang.
\newblock Quantum thermodynamics with uncertain equilibrium.
\newblock {\em arXiv preprint arXiv:2604.13524}, 2026.

\end{thebibliography}

\end{document}